\newcommand{\prob}{\mathbb P}     
\newcommand{\expect}{\mathbb E}   
\newcommand{\filts}{\mathbb F}    
\newcommand{\filt}{\mathcal F}    
\newcommand{\real}{\mathbb R}     
\newcommand{\ig}{\mathcal L}      
\newcommand{\diff}{\mathrm d}     
\newcommand{\p}{\partial}         
\newcommand{\1}{\mbox{1}\hspace{-0.25em}\mbox{l}}           
\newcommand{\esssup}[1]{\underset{#1}{\mathrm{esssup}}\;}   
\newcommand{\SSd}[1]{\mathbb S^2[#1,T]}                 
\newcommand{\HH}[2]{\mathbb H^2_{#2}[#1,T]}             
\newcommand{\KK}[1]{\mathbb K^2[#1,T]}                  
\newcommand{\control}[2]{\mathbb A_{#2}[#1,T]}          
\newcommand{\controlinf}[2]{\mathbb A_{#2}[#1,\infty)}  
\newtheorem{theorem}{Theorem}
\newtheorem{assumption}[theorem]{Hypothesis}
\newtheorem{definition}[theorem]{Definition}
\newtheorem{proposition}[theorem]{Proposition}
\newtheorem{lemma}[theorem]{Lemma}
\newtheorem{remark}[theorem]{Remark}
\crefname{theorem}{Theorem}{Theorems}
\crefname{assumption}{Hypothesis}{Hypotheses}
\crefname{definition}{Definition}{Definitions}
\crefname{proposition}{Proposition}{Propositions}
\crefname{lemma}{Lemma}{Lemmas}
\crefname{remark}{Remark}{Remarks}
\crefname{figure}{Figure}{Figures}
\crefname{equation}{}{}
\crefname{align}{}{}
\crefname{multline}{}{}
\newenvironment{proof}[1]{\bigskip \noindent \textit{#1}.}{\hfill $\Box$ \bigskip}
\newcommand{\email}[1]{\footnote{E-mail address: \href{mailto:#1}{#1}}}
\def\department#1{\def\@department{#1}}
\def\@maketitle{
\null
\vskip 2em
\begin{center}
\vspace{30mm}
{\LARGE \@title \par}
\vspace{5mm}
{\large \@author \par}
\vspace{5mm}
{\small \@department \par}
\vspace{5mm}
{\large \@date}
\end{center}
\par\vskip 1.5em
}
\title{Optimal Switching under Ambiguity and Its Applications in Finance}
\author{Yuki Shigeta\email{sy46744@gmail.com}}
\begin{document}
\maketitle

\begin{abstract}
In this paper, we study optimal switching problems under ambiguity.
To characterize the optimal switching under ambiguity in the finite horizon, we use multidimensional reflected backward stochastic differential equations (multidimensional RBSDEs) and show that a value function of the optimal switching under ambiguity coincides with a solutions to multidimensional RBSDEs with allowing negative switching costs.
Furthermore, we naturally extend the finite horizon problem to the infinite horizon problem.
In some applications, we show that ambiguity affects an optimal switching strategy with the different way to a usual switching problem without ambiguity.
\end{abstract}



\noindent \small{\textbf{Key words: }} Optimal Switching, Ambiguity Aversion, Reflected Backward Stochastic Differential Equation, Viscosity Solution.

\noindent \textbf{AMS subject classifications: } 60G40, 60H30.

\onehalfspacing

\section{Introduction}\label{sec:introduction}
Optimal switching problems are widely used to describe many situations in finance and economics.
For example, they are applied to natural resource extractions in \cite{brennan1985evaluating} and \cite{brekke1994optimal}, reversible investments in \cite{ly2007explicit}, and entry and exit decisions of firms in \cite{dixit1989afirms}.
In plain words, optimal switching problems are the problems that a decision maker chooses his or her actions from a discrete state space to maximize his or her profit (objective function).

In this paper, our aims are to propose optimal switching problems under ambiguity and to derive general properties of solutions to these problems.
A concept of ambiguity aversion is one of prominent issues in recent finance and economics.
Ambiguity aversion (also known as the Knightian uncertainty aversion or model uncertainty aversion) is the behavior that an economic agent prefers avoiding the event whose occurrence probability is unknown.
Ellsberg \cite{ellsberg1961risk} first provides illustrative examples of ambiguity aversion, 
and these examples are economically characterized in \cite{gilboa1989maxmin} and \cite{schmeidler1989subjective}.
After these works, many researchers study applications of ambiguity aversion such as continuous-time, consumption-investment problems in \cite{chen2002ambiguity} and \cite{liu2011dynamic}, and optimal stopping problems in \cite{riedel2009optimal} and \cite{cheng2013optimal}.

Using a concept of ambiguity aversion, one can describe the properties not captured by a usual trade-off between returns and risks.
Therefore, we can consider a more practical optimal switching problem.
In existing literature, the model in \cite{hamadene2010switching} can be applied to optimal switching problems under ambiguity.
Furthermore, in \cite{bayraktar2016robust}, it is shown that value functions in finite-horizon optimal switching problems under the Knightian uncertainty (ambiguity) are characterized as viscosity solutions to some system of partial differential equations.
The approach in \cite{bayraktar2016robust} allows a more general type of ambiguity than that in this paper, but non-negativity of switching costs is assumed.
In this paper, we allow negative switching costs even though we focus on a specific type of ambiguity,
and we also consider optimal switching problems under ambiguity in infinite horizon.
So, the results in this paper have different implications to \cite{bayraktar2016robust}.

To deal with optimal switching problems under ambiguity, we use frameworks of backward stochastic differential equations (hereafter BSDEs).
BSDEs are introduced in \cite{bismut1978controle} and a general theory of BSDEs is developed in \cite{pardoux1990adapted}.
Many researchers (e.g., \cite{el1997backward}, \cite{rouge2000pricing}, \cite{chen2002ambiguity} and \cite{cheng2013optimal}) apply the theory of BSDEs to various problems in finance and economics.
Recently, a theory of multidimensional reflected BSDEs (hereafter multidimensional RBSDEs) is developed in \cite{hamadene2010switching}, \cite{hu2010multi} and \cite{hamadene2013viscosity} to study the optimal switching problems.
This approach makes us naturally incorporate ambiguity aversion into the optimal switching problems.
Therefore, multidimensional RBSDEs have an important role in this study.

In this paper, our contributions are as follows.
\begin{enumerate}
 \item We characterize optimal switching problems under ambiguity in both of the finite horizon and infinite horizon using multidimensional RBSDEs.
 \item We show that value functions of the optimal switching problems under ambiguity are viscosity solutions to some system of partial differential equations.
 \item Unlike existing literature, we do not assume non-negativity of switching costs.
\end{enumerate}
We first define optimal switching problems under ambiguity and characterize them using the theory of multidimensional RBSDEs in \cite{hamadene2010switching}.
In \cite{hamadene2010switching}, it is assumed that switching costs are non negative, and this assumption has an important role in \cite{hamadene2010switching}.
However, there are optimal switching problems that definitely need negative switching costs (i.e., positive switching benefits) such as the buy low and sell high problem in \cite{zhang2008trading} and the pair-trading problem in \cite{ngo2016optimal}.
Therefore, we do not assume non-negativity of switching costs, and we need to modify the proof in \cite{hamadene2010switching} to allow negative switching costs.
In order to allow negative switching costs, we add a weak assumption for switching costs.
Since existing literature usually assumes non-negativity of switching costs (for example, \cite{hamadene2010switching}, \cite{hu2010multi} and \cite{hamadene2013viscosity}), our results are more general than those of the existing literature in the sense of allowing negative switching costs.
Furthermore, using the results in \cite{hamadene2013viscosity}, we show that value functions of optimal switching problems under ambiguity are viscosity solutions to some system of partial differential equations.

Moreover, we show that under some conditions, value functions in the finite horizon problem converges to value functions in the infinite horizon.
In \cite{el2010optimal}, the infinite horizon problem is investigated with using multidimensional RBSDEs under a non-negativity assumption of switching costs, but the most of existing studies mainly focus on the finite horizon problem.
Therefore, our results may provide new insights in optimal switching problems using multidimensional RBSDEs.

Finally, we give some examples of optimal switching problems under ambiguity in finance.
We show that under certain conditions, optimal switching problems under ambiguity can be interpreted as optimal switching problems under a certain probability measure determined a priori.
Therefore, the results in existing literature can be used to optimal switching problems under ambiguity.
However, the problems not meeting these conditions provide more interesting results.
In \cref{subsec:buy.low.sell.high}, we consider the buy low and sell high problem under ambiguity, which does not satisfy these conditions.
Our results indicate that effects of ambiguity in this problem can not be reproduced by a simple change of a probability measure.

The rest of this paper is organized as follows.
\Cref{sec:preliminaries} defines optimal switching problems under ambiguity in the finite horizon using the concept of multiple priors introduced by \cite{chen2002ambiguity}.
\Cref{sec:rbsde} introduces multidimensional RBSDEs and proves the existence of their solutions.
\Cref{sec:verification.viscosity} verifies that the value functions in the optimal switching problems under ambiguity are characterized by solutions to the multidimensional RBSDEs, and derives the system of partial differential equations which the value functions satisfy.
\Cref{sec:infinite.horizon} considers the infinite horizon problem.
\Cref{sec:financial.appliciation} provides some applications of optimal switching problems under ambiguity in finance.
Lengthy proofs are in Appendix.

\section{Preliminaries and Problem Formulation}\label{sec:preliminaries}
Let $(\Omega,\filt,\prob)$ be a probability space endowed with a $d$-dimensional Brownian motion $W=(W_t)_{t\geq 0}$.
Let $T>0$ be a finite constant time.
We first consider an optimal switching problem during $[0,T]$.
Let $\filts=(\filt_t)_{t\geq 0}$ be an augmentation of the natural filtration generated by $W$.

We denote by $\alpha = (\alpha_t)_{t\geq 0}$ a control process such that
\begin{equation}\label{eq:control}
\alpha_t = \sum_{k\geq 0}i_k\1_{[\tau_k,\tau_{k+1})}(t),
\end{equation}
where $(i_k)_{k\geq 0}$ is a regime process taking values in a discrete state space $\mathcal I = \{1,\dots,I\},$ $I>0$, and $(\tau_k)_{k\geq 0}$ is a non-decreasing sequence of stopping times.
$\1_A(x)$ is an indicator function such that for a given set $A$,
\[
\1_A(x) = \left\{\begin{array}{ll} 1,&\quad \mbox{if }x\in A,\\ 0,&\quad\mbox{otherwise.}\end{array}\right.
\]
We suppose that each $i_k$ is $\filt_{\tau_k}$-measurable.
Under a control $\alpha$, a decision maker chooses a regime $i_k$ on $[\tau_k,\tau_{k+1})$ for all $k\geq 0$.
For convenience, we also write a control as a sequence of pairs of regimes and stopping times: $\alpha = (\tau_k,i_k)_{k\geq 0}$.

Let $X=(X_t)_{0\leq t\leq T}$ be a $d$-dimensional stochastic process satisfying the following stochastic differential equation (hereafter SDE):
\begin{equation}\label{eq:X.SDE}
\diff X_t = b(t,X_t,\alpha_t)\diff t + \sigma(t,X_t,\alpha_t)\diff W_t,
\end{equation}
where $\alpha=(\alpha_t)_{0\leq t\leq T}$ is a control process.
$b$ and $\sigma$ are measurable functions as follows.

\begin{assumption}\label{Assump:X.SDE}
$b:[0,T]\times\real^d\times\mathcal I \rightarrow\real^d$ and $\sigma:[0,T]\times\real^d\times\mathcal I \rightarrow\real^{d\times d}$ satisfy the following Lipschitz condition and quadratic growth condition:
\begin{align*}
\|b(t,x,i) -b(t,y,i)\| + \|\sigma(t,x,i) -\sigma(t,y,i)\| &\leq L\|x-y\|, \\
\|b(t,x,i)\|^2 + \|\sigma(t,x,i)\|^2 &\leq L^2(1 + \|x\|^2),
\end{align*}
for every $t\in[0,T],\;i\in\mathcal I,$ and $x,y\in\real^d$, where $L$ is a positive constant, and $\|x\|$ is the Euclid norm of $x\in\real^d$.
\end{assumption}

Let $L^q_t(\real^d)$ be a set of $d$-dimensional, $q$-th integrable  (that is, an $L^q$ norm on $(\Omega,\filt,\prob)$ is finite), and $\filt_t$-measurable random vectors.
Let $\mathcal T_t^T$ be a set of stopping times taking values in $[t,T]$.
Let $\widetilde{\mathcal I}_t$ be a set of $\filt_t$-measurable random variables taking values in $\mathcal I$.
We define $\widetilde{\mathcal K^q_T}$ and $\overline{\mathcal K_T}$ as follows,
\begin{align*}
\widetilde{\mathcal K^q_T} &:= \left\{(\nu,\eta,\iota)\;|\;\nu\in\mathcal T_0^T,\;\eta\in L_\nu^q(\real^d),\;\iota\in\widetilde{\mathcal I}_\nu\right\},\\
\overline{\mathcal K_T} &:= [0,T]\times \real^d \times \mathcal I.
\end{align*}
By \cref{Assump:X.SDE}, for every $(\nu,\eta,\iota)\in\widetilde{\mathcal K_T^2}$ and progressively measurable control $\alpha$ starting from $\alpha_\nu = \iota$,
there exists a unique strong solution to the SDE \cref{eq:X.SDE} on $[\nu,T]$ starting from $X_\nu=\eta$ and controlled by $\alpha$.
We denote this controlled process by $X^{\nu,\eta,\iota,\alpha} = (X^{\nu,\eta,\iota,\alpha}_s)_{\nu\leq s\leq T}$.
Furthermore, it is well known that the moments of $X$ is upper bounded (e.g., Corollary 2.5.12 in \cite{krylov1980controlled} and Theorem 5.2.9 in \cite{karatzas1991brownian}).
We shortly summarize the results of the moment estimates of $X$.

\begin{proposition}\label{Prop:q.th.integrable}
Under \cref{Assump:X.SDE}, for every $q>0$, there exist constants $C_{q,X}\geq 1$ and $C_q >0$ such that
\begin{align*}
\expect\left[\max_{t\leq s\leq T}\|X_s^{t,x,i,\alpha}\|^q\right] &\leq C_{q,X}(1 + \|x\|^q)e^{C_q(T-t)},
\end{align*}
for all $0\leq t\leq T,\;x\in\real^d,\;i\in\mathcal I$ and control $\alpha$.
Note that $C_{q,X}$ and $C_q$ do not depend on $t,T,x,i$ and $\alpha$.
Furthermore, if a constant $\rho$ is sufficiently large such that $\rho>C_q$, then there exists a positive constant $C^\infty_{q,X}$ such that
\begin{equation}\label{eq:Lp.estimate.X.infinite}
\expect\left[\max_{s\geq t}e^{-\rho s}\Big(1 + \|X_s^{t,x,i,\alpha}\|^q\Big)\right] \leq C^\infty_{q,X}(1 + \|x\|^q)e^{-(\rho - C_q)t},
\end{equation}
for all $0\leq t,\;x\in\real^d,\;i\in\mathcal I$ and control $\alpha$.
Note that $C^\infty_{q,X}$ does not depend on $t,x,i$ and $\alpha$.
\end{proposition}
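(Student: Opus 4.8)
The plan is to treat the two assertions in sequence: the finite-horizon bound is the classical moment estimate for controlled SDEs (as in the references cited before the statement), which I would reprove while carefully tracking that the constants depend only on $q$, $d$ and the constant $L$ of \cref{Assump:X.SDE}; the discounted infinite-horizon bound \cref{eq:Lp.estimate.X.infinite} would then be deduced from the finite-horizon bound by a patching-and-summation argument that exploits the strict inequality $\rho>C_q$.

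For the first bound, set $\phi(x):=(1+\|x\|^2)^{q/2}$; the added constant $1$ keeps the argument bounded away from $0$, so $\phi\in C^2(\real^d)$ for every $q>0$ and no separate treatment of $q<2$ is needed. Applying Itô's formula to $\phi(X^{t,x,i,\alpha}_s)$ and using \cref{Assump:X.SDE}, one checks that the drift term is dominated in absolute value by $C_q\phi(X^{t,x,i,\alpha}_s)$, where $C_q$ depends only on $q$, $d$ and $L$ — in particular not on $t$, $T$, $x$, $i$, or the control $\alpha$. Taking the supremum over $s\in[t,u]$, then expectations, and controlling the resulting stochastic integral by the Burkholder--Davis--Gundy inequality combined with Young's inequality to absorb the martingale contribution into the left-hand side, I would arrive at
\[
\expect\Big[\max_{t\le s\le u}\phi(X^{t,x,i,\alpha}_s)\Big]\le \phi(x)+C_q\int_t^u\expect\Big[\max_{t\le r\le s}\phi(X^{t,x,i,\alpha}_r)\Big]\,\diff s .
\]
Gronwall's inequality then yields $\expect[\max_{t\le s\le T}\phi(X^{t,x,i,\alpha}_s)]\le \phi(x)e^{C_q(T-t)}$, and the elementary comparison $\|x\|^q\le\phi(x)\le C(1+\|x\|^q)$ delivers the stated estimate with some $C_{q,X}\ge 1$.

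For the discounted bound, write $Z_s:=1+\|X^{t,x,i,\alpha}_s\|^q$ and split the half-line into unit blocks:
\[
\max_{s\ge t}e^{-\rho s}Z_s\le\sum_{n\ge 0}\max_{t+n\le s\le t+n+1}e^{-\rho s}Z_s\le\sum_{n\ge 0}e^{-\rho(t+n)}\max_{t+n\le s\le t+n+1}Z_s .
\]
Taking expectations, applying the finite-horizon bound on each block of length one (via the flow property of $X$ and the tower property, conditioning on $\filt_{t+n}$) together with the pointwise estimate at time $t+n$, I would obtain $\expect[\max_{t+n\le s\le t+n+1}Z_s]\le C(1+\|x\|^q)e^{C_q n}$ for a constant $C$ independent of $n$, $t$, $x$, $i$, $\alpha$. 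Substituting this and summing the geometric series $\sum_{n\ge 0}e^{(C_q-\rho)n}=(1-e^{C_q-\rho})^{-1}$, which converges precisely because $\rho>C_q$, gives \cref{eq:Lp.estimate.X.infinite} with $C^\infty_{q,X}:=C(1-e^{C_q-\rho})^{-1}$ and even the sharper factor $e^{-\rho t}\le e^{-(\rho-C_q)t}$.

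The main obstacle is not any single estimate but the uniformity of all constants in the control $\alpha$ and in the endpoints $t,T$. This uniformity is what makes the block-summation legitimate, and it is secured by two structural facts: $b$ and $\sigma$ satisfy the Lipschitz and growth bounds of \cref{Assump:X.SDE} uniformly over the finite regime set $\mathcal I$, so the generator estimate $|\mathcal L\phi|\le C_q\phi$ holds with a single $C_q$ for every admissible (progressively measurable, $\mathcal I$-valued) control; and $C_q$ arises purely from the growth constant $L$, hence is independent of the horizon. Granting these, the only genuinely new step relative to the classical references is the geometric summation, whose convergence is exactly the role played by the hypothesis $\rho>C_q$.
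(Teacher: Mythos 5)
Your proof is correct, but it departs from the paper's argument at both stages, and in instructive ways. For the finite-horizon bound the paper works with $\|x\|^q$ itself, which forces a case split: It\^o's formula is applied directly only for $q\geq 4$; the running maximum is then controlled by bounding the Burkholder--Davis--Gundy term via Jensen's inequality against the pointwise $2q$-th moment (so the paper's exponential rate is $C_q=\widehat C_{2q}/2$, a constant inherited from the $2q$-moment growth); and small $q$ is recovered afterwards by a Jensen power reduction. Your smoothed test function $(1+\|x\|^2)^{q/2}$ removes the case split, and your Young-inequality absorption keeps every estimate at level $q$ rather than invoking $2q$-th moments; the price is that absorbing $\expect[\max_{t\leq s\leq u}\phi(X_s^{t,x,i,\alpha})]$ into the left-hand side presupposes its finiteness, so you should still run the argument under a localizing sequence of stopping times and pass to the limit by Fatou/monotone convergence, exactly as the paper does before its Gronwall step. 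For the discounted bound \cref{eq:Lp.estimate.X.infinite} the divergence is sharper: the paper disposes of it in one sentence by applying It\^o's lemma to $e^{-\rho s}\bigl(1+\|X_s^{t,x,i,\alpha}\|^q\bigr)$, whose drift becomes nonpositive once $\rho>C_q$, whereas you patch the finite-horizon estimate over unit blocks $[t+n,t+n+1]$ and sum the geometric series $\sum_{n\geq 0}e^{(C_q-\rho)n}$. Your route is more elementary, makes the role of the hypothesis $\rho>C_q$ completely explicit, legitimately uses the conditional form of the finite-horizon bound with the random initial datum $X_{t+n}^{t,x,i,\alpha}$ (the paper records precisely this conditional version immediately after the proposition), and in fact yields the slightly stronger decay $e^{-\rho t}\leq e^{-(\rho-C_q)t}$. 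The paper's route is shorter on paper, but it leaves unstated how one passes from the supermartingale-type property of the discounted process to an $L^1$ bound on its running maximum; your block argument avoids that gap entirely.
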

The proof of \cref{Prop:q.th.integrable} is in \cref{subsec:moment.estimate.X}.
Moreover, we can easily show that the results of \cref{Prop:q.th.integrable} hold in the case when the initial time is a stopping time.
For every $\nu\in\mathcal T_0^T,\;\eta\in L_\nu^{2q}(\real^d),\;i\in\mathcal I$ and control $\alpha$,
we have
\[
\expect\left[\max_{\nu\leq s\leq T}\|X_s^{\nu,\eta,i,\alpha}\|^q\;\Big|\;\filt_\nu\right] \leq C_{q,X}(1 + \|\eta\|^q)e^{C_q(T-\nu)}.
\]

We first consider an optimal switching problem without ambiguity.
An objective function of the optimal switching problem without ambiguity is
\begin{multline}\label{eq:objective.wo.ambiguity}
J^{na}(t,x,i,\alpha) 
:= \expect\biggl[\int_t^TD_s^{t,x,i,\alpha}\psi(s,X_s^{t,x,i,\alpha},\alpha_s)\diff s\\
+ D_T^{t,x,i,\alpha} g(X_T^{t,x,i,\alpha},\alpha_T) 
- \sum_{t\leq \tau_k\leq T}D_{\tau_k}^{t,x,i,\alpha} c_{i_{k-1},i_k}(\tau_k,X_{\tau_k}^{t,x,i,\alpha})\;\Big|\;\filt_t\biggr],
\end{multline}
where $\psi,g,$ and $c$ are measurable functions.
$\psi$ represents running rewards for the switching problem without ambiguity.
$g$ represents a terminal payoff.
$c$ is a switching cost function.
$c_{i,j}(t,x)$ represents a switching cost from regime $i$ to $j$ at time $t$ and $X_t=x$.
$D^{t,x,i,\alpha}$ is a discount factor such that for any $(t,x,i)\in\overline{\mathcal K_T}$ and control $\alpha$,
\begin{equation}\label{eq:discount.factor}
D^{t,x,i,\alpha}_s = \exp\left\{-\int_t^s\rho(t,X_u^{t,x,i,\alpha},\alpha_u)\diff u\right\},\quad s\in[t,T],
\end{equation}
where $\rho(t,x,i)$ is a bounded measurable function.
By the definition \cref{eq:discount.factor}, we allow the discount rate to be random and controllable.
Therefore, the objective function \cref{eq:objective.wo.ambiguity} represents the expected and discounted total profit on $[t,T]$.

For all $\nu\in \mathcal T_0^T$ and $\iota\in\widetilde{\mathcal I}_\nu$, let $\control{\nu}{\iota}$ be a set of controls such that
\begin{equation}\label{eq:admissible.control.set}
\control{\nu}{\iota}
:= \left\{\alpha=(\alpha_s)_{\nu\leq s\leq T}\;\Big|\;\begin{array}{c}
 \expect\left[\Big|\sum_{\nu\leq \tau_k\leq T}c_{i_{k-1},i_k}(\tau_k,X_{\tau_k}^{\nu,x,\iota,\alpha})\Big|^2\right] < \infty,\;\forall x\in\real^d, \\
 \mbox{and }\alpha_\nu = \iota.
\end{array}
 \right\}.
\end{equation}
We call a control in $\control{\nu}{\iota}$ an admissible control.
The optimal switching problem without ambiguity is
\begin{equation}\label{eq:optimal.switching.wo.ambiguity}
\sup_{\alpha\in\control{t}{i}}J^{na}(t,x,i,\alpha),
\end{equation}
for all $(t,x,i)\in\overline{\mathcal K_T}$.

The optimal switching problems expressed as \cref{eq:optimal.switching.wo.ambiguity} are well studied by many researchers (e.g., \cite{brekke1994optimal}, \cite{ly2007explicit}, \cite{djehiche2009finite}, and \cite{bayraktar2010one}).
However, one of the weakness of the optimal switching problem \cref{eq:optimal.switching.wo.ambiguity} is not to take into account ambiguity.
The problem \cref{eq:optimal.switching.wo.ambiguity} assumes that the decision maker knows functional forms of the distribution parameters $b$ and $\sigma$ a priori, whereas we do not know them in practice.
Therefore, it needs to take into account uncertainty about the distribution of $X$ in order to derive more useful switching strategies.
Hence, we consider an optimal switching problem under ambiguity hereafter.

We first define a set of degrees of ambiguity.
For $t\in[0,T]$, let $\Theta_t$ be a set of $d$-dimensional $\filt_t$-measurable random variables.
We assume the form of $\Theta_t$ as follows.
\begin{assumption}\label{Assump:rect}
\
\begin{enumerate}
 \item There exists a non-negative constant $C$ such that
 \[
  \prob(\|\theta_t\|\leq C,\;\forall \theta_t\in\Theta_t,\;t\in[0,T]) = 1.
 \]
 \item $\Theta_t$ is convex and compact valued for all $t\in[0,T]$.
 \item $\Theta_t$ is a progressively measurable correspondence for all $t\in[0,T]$.
 \item $0\in\Theta_t$ $\diff t\otimes\prob$-a.e..
\end{enumerate}
\end{assumption}
Let
\begin{equation*}
\Theta[t,T] := \left\{\theta=(\theta_s)_{t\leq t\leq T}\;\Big|\; \begin{array}{c}
 \theta \mbox{ is right-continuous with left limits and} \\
 \theta_s\in\Theta_s\mbox{ for all $s\in[t,T]$.}
\end{array}
\right\}.
\end{equation*}
For all $\theta\in\Theta[t,T]$, we define a density process $\zeta^{\theta,t} = (\zeta^{\theta,t}_s)_{t\leq s\leq T}$ such that
\begin{equation*}
\zeta_s^{\theta,t} := \exp\left\{-\int_t^s\theta_u^\prime\diff W_u - \frac{1}{2}\int_t^s\|\theta_u\|^2\diff u\right\},\quad s\in[t,T],
\end{equation*}
where $x^\prime$ is a transpose of a vector $x\in\real^d$.
By \cref{Assump:rect}, for all $\theta\in\Theta[t,T]$, $\zeta^{\theta,t}$ is a martingale with respect to $\filts$.
Therefore, for all $\theta\in\Theta[t,T]$, we can define a new probability measure such that
\begin{equation*}
\prob^{\theta}_T(A) := \expect[\1_A\zeta_T^{\theta,t}],\quad A\in\filt_T.
\end{equation*}
We denote by $\expect_T^{\theta}$ the expectation operator under the probability measure $\prob^{\theta}_T$.

Under the probability measure $\prob^\theta_T$, by the Girsanov theorem, the SDE \cref{eq:X.SDE} can be expressed as
\[
\diff X_t = \Big(b(t,X_t,\alpha_t) - \sigma(t,X_t,\alpha_t)\theta_t\Big)\diff t + \sigma(t,X_t,\alpha_t)\diff W_t^\theta,\; t\in[0,T],
\]
where $W^\theta$ is a $d$-dimensional Brownian motion under $\prob^\theta_T$.
This implies that we can take account of the ambiguity about the drift of $X$ under $\prob^\theta_T$.

$\Theta$ represents a set of priors of the decision maker.
In Chen and Epstein (2002) \cite{chen2002ambiguity}, a decision making problem under ambiguity in continuous time is studied, in which the decision maker would like to avoid the event whose occurrence probability is unknown.
To incorporate ambiguity into an optimal switching problem, we use the concept in \cite{chen2002ambiguity}.
In the model in \cite{chen2002ambiguity}, the decision maker chooses his or her subjective probability measure before choosing her decision as if his or her expected utility is minimized.
Chen and Epstein succeed to pose such a decision making problem under \cref{Assump:rect}.
They call \cref{Assump:rect} the rectangular condition.

The objective function under ambiguity is
\begin{multline*}
J(t,x,i,\alpha) 
:= \inf_{\theta\in\Theta[t,T]}\expect_T^{\theta}\biggl[\int_t^TD_s^{t,x,i,\alpha}\Big(\psi(s,X_s^{t,x,i,\alpha},\alpha_s) - \theta_s^\prime\phi(s,X_s^{t,x,i,\alpha},\alpha_s)\Big)\diff s \\
+ D_T^{t,x,i,\alpha} g(X_T^{t,x,i,\alpha},\alpha_T) 
- \sum_{t\leq \tau_k\leq T}D_{\tau_k}^{t,x,i,\alpha} c_{i_{k-1},i_k}(\tau_k,X_{\tau_k}^{t,x,i,\alpha})\;\Big|\;\filt_t\biggr],
\end{multline*}
where $\phi$ is a measurable function from $[0,T]\times\real^d\times\mathcal I$ onto $\real^d$.
$\phi$ determines a running premium for ambiguity.
Our settings allow choices of ambiguity levels to affect the running rewards through the term $\theta_\cdot^\prime\phi(\cdot,X_\cdot^{t,x,i},\alpha_\cdot)$.
The optimal switching problem under ambiguity is
\begin{equation*}
\sup_{\alpha\in\control{t}{i}}J(t,x,i,\alpha),
\end{equation*}
for all $(t,x,i)\in\overline{\mathcal K_T}$.

Furthermore, we assume the functions, $\rho,\psi,\phi,g,$ and $c$ as follows.
\begin{assumption}\label{Assump:funcs}
\
\begin{enumerate}
 \item $\rho(\cdot,\cdot,i)$ is a continuous, non-negative and upper bounded function for all $i\in\mathcal I$.
 \item \textit{Polynomial growth condition} \\
 $\psi(\cdot,\cdot,i),\;\phi(\cdot,\cdot,i),\;g(\cdot,i)$ and $c_{i,j}(\cdot,\cdot)$ are continuous for all $i,j\in\mathcal I$, and $c_{i,i}(t,x) = 0$ for all $(t,x,i)\in\overline{\mathcal K_T}$.
 Furthermore, there exist positive constants $C_f$ and $q$ such that
 \begin{equation*}
  |\psi(t,x,i)| + \|\phi(t,x,i)\| + |g(x,i)| + |c_{i,j}(t,x)| \leq C_f(1 + \|x\|^q),
 \end{equation*}
 for all $(t,x,i,j)\in[0,T]\times\real^d\times(\mathcal I)^2$. Without loss of generality, we assume $q\geq 1$.
 \item \textit{Non-free loop conditions}
 \
       \begin{enumerate}
       \item For all finite loops $(i_0,i_1,\dots,i_m)\in\mathcal I^{m+1}$ with $i_0 = i_m$ and $i_0\neq i_1$ and for all $(t,x)\in[0,T]\times\real^d$, $c$ satisfies
       \begin{equation*}
        c_{i_0,i_1}(t,x) + \cdots + c_{i_{m-1},i_m}(t,x) > 0.
       \end{equation*}
       \item $g$ satisfies the following inequality,
       \begin{equation*}
        g(x,i) \geq \max_{j\in\mathcal I\setminus\{i\}}\{g(x,j) - c_{i,j}(T,x)\},
       \end{equation*}
       for all $(x,i)\in\real^d\times\mathcal I$.
       \end{enumerate}
 \item \textit{Strong triangular condition} \\
 Let
 \begin{align*}
  \mathcal N &= \left\{i\in\mathcal I \;\Big|\;\exists j\in\mathcal I,\;j\neq i,\; \int_{[0,T]\times\real^d}\1\{c_{i,j}(t,x) < 0\}(t,x)\diff t\diff x > 0\right\}, \\
  C_i &= -\min_{j\in\mathcal I,\;x\in\real^d,\;t\in[0,T]}\frac{c_{i,j}(t,x)}{1 + \|x\|^q},\quad i\in\mathcal N,
 \end{align*}
 where $q$ is defined in \cref{Assump:funcs}.2.
 Then, for all $i\in\mathcal N$,
 \begin{equation}\label{eq:strong.triangular}
  c_{k,j}(t,x) \leq c_{k,i}(t,x) - C_i(1 + C_{q,X}(1 + \|x\|^q)e^{C_q(T - t)}),
 \end{equation}
 for all $t\in[0,T],\;x\in\real^d$ and $(j,k)\in\mathcal I$ with $j\neq i$ and $k\neq i$, where $C_{q,X}$ and $C_q$ are defined in \cref{Prop:q.th.integrable}.
\end{enumerate}
\end{assumption}

\cref{Assump:funcs}.1 implies that the discount rate is upper bounded and non-negative.
The non-negativity is usual, and the assumption of upper boundedness guarantees the Lipschitz condition of a generator in the BSDE literature.
\cref{Assump:funcs}.2 and \cref{Prop:q.th.integrable} guarantee the value function of our optimal switching problems to be finite.
Therefore, it is needed in order to consider meaningful problems.

The non-free loop conditions (\cref{Assump:funcs}.3) say that whenever one first stands in some regime (call regime $A$), next instantaneously goes to the other regimes, and finally goes back to the regime $A$ at the same time, then he or she has to pay a positive cost.
Hence, the non-free loop conditions exclude the possibility that one can gain a positive profit by a looping switching strategy at the same time.
If the non-free loop conditions are not postulated, then the value function diverges as the decision maker obtains an infinitely large reward by such a looping strategy.
Since it is an arbitrage, the non-free loop conditions are natural in the optimal switching problems.

Unlike the previous literature, we do not assume non-negativity of the cost functions.
Our specification of ambiguity allows this generalization.
However, we need an additional assumption in this case.
If some cost function can take a negative value, it needs to satisfy the strong triangular condition (\cref{Assump:funcs}.4).

The strong triangular condition means that the switching benefits are not too large to take these benefits.
Heuristically speaking, if one first stands in the regime $k$ and if $c_{i,j} < 0$, then the cost that he or she goes to the regime $j$ via the regime $i$ is at least as large as the cost that he or she directly goes to the regime $j$.
The strong triangular condition implies the standard triangle inequality.
Indeed, by the inequality \cref{eq:strong.triangular}, we have
\begin{align*}
c_{k,i}(t,x) + c_{i,j}(t,x)
 &\geq c_{k,i}(t,x) - C_i(1 + \|x\|^q) \\
 &\geq c_{k,i}(t,x) - C_i(1 + C_{q,X}(1 + \|x\|^q)e^{C_q(T - t)}) 
 \geq c_{k,j}(t,x),
\end{align*}
for all $i\in\mathcal N,\;(j,k)\in\mathcal I,\;(t,x)\in[0,T]\times\real^d$ with $k\neq i$ and $j\neq i$.
Therefore, our triangular condition \cref{eq:strong.triangular} is stronger than the standard triangle inequality.

By \cref{Prop:q.th.integrable,Assump:funcs}, we can show that an expected total cost does not diverge for every admissible control.
\begin{proposition}\label{Prop:cost.upper.bounded}
Under \cref{Assump:X.SDE,Assump:funcs},
\begin{equation}\label{eq:cost.upper.bounded}
\expect\left[-\sum_{t\leq \tau_k\leq T}D_{\tau_k}^{t,x,i,\alpha} c_{i_{k-1},i_k}(\tau_k,X_{\tau_k}^{t,x,i,\alpha})\right] \leq C_f(1 + C_{q,X}(1 + \|x\|^q)e^{C_q(T-t)}),
\end{equation}
for all $(t,x,i)\in\overline{\mathcal K_T}$ and $\alpha=(\tau_k,i_k)_{k\geq 0}\in\control{t}{i}$.
\end{proposition}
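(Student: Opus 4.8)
The plan is to bound the discounted total cost from below by telescoping away every switch into a regime in $\mathcal N$, using the strong triangular condition \cref{eq:strong.triangular} together with the conditional moment estimate of \cref{Prop:q.th.integrable}; what survives is a single unpaired term controlled by the polynomial growth bound (\cref{Assump:funcs}.2). Fix an admissible control $\alpha=(\tau_k,i_k)_{k\geq0}$ and abbreviate $c_k:=c_{i_{k-1},i_k}(\tau_k,X_{\tau_k}^{t,x,i,\alpha})$ and $D_k:=D_{\tau_k}^{t,x,i,\alpha}$. Since $\rho\geq0$ (\cref{Assump:funcs}.1) we have $0\leq D_s\leq1$ and $D_{s'}\leq D_s$ for $s\leq s'$. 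Writing $M:=C_f(1+C_{q,X}(1+\|x\|^q)e^{C_q(T-t)})$ for the target bound, proving \cref{eq:cost.upper.bounded} is the same as proving $\expect[\sum_k D_kc_k]\geq -M$.

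The heart of the argument is a single-contraction estimate. Suppose that between two switching times $s\leq s'$ of $\alpha$ the controller sits in a regime $i\in\mathcal N$, entered from $a\neq i$ at $s$ and left toward $b\neq i$ at $s'$. Then
\[
\expect\big[D_s c_{a,i}(s,X_s)+D_{s'}c_{i,b}(s',X_{s'})\,\big|\,\filt_s\big]\;\geq\; D_s\,c_{a,b}(s,X_s).
\]
I would prove this by splitting on the sign of $c_{i,b}(s',X_{s'})$. If it is non-negative, the left-hand side is at least $D_s c_{a,i}(s,X_s)$, and \cref{eq:strong.triangular} in the form $c_{a,i}\geq c_{a,b}+C_i(1+C_{q,X}(1+\|X_s\|^q)e^{C_q(T-s)})$ closes it pointwise. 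If it is negative, I use $D_{s'}\leq D_s$ to factor out $D_s$, bound $c_{i,b}(s',X_{s'})\geq -C_i(1+\|X_{s'}\|^q)$ by the definition of $C_i$, take $\expect[\cdot\,|\,\filt_s]$, and apply the conditional estimate $\expect[\|X_{s'}\|^q\,|\,\filt_s]\leq C_{q,X}(1+\|X_s\|^q)e^{C_q(T-s)}$ from \cref{Prop:q.th.integrable}; the surviving term is $D_s(c_{a,i}(s,X_s)-C_i(1+C_{q,X}(1+\|X_s\|^q)e^{C_q(T-s)}))$, which \cref{eq:strong.triangular} again bounds below by $D_s c_{a,b}(s,X_s)$. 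The degenerate case $a=b$ is included since $c_{a,a}=0$, so a loop through an $\mathcal N$-regime has non-negative conditional expected cost.

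Next I would fix $N$ and treat the finite partial sum $S_N:=\sum_{k=1}^{N}D_kc_k$. It contains only finitely many indices $k$ whose regime $i_k\in\mathcal N$ is interior to the window, i.e. both entered and left within it. Selecting any such index, the contraction estimate and the tower property let me replace its entry/exit pair by the single term $D_s c_{a,b}(s,X_s)$ without increasing $\expect[S_N]$; the estimate applies verbatim even when the entry term is itself a previously produced term, since it only uses that $s\le s'$ are stopping times and that the relevant quantities are $\filt_s$-measurable. Iterating eliminates every interior $\mathcal N$-visit in finitely many steps. In the resulting expression every interior term has its source regime outside $\mathcal N$, hence (by continuity of $c$ and the definition of $\mathcal N$) non-negative cost, so the whole expression is pathwise at least its first term $D_{\tau_1}c_{i_0,\cdot}(\tau_1,X_{\tau_1})\geq -C_f(1+\|X_{\tau_1}\|^q)$, the only switch that can leave the initial regime $i_0=i$ at a benefit. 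Taking expectations and using \cref{Prop:q.th.integrable} gives $\expect[S_N]\geq -C_f(1+C_{q,X}(1+\|x\|^q)e^{C_q(T-t)})=-M$ uniformly in $N$.

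Finally I would let $N\to\infty$. The admissibility of $\alpha$, namely the $L^2$ bound in \cref{eq:admissible.control.set}, guarantees that the cost series defines an integrable random variable, so that $\expect[S_N]\to\expect[\sum_{t\leq\tau_k\leq T}D_kc_k]$ (for instance by decomposing into monotone positive and negative parts and applying monotone convergence), and the uniform lower bound $-M$ persists in the limit; this yields \cref{eq:cost.upper.bounded}. I expect the real obstacle to be the contraction step: making rigorous that non-consecutive entry/exit pairs may be removed one at a time, that each removal genuinely decreases $\expect[S_N]$ through the tower property irrespective of the order chosen, and that the moment estimate is invoked at the correct conditioning time $s=\tau_k$. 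By comparison the limiting argument and the sign bookkeeping are routine given \cref{Prop:q.th.integrable} and \cref{Assump:funcs}.
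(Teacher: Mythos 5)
Your overall strategy is the same as the paper's: collapse each entry/exit pair through a regime in $\mathcal N$ using the strong triangular condition \cref{eq:strong.triangular} together with the conditional moment estimate following \cref{Prop:q.th.integrable}, reduce the partial sum to a single first-switch cost, bound that by polynomial growth, and pass to the limit using admissibility. Your contraction estimate itself is correct (the sign-splitting can be merged into the pathwise bound $D_{s'}c_{i,b}(s',X_{s'})\geq -D_s C_i(1+\|X_{s'}\|^q)$, after which conditioning on $\filt_s$ and \cref{eq:strong.triangular} close it), and it is exactly the inequality the paper establishes for the last pair.

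The genuine gap is your claim that "the estimate applies verbatim even when the entry term is itself a previously produced term," i.e.\ that eliminations may be performed in an arbitrary order. When a contraction replaces the pair $(c_{a,i}\text{ at }s,\;c_{i,b}\text{ at }s')$ by the virtual term $D_s c_{a,b}(s,X_s)$, the target $b$ of that virtual term is only $\filt_{s'}$-measurable, not $\filt_s$-measurable. If that virtual term later serves as the \emph{entry} term of another contraction (which happens, e.g., for the chain $a\to i\to i'\to b$ with $i,i'\in\mathcal N$ if you eliminate $i$ before $i'$), then the regime to be eliminated is precisely this target, so $C_{b}$ and the event $\{b\in\mathcal N\}$ are not measurable at the conditioning time $s$; the conditional moment estimate must then be taken through $\filt_{s'}$, producing a bound in terms of $(s',X_{s'})$, while the entry cost sits at $(s,X_{s})$ — and \cref{eq:strong.triangular} compares costs only at a common $(t,x)$. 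So the argument breaks for that order. The repair is to eliminate strictly from the back (latest interior $\mathcal N$-visit first): then every virtual term only ever plays the \emph{exit} role, where the target's measurability is irrelevant because the bound $c_{i,j}(t,x)\geq -C_i(1+\|x\|^q)$ holds uniformly in $j$. This backward recursion, with the additional device of replacing the collapsed target by the $\filt_{\tau_{n-1}}$-measurable minimizer $i^*_{n-1}=\arg\min_{j}c_{i_{n-2},j}(\tau_{n-1},X_{\tau_{n-1}})$ so that each step produces a well-defined single term, is exactly the paper's proof. Your final limiting step (monotone parts / dominated convergence under the admissibility bound \cref{eq:admissible.control.set}) is at the same level of rigor as the paper's appeal to dominated convergence.
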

\begin{proof}[Proof of \cref{Prop:cost.upper.bounded}]
Fix an arbitrary $(t,x,i)\in\overline{\mathcal K_T}$ and $\alpha=(\tau_k,i_k)_{k\geq 0}\in\control{t}{i}$.
We first prove
\[
\expect\left[-\sum_{k=1}^n D_{\tau_k}^{t,x,i,\alpha} c_{i_{k-1},i_k}(\tau_k,X_{\tau_k}^{t,x,i,\alpha})\right] \leq C_f(1 + C_{q,X}(1 + \|x\|^q)e^{C_q(T-t)}),
\]
for all $n\geq 1$.
If $\prob(i_{n-1}\in\mathcal N\;|\;\filt_{\tau_{n-1}}) = 0$, then $c_{i_{n-1},i_n}(\tau_n,X_{\tau_n}^{t,x,i,\alpha})\geq 0$.
Hence, we have
\begin{multline}\label{eq:positive.cost.inequality}
-D_{\tau_{n-1}}^{t,x,i,\alpha} c_{i_{n-2},i_{n-1}}(\tau_{n-1},X_{\tau_{n-1}}^{t,x,i,\alpha})-D_{\tau_n}^{t,x,i,\alpha} c_{i_{n-1},i_n}(\tau_n,X_{\tau_n}^{t,x,i,\alpha}) \\
\leq -D_{\tau_{n-1}}^{t,x,i,\alpha} c_{i_{n-2},i_{n-1}}(\tau_{n-1},X_{\tau_{n-1}}^{t,x,i,\alpha}).
\end{multline}
If $\prob(i_{n-1}\in\mathcal N\;|\;\filt_{\tau_{n-1}}) > 0$, then, by \cref{Prop:q.th.integrable}, we have
\begin{align*}
&\expect\left[-D_{\tau_{n-1}}^{t,x,i,\alpha} c_{i_{n-2},i_{n-1}}(\tau_{n-1},X_{\tau_{n-1}}^{t,x,i,\alpha})-D_{\tau_n}^{t,x,i,\alpha} c_{i_{n-1},i_n}(\tau_n,X_{\tau_n}^{t,x,i,\alpha})\;\Big|\;\filt_{\tau_{n-1}}\right] \\
&\qquad\leq -\expect\Big[D_{\tau_{n-1}}^{t,x,i,\alpha}\Big( c_{i_{n-2},i_{n-1}}(\tau_{n-1},X_{\tau_{n-1}}^{t,x,i,\alpha}) 
 -  C_{i_{n-1}}\Big(1 + \|X_{\tau_n}^{t,x,i,\alpha}\|^q\Big)\Big)\1_{\{i_{n-1}\in\mathcal N\}}\\
&\qquad\qquad\qquad\qquad + D_{\tau_{n-1}}^{t,x,i,\alpha}c_{i_{n-2},i_{n-1}}(\tau_{n-1},X_{\tau_{n-1}}^{t,x,i,\alpha})\1_{\{i_{n-1}\notin\mathcal N\}} \;\Big|\;\filt_{\tau_{n-1}}\Big] \\
&\qquad\leq -\expect\Big[D_{\tau_{n-1}}^{t,x,i,\alpha}\Big( c_{i_{n-2},i_{n-1}}(\tau_{n-1},X_{\tau_{n-1}}^{t,x,i,\alpha}) \\
&\qquad\qquad\qquad\qquad - C_{i_{n-1}}\Big(1 + C_{q,X}\Big(1 + \|X_{\tau_{n-1}}^{t,x,i,\alpha}\|^q\Big)e^{C_q(T-\tau_{n-1})}\Big)\Big)\1_{\{i_{n-1}\in\mathcal N\}}\\
&\qquad\qquad\qquad\qquad + D_{\tau_{n-1}}^{t,x,i,\alpha}c_{i_{n-2},i_{n-1}}(\tau_{n-1},X_{\tau_{n-1}}^{t,x,i,\alpha})\1_{\{i_{n-1}\notin\mathcal N\}} \;\Big|\;\filt_{\tau_{n-1}}\Big].
\end{align*}
By \cref{Assump:funcs}.4, there exists an $\filt_{\tau_{n-1}}$-measurable random variable $\widetilde i_{n-1}$ taking values in $\mathcal I$ such that
\begin{align*}
&-\expect\Big[D_{\tau_{n-1}}^{t,x,i,\alpha}\Big( c_{i_{n-2},i_{n-1}}(\tau_{n-1},X_{\tau_{n-1}}^{t,x,i,\alpha}) \\
&\qquad\qquad\qquad - C_{i_{n-1}}\Big(1 + C_{q,X}\Big(1 + \|X_{\tau_{n-1}}^{t,x,i,\alpha}\|^q\Big)e^{C_q(T-\tau_{n-1})}\Big)\Big)\1_{\{i_{n-1}\in\mathcal N\}}\;\Big|\;\filt_{\tau_{n-1}}\Big] \\
&\leq -\expect\Big[D_{\tau_{n-1}}^{t,x,i,\alpha} c_{i_{n-2},\widetilde i_{n-1}}(\tau_{n-1},X_{\tau_{n-1}}^{t,x,i,\alpha})\1_{\{i_{n-1}\in\mathcal N\}}\;\Big|\;\filt_{\tau_{n-1}}\Big].
\end{align*}
Hence, we obtain
\begin{align}\label{eq:negative.cost.inequality}
&\expect\left[-D_{\tau_{n-1}}^{t,x,i,\alpha} c_{i_{n-2},i_{n-1}}(\tau_{n-1},X_{\tau_{n-1}}^{t,x,i,\alpha})-D_{\tau_n}^{t,x,i,\alpha} c_{i_{n-1},i_n}(\tau_n,X_{\tau_n}^{t,x,i,\alpha})\;\Big|\;\filt_{\tau_{n-1}}\right]  \\
&\qquad\leq\quad -\expect\Big[D_{\tau_{n-1}}^{t,x,i,\alpha}\Big(c_{i_{n-2},\widetilde i_{n-1}}(\tau_{n-1},X_{\tau_{n-1}}^{t,x,i,\alpha})\1_{\{i_{n-1}\in\mathcal N\}} \nonumber\\
&\qquad\qquad\qquad\quad+ c_{i_{n-2},i_{n-1}}(\tau_{n-1},X_{\tau_{n-1}}^{t,x,i,\alpha})\1_{\{i_{n-1}\notin\mathcal N\}}\Big)\;\Big|\;\filt_{\tau_{n-1}}\Big] \nonumber \\
&\qquad\leq -\expect\Big[D_{\tau_{n-1}}^{t,x,i,\alpha}c_{i_{n-2},i_{n-1}^*}(\tau_{n-1},X_{\tau_{n-1}}^{t,x,i,\alpha})\;\Big|\;\filt_{\tau_{n-1}}\Big],\nonumber
\end{align}
where
\[
i_{n-1}^* = \arg\min_{j\in\mathcal I\setminus\{i_{n-2}\}}\left\{c_{i_{n-2},j}(\tau_{n-1},X_{\tau_{n-1}}^{t,x,i,\alpha})\right\},
\]
and $i_{n-1}^*$ is obviously $\filt_{\tau_{n-1}}$-measurable.
Therefore, the inequalities \cref{eq:positive.cost.inequality,eq:negative.cost.inequality} lead to
\begin{align*}
\expect\left[-\sum_{k=1}^n D_{\tau_k}^{t,x,i,\alpha} c_{i_{k-1},i_k}(\tau_k,X_{\tau_k}^{t,x,i,\alpha})\right] 
&\leq \expect\left[-D_{\tau_1}^{t,x,i,\alpha} c_{i,i_1^*}(\tau_1,X_{\tau_1}^{t,x,i,\alpha})\right] \\
&\leq C_f\left(1 + \expect\left[\|X_{\tau_1}^{t,x,i,\alpha}\|^q\right]\right) \\
&\leq C_f\left(1 + \expect\left[\max_{t\leq s\leq T}\|X_{s}^{t,x,i,\alpha}\|^q\right]\right) \\
&\leq C_f(1 + C_{q,X}(1 + \|x\|^q)e^{C_q(T-t)}).
\end{align*}
Since $\alpha\in\control{t}{i}$, by the Lebesgue dominated convergence theorem, we obtain the inequality \cref{eq:cost.upper.bounded}.
\end{proof}

\Cref{Prop:cost.upper.bounded} has an important role in our switching problem.
The other studies assuming non-negativity of switching costs naturally derive a lower boundary of the total expected costs, this is 0.
However, we do not naturally say that the total costs are non-negative since our switching costs can take a negative value.
Therefore, we need to estimate a lower boundary of the total expected costs by \cref{Prop:cost.upper.bounded}.

\begin{remark}\label{Remark:cost.upper.bounded}
Even if the cost functions do not satisfy the strong triangular condition,
it is possible that \cref{Prop:cost.upper.bounded} holds.
In this case, the following discussion in this paper also holds.
Essentially, we need
\[
\expect\left[-\sum_{t\leq \tau_k\leq T} D_{\tau_k}^{t,x,i,\alpha} c_{i_{k-1},i_k}(\tau_k,X_{\tau_k}^{t,x,i,\alpha})\right] \leq C(1 + \|x\|^q),
\]
for all $(t,x,i)\in\overline{\mathcal K_T}$ and $\alpha\in\control{t}{i}$, where $C$ is a positive constant not depending on $(t,x,i)$ and $\alpha$.
\end{remark}

\section{Multidimensional Reflected BSDEs}\label{sec:rbsde}
Next, we consider a representation of the objective function by BSDEs.

For all $\nu\in\mathcal T_0^T$, we denote by $\SSd{\nu}$ the set of real-valued progressively measurable processes $Y$ such that
\[
\expect\left[\sup_{\nu\leq t\leq T}|Y_t|^2\right] < \infty,
\]
and by $\HH{\nu}{d}$ the set of $\real^d$-valued progressively measurable processes $Z$ such that
\[
\expect\left[\int_\nu^T\|Z_t\|^2\diff t\right] < \infty.
\]
Especially, we denote by $\mathbb S^2_c[\nu,T]$ a set of all continuous processes in $\SSd{\nu}$ and by $\KK{\nu}$ a set of all non-decreasing processes in $\SSd{\nu}$.

We consider the following BSDE: For given $(\nu,\eta,\iota)\in \widetilde{\mathcal K^{2q}_T},\;\theta\in\Theta[\nu,T]$ and $\alpha\in\control{\nu}{\iota}$,
\begin{align}\label{eq:BSDE.general}
-\diff Y_t^{\nu,\eta,\iota,\theta,\alpha} 
&= \Big(\psi(t,X_t^{\nu,\eta,\iota,\alpha},\alpha_t) - \rho(t,X_t^{\nu,\eta,\iota,\alpha},\alpha_t) Y_t^{\nu,\eta,\iota,\theta,\alpha} \nonumber\\
&\quad - \theta_t^\prime\Big(\phi(t,X_t^{\nu,\eta,\iota,\alpha},\alpha_t) + Z_t^{\nu,\eta,\iota,\theta,\alpha} \Big)\Big)\diff t  \nonumber \\
&\quad - (Z_t^{\nu,\eta,\iota,\theta,\alpha})^\prime\diff W_t - \diff A_t^{\nu,\eta,\iota,\alpha},\; t\in[\nu,T], \\
Y_T^{\nu,\eta,\iota,\theta,\alpha} &= g(X_T^{\nu,\eta,\iota,\alpha},\alpha_T), \quad
A_t^{\nu,\eta,\iota,\alpha} = \sum_{t\leq \tau_k\leq T}c_{i_{k-1},i_k}(\tau_k,X^{\nu,\eta,\iota,\alpha}_{\tau_k}),\; t\in[\nu,T], \nonumber\\
&(Y^{\nu,\eta,\iota,\theta,\alpha},Z^{\nu,\eta,\iota,\theta,\alpha}) \in \SSd{\nu}\times\HH{\nu}{d}. \nonumber
\end{align}
Since $g(X_T^{\nu,\eta,\iota,\alpha},\alpha_T) \in L_T^2(\real)$ and $(\phi(t,X_t^{\nu,\eta,\iota,\alpha},\alpha_t))_{\nu\leq t\leq T},\;(\psi(t,X_t^{\nu,\eta,\iota,\alpha},\alpha_t))_{\nu\leq t\leq T}\in\HH{\nu}{1}$ and since $\theta$ and $\rho$ are uniformly bounded by \cref{Assump:X.SDE,Assump:rect,Assump:funcs},
the BSDE \cref{eq:BSDE.general} has a unique solution in $\SSd{\nu}\times\HH{\nu}{d}$.
Furthermore, by Proposition 2.2 in \cite{el1997backward}, the solution of the BSDE \cref{eq:BSDE.general}, also denoted by $(Y_t^{\nu,\eta,\iota,\theta,\alpha},Z_t^{\nu,\eta,\iota,\theta,\alpha})_{\nu\leq t\leq T}$, can be represented as the following form.
\begin{align}\label{eq:BSDE.probabilistic.representation}
Y_t^{\nu,\eta,\iota,\theta,\alpha} &= \frac{1}{D^{\nu,\eta,\iota,\alpha}_t\zeta_t^{\theta,\nu}}\expect\left[\int_t^T D^{\nu,\eta,\iota,\alpha}_s\zeta^{\theta,\nu}_s\Big(\psi(s,X_s^{\nu,\eta,\iota,\alpha},\alpha_s) - \theta_s^\prime\phi(s,X_s^{\nu,\eta,\iota,\alpha},\alpha_s)\Big)\diff s\right. \\
&\quad \left. + D^{\nu,\eta,\iota,\alpha}_T\zeta_T^{\theta,\nu} g(X_T^{\nu,\eta,\iota,\alpha},\alpha_T) - \sum_{t\leq \tau_k\leq T}D^{\nu,\eta,\iota,\alpha}_{\tau_k}\zeta^{\theta,\nu}_{\tau_k}c_{i_{k-1},i_k}(\tau_k,X^{\nu,\eta,\iota,\alpha}_{\tau_k})\;\Big|\;\filt_t\right] \nonumber\\
 &= \expect_T^{\theta}\left[\int_t^T\frac{D^{\nu,\eta,\iota,\alpha}_s}{D^{\nu,\eta,\iota,\alpha}_t}\Big(\psi(s,X_s^{\nu,\eta,\iota,\alpha},\alpha_s) - \theta_s^\prime\phi(s,X_t^{\nu,\eta,\iota,\alpha},\alpha_s)\Big)\diff s \right. \nonumber\\
 &\quad\quad \left.+ \frac{D^{\nu,\eta,\iota,\alpha}_T}{D^{\nu,\eta,\iota,\alpha}_t}g(X_T^{0,x,i,\alpha},\alpha_T) - \sum_{t\leq \tau_k\leq T}\frac{D^{\nu,\eta,\iota,\alpha}_{\tau_k}}{D^{\nu,\eta,\iota,\alpha}_t}c_{i_{k-1},i_k}(\tau_k,X^{\nu,\eta,\iota,\alpha}_{\tau_k})\;\Big|\;\filt_t\right],\nonumber
\end{align}
where we have used the Bayes rule in the second equality.

Now, we also consider another BSDE such that
\begin{align}\label{eq:BSDE.minimal.temp}
-\diff Y_t^{\nu,\eta,\iota,\alpha} 
&= \Big(\psi(t,X_t^{\nu,\eta,\iota,\alpha},\alpha_t)  - \rho(t,X_t^{\nu,\eta,\iota,\alpha},\alpha_t) Y_t^{\nu,\eta,\iota,\alpha} \nonumber\\
&\qquad\qquad - \max_{\theta_t\in\Theta_t}\left\{\theta_t^\prime \Big(\phi(t,X_t^{\nu,\eta,\iota,\alpha},\alpha_t)+ Z_t^{\nu,\eta,\iota,\alpha}\Big)\right\} \Big)\diff t  \nonumber\\
&\qquad\qquad - (Z_t^{\nu,\eta,\iota,\alpha})^\prime\diff W_t - \diff A_t^{\nu,\eta,\iota,\alpha},\; t\in[\nu,T], \\
Y_T^{\nu,\eta,\iota,\alpha} &= g(X_T^{\nu,\eta,\iota,\alpha},\alpha_T),\quad
A_t^{\nu,\eta,\iota,\alpha} = \sum_{t\leq \tau_k\leq T}c_{i_{k-1},i_k}(\tau_k,X^{\nu,\eta,\iota,\alpha}_{\tau_k}),\; t\in[\nu,T], \nonumber \\
&(Y^{\nu,\eta,\iota,\alpha},Z^{\nu,\eta,\iota,\alpha}) \in \SSd{\nu}\times\HH{\nu}{d}. \nonumber
\end{align}
The BSDE \cref{eq:BSDE.minimal.temp} also has a unique solution in $\SSd{\nu}\times\HH{\nu}{d}$.
From the comparison theorem, the solution to the BSDE \cref{eq:BSDE.minimal.temp} is a minimum value of $Y_t^{\nu,\eta,\iota,\theta,\alpha}$ over $\theta\in\Theta[\nu,T]$, that is, the following inequality holds.
\begin{equation}\label{eq:Y.theta.minimal}
Y_t^{\nu,\eta,\iota,\theta,\alpha} \geq Y_t^{\nu,\eta,\iota,\alpha},
\end{equation}
$\prob$-almost surely for all $t\in[\nu,T]$ and $\theta\in\Theta[\nu,T]$.

Combining the inequality \cref{eq:Y.theta.minimal} with the equality \cref{eq:BSDE.probabilistic.representation}, we deduce that
\begin{align*}
Y_t^{t,x,i,\alpha} &= \inf_{\theta\in\Theta[t,T]}\expect_T^{\theta}\left[\int_t^T D^{t,x,i,\alpha}_s\Big(\psi(s,X_s^{t,x,i,\alpha},\alpha_s) - \theta_s^\prime\phi(s,X_s^{t,x,i,\alpha},\alpha_s)\Big)\diff s \right.\\
&\qquad \left.+ D^{t,x,i,\alpha}_T g(X_T^{t,x,i,\alpha},\alpha_T) - \sum_{s\leq \tau_k\leq T}D^{t,x,i,\alpha}_{\tau_k}c_{i_{k-1},i_k}(\tau_k,X^{t,x,i,\alpha}_{\tau_k})\;\Big|\;\filt_t\right] \\
&= J(t,x,i,\alpha),
\end{align*}
for all $(t,x,i)\in\overline{\mathcal K_T}$ and $\alpha\in\control{t}{i}$.
Therefore, $Y_t^{t,x,i,\alpha}$ is the objective function in the optimal switching problem under ambiguity.

For the sake of brevity, we assume for $\Theta_t$ as follows.
\begin{assumption}\label{Assump:varsigma}
Suppose that $\Theta_t$ is measurable with respect to the $\sigma$-algebra generated by $X_t$ and $\alpha_t$ for all $t\in[0,T]$.
We denote by $\Theta_t^{x,i}$ a $\Theta_t$ with $X_t = x$ and $\alpha_t = i$.
For all $(t,x,i)\in\overline{\mathcal K_T}$ and $z\in\real^d$,
let
\[
\varsigma(t,x,i,z) := \max_{\theta_t\in\Theta_t^{x,i}}\left\{\theta_t^\prime \Big(\phi(t,x,i) + z\Big)\right\}.
\]
Then, suppose that $\varsigma$ is a deterministic and measurable function.
Moreover, suppose that $\varsigma(\cdot,\cdot,i,\cdot)$ is continuous for all $i\in\mathcal I$.
\end{assumption}
By \cref{Assump:rect}.1 and \ref{Assump:funcs}, $\varsigma$ satisfy the polynomial growth condition with respect to $x$ and $z$ and the Lipschitz condition with respect to $z$: There exists a positive constant $C_\varsigma$ such that
\begin{align*}
|\varsigma(t,x,i,z)| \leq C_\varsigma(1 + \|x\|^q + \|z\|), \quad
|\varsigma(t,x,i,z) - \varsigma(t,x,i,\widetilde z)| \leq C_\varsigma\|z - \widetilde z\|,
\end{align*}
for all $(t,x,i,z,\widetilde z)\in\overline{\mathcal K_T}\times(\real^d)^2$.

Under \cref{Assump:varsigma},
the BSDE \cref{eq:BSDE.minimal.temp} can be expressed as
\begin{align}\label{eq:BSDE.minimal}
-\diff Y_t^{\nu,\eta,\iota,\alpha} 
&= \Big(\psi(t,X_t^{\nu,\eta,\iota,\alpha},\alpha_t)  - \rho(t,X_t^{\nu,\eta,\iota,\alpha},\alpha_t) Y_t^{\nu,\eta,\iota,\alpha}  - \varsigma(t,X_t^{\nu,\eta,\iota,\alpha},\alpha_t,Z_t^{\nu,\eta,\iota,\alpha}) \Big)\diff t \nonumber\\
&\qquad\qquad - (Z_t^{\nu,\eta,\iota,\alpha})^\prime\diff W_t - \diff A_t^{\nu,\eta,\iota,\alpha},\; t\in[\nu,T], \\
Y_T^{\nu,\eta,\iota,\alpha} &= g(X_T^{\nu,\eta,\iota,\alpha},\alpha_T),\quad
A_t^{\nu,\eta,\iota,\alpha} = \sum_{t\leq \tau_k\leq T}c_{i_{k-1},i_k}(\tau_k,X^{\nu,\eta,\iota,\alpha}_{\tau_k}),\; t\in[\nu,T], \nonumber \\
&(Y^{\nu,\eta,\iota,\alpha},Z^{\nu,\eta,\iota,\alpha}) \in \SSd{\nu}\times\HH{\nu}{d}. \nonumber
\end{align}

\bigskip

Now, let us consider a multidimensional RBSDE.
For given $\nu\in \mathcal T_0^T$ and $\eta\in L^{2q}_\nu(\real^d)$ and for all $i\in\mathcal I$,
\begin{align}\label{eq:multidimensional.RBSDE}
-\diff Y_t^{\nu,\eta,i} &= \Big(\psi(t,X_t^{\nu,\eta,i},i) - \rho(t,X_t^{\nu,\eta,i},i)Y_t^{\nu,\eta,i} - \varsigma(t,X_t^{\nu,\eta,i},i,Z_t^{\nu,\eta,i}) \Big)\diff t \nonumber \\
&\qquad\qquad - (Z_t^{\nu,\eta,i})^\prime\diff W_t + \diff K_t^{\nu,\eta,i},\; t\in[\nu,T],\nonumber \\
Y_T^{\nu,\eta,i} &= g(X_T^{\nu,\eta,i},i),\quad K^{\nu,\eta,i}_\nu = 0, \quad
Y_t^{\nu,\eta,i} \geq \max_{j\in\mathcal I\setminus\{i\}}\{Y_t^{\nu,\eta,j} - c_{i,j}(t,X_t^{\nu,\eta,i})\},\; t\in[\nu,T],\\
&\int_\nu^T\Big(Y_t^{\nu,\eta,i} - \max_{j\in\mathcal I\setminus\{i\}}\{Y_t^{\nu,\eta,j} - c_{i,j}(t,X_t^{\nu,\eta,i})\}\Big)\diff K^{\nu,\eta,i}_t = 0,  \nonumber\\
&(Y^{\nu,\eta,i},Z^{\nu,\eta,i},K^{\nu,\eta,i}) \in \SSd{\nu}\times\HH{\nu}{d}\times\KK{\nu}, \quad i \in\mathcal I, \nonumber
\end{align}
where $X^{\nu,\eta,i}=(X^{\nu,\eta,i}_t)_{\nu\leq t\leq T}$ is a strong solution to the following SDE,
\begin{equation}\label{eq:fix.X.SDE}
\diff X_t = b(t,X_t,i)\diff t + \sigma(t,X_t,i)\diff W_t,\;t\in[\nu,T],\quad X_\nu = \eta.
\end{equation}

In the next section, we show that a solution $Y_t^{t,x,i}$ to the multidimensional RBSDE \cref{eq:multidimensional.RBSDE} is a value function of the optimal switching problem under ambiguity.
In this section, we first prove the existence of solutions to the multidimensional RBSDE \cref{eq:multidimensional.RBSDE}.

\begin{theorem}\label{Thm:existence}
Under \cref{Assump:X.SDE,Assump:rect,Assump:funcs,Assump:varsigma}, the multidimensional RBSDE \cref{eq:multidimensional.RBSDE} has a solution in $(\mathbb S^2_c[\nu,T]\times\HH{\nu}{d}\times\KK{\nu})^I$ for any $\nu\in\mathcal T_0^T$ and $\eta\in L_\nu^{2q}(\real^d)$.
\end{theorem}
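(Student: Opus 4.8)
The plan is to build the solution by Picard iteration on the interconnected obstacles, solving at each step a single-barrier reflected BSDE via the classical theory of \cite{el1997backward} and then passing to a monotone limit. Throughout, set $f(t,x,i,y,z):=\psi(t,x,i)-\rho(t,x,i)y-\varsigma(t,x,i,z)$; by \cref{Assump:funcs}.1 and the Lipschitz bound for $\varsigma$ recorded just after \cref{Assump:varsigma}, $f$ is uniformly Lipschitz in $(y,z)$ with polynomial growth in $x$. I would initialize with $(Y^{i,0},Z^{i,0})$ the solution of the non-reflected BSDE with driver $f$ and terminal value $g(X_T^{\nu,\eta,i},i)$. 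For $n\geq 1$ and each $i\in\mathcal I$, let $(Y^{i,n},Z^{i,n},K^{i,n})$ solve the reflected BSDE with driver $f$, terminal value $g(X_T^{\nu,\eta,i},i)$, reflected above the continuous lower barrier $L^{i,n}_t:=\max_{j\neq i}\{Y^{j,n-1}_t-c_{i,j}(t,X_t^{\nu,\eta,i})\}$. Each step is well posed by \cite{el1997backward}: the barrier lies in $\mathbb S^2_c[\nu,T]$ by continuity of $Y^{j,n-1}$, $c$ and $X$ and by \cref{Prop:q.th.integrable} together with \cref{Assump:funcs}.2, while the terminal compatibility $L^{i,n}_T\leq g(X_T^{\nu,\eta,i},i)$ is precisely the terminal non-free loop condition in \cref{Assump:funcs}.3.

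Next I would prove monotonicity, namely $Y^{i,n}\leq Y^{i,n+1}$ for all $i$ and $n$. The base case $Y^{i,0}\leq Y^{i,1}$ holds because upward reflection introduces a non-decreasing process $K^{i,1}$ and so, by comparison, can only raise the solution. For the inductive step, the comparison theorem for reflected BSDEs shows that $Y^{j,n-1}\leq Y^{j,n}$ for all $j$ forces $L^{i,n}\leq L^{i,n+1}$ and hence $Y^{i,n}\leq Y^{i,n+1}$.

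The main obstacle is a uniform upper bound on the iterates. The $n$-th iterate $Y^{i,n}_t$ admits a representation as the value of the switching problem started from $(t,X_t,i)$ using at most $n$ switches; with non-negative costs one would simply dominate it by the no-switching value, but since negative costs are permitted this is exactly the place where the argument must invoke \cref{Prop:cost.upper.bounded}. Concretely, in that representation the running and terminal terms are controlled by the polynomial growth of $\psi$, $g$ and $\phi$ together with the moment estimate of \cref{Prop:q.th.integrable}, while the switching term $-\sum_{\tau_k}D_{\tau_k}^{\nu,\eta,i,\alpha}c_{i_{k-1},i_k}$ is bounded above uniformly in $n$ by \cref{Prop:cost.upper.bounded} (this is where \cref{Assump:funcs}.4 is consumed). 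Combined with the uniform lower bound $Y^{i,n}\geq Y^{i,0}$, this yields $\sup_n\expect[\sup_{\nu\leq t\leq T}|Y^{i,n}_t|^2]<\infty$.

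Finally I would pass to the limit by the monotonic limit theorem for reflected BSDEs. Monotonicity and the uniform bound produce $Y^i:=\lim_n Y^{i,n}\in\SSd{\nu}$; standard a priori estimates give weakly convergent subsequences $Z^{i,n}\rightharpoonup Z^i$ in $\HH{\nu}{d}$ and $K^{i,n}\rightharpoonup K^i$ in $\KK{\nu}$, and the monotonic limit theorem ensures that $Y^i$ is continuous, that one may pass to the limit in the driver and martingale terms, and that both the constraint $Y^i_t\geq\max_{j\neq i}\{Y^j_t-c_{i,j}(t,X_t^{\nu,\eta,i})\}$ and the Skorokhod minimality condition survive in the limit. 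This delivers a solution $(Y^i,Z^i,K^i)_{i\in\mathcal I}$ in $(\mathbb S^2_c[\nu,T]\times\HH{\nu}{d}\times\KK{\nu})^I$. I expect the two delicate points to be the continuity of the limit $Y^i$ and the recovery of the Skorokhod condition, both handled by the monotonic limit theorem, and the uniform bound of the previous step, which is the one place where permitting negative switching costs genuinely requires \cref{Prop:cost.upper.bounded}.
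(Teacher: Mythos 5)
Your overall architecture --- Picard iteration on the interconnected obstacles, monotonicity by comparison, a uniform a priori bound invoking \cref{Prop:cost.upper.bounded}, and a monotone limit --- is the same as the paper's, but the critical step, the uniform upper bound, has a genuine gap as you state it. You claim that the iterate $Y^{i,n}$ ``admits a representation as the value of the switching problem started from $(t,X_t,i)$ using at most $n$ switches'' and then bound the switching term directly by \cref{Prop:cost.upper.bounded}. The driver of your iterates contains $-\varsigma(t,X_t^{\nu,\eta,i},i,Z_t^{i,n})$, which depends on $Z$; consequently the Snell-envelope/iterated-optimal-stopping representation of $Y^{i,n}$ holds only under the nonlinear expectation generated by this driver (equivalently, with an additional infimum over the densities $\prob^\theta$), not under the physical expectation $\expect$. \cref{Prop:cost.upper.bounded}, however, is a statement about plain $\expect$-expectations, so it cannot be inserted into that representation without an intermediate argument. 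The paper bridges exactly this point: since $0\in\Theta_t$ (\cref{Assump:rect}.4) gives $\varsigma\geq 0$, the comparison theorem dominates each $Y^{i,n}$ by the solution $U^{i,n}$ of the corresponding reflected BSDE \emph{without} the $\varsigma$ term; these non-ambiguity processes have a $Z$-free driver, hence a genuine Snell envelope representation under $\prob$ and, by Proposition 2.3 in \cite{el1997reflected}, an exact representation as the plain-expectation value of an $n$-switch strategy $\alpha^n$, to which \cref{Prop:cost.upper.bounded} and \cref{Prop:q.th.integrable} then apply. Without this detour (or an equivalent reduction, e.g.\ first bounding, for each fixed strategy, the infimum over $\theta$ by the $\theta=0$ evaluation), your Step 3 is unjustified; this auxiliary majorant is precisely the idea your proposal is missing.

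Two smaller inaccuracies: in the limit step you attribute both the continuity of the limit $Y^i$ and the Skorokhod condition to the monotonic limit theorem, but Peng's theorem \cite{peng1999monotonic} only yields a c\`adl\`ag limit; the continuity of $Y^i$ and $K^i$ is deduced from the non-free loop condition (\cref{Assump:funcs}.3), as in steps 3--5 of the proof of Theorem 3.2 in \cite{hamadene2010switching}, which the paper invokes. Also, the well-posedness and comparison results for each single-barrier reflected step should be cited from the reflected BSDE paper \cite{el1997reflected} (Theorems 4.1 and 5.2 there), not from \cite{el1997backward}.
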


When the switching costs are always non-negative,
\cref{Thm:existence} are proved by Theorem 3.2 in \cite{hamadene2010switching} and Theorem 2.1 in \cite{hu2010multi}.
We use the strategy of the proof of Theorem 3.2 in \cite{hamadene2010switching}, but there is a problem for a priori estimates of Picard's iterations of the multidimensional RBSDE \cref{eq:multidimensional.RBSDE}.
In the setting of \cite{hamadene2010switching} i.e., under the assumption that all switching costs are non negative, the process in $\SSd{\nu}$ that is larger than all Picard's iterations can be defined, however, this process may not be larger than Picard's iterations in our problem since we allow the switching costs to be negative.
Therefore, we can not use the results in \cite{hamadene2010switching} straightforwardly.
However, thanks to \cref{Prop:cost.upper.bounded}, we can define the other process in $\SSd{\nu}$ that is larger than all Picard's iterations in our problem.

\begin{proof}[Proof of \cref{Thm:existence}]
\
Throughout this proof, we fix an arbitrary $\nu\in\mathcal T_0^T$ and $\eta\in L_\nu^{2q}(\real^d)$.

\textit{Step.1 Picard's iterations.}
Let $(Y^{\nu,\eta,i,0},Z^{\nu,\eta,i,0})$ be a solution to the following BSDE.
\begin{align*}
-\diff Y_t^{\nu,\eta,i,0} &= \Big(\psi(t,X_t^{\nu,\eta,i},i) - \rho(t,X_t^{\nu,\eta,i},i)Y_t^{\nu,\eta,i,0} - \varsigma(t,X_t^{\nu,\eta,i},i,Z_t^{\nu,\eta,i,0}) \Big)\diff t \\
&\qquad\qquad - (Z_t^{\nu,\eta,i,0})^\prime\diff W_t,\; t\in[\nu,T], \\
Y_T^{\nu,\eta,i,0} &= g(X_T^{\nu,\eta,i},i),\quad (Y^{\nu,\eta,i,0},Z^{\nu,\eta,i,0}) \in \SSd{\nu}\times\HH{\nu}{d},
\end{align*}
for all $i\in\mathcal I$.
Then, by \cref{Assump:X.SDE,Assump:rect,Assump:funcs,Assump:varsigma}, the above BSDE has a unique solution.
For any $n\geq 1$, we consider the following RBSDE recursively.
\begin{align}\label{eq:picard.iteration}
-\diff Y_t^{\nu,\eta,i,n} &= \Big(\psi(t,X_t^{\nu,\eta,i},i) - \rho(t,X_t^{\nu,\eta,i},i)Y_t^{\nu,\eta,i,n} - \varsigma(t,X_t^{\nu,\eta,i},i,Z_t^{\nu,\eta,i,n}) \Big)\diff t \nonumber\\
&\qquad\qquad - (Z_t^{\nu,\eta,i,n})^\prime\diff W_t + \diff K_t^{\nu,\eta,i,n},\; t\in[\nu,T], \nonumber\\
Y_T^{\nu,\eta,i,n} &= g(X_T^{\nu,\eta,i},i), \quad K^{\nu,\eta,i,n}_\nu = 0,\nonumber \\
Y_t^{\nu,\eta,i,n} &\geq \max_{j\in\mathcal I\setminus\{i\}}\{Y_t^{\nu,\eta,j,n-1} - c_{i,j}(t,X_t^{\nu,\eta,i})\},\;t\in[\nu,T], \\
&\int_\nu^T\Big(Y_t^{\nu,\eta,i,n} - \max_{j\in\mathcal I\setminus\{i\}}\{Y_t^{\nu,\eta,j,n-1} - c_{i,j}(t,X_t^{\nu,\eta,i})\}\Big)\diff K^{\nu,\eta,i,n}_t = 0, \nonumber \\
&(Y^{\nu,\eta,i,n},Z^{\nu,\eta,i,n},K^{\nu,\eta,i,n}) \in \SSd{\nu}\times\HH{\nu}{d}\times\KK{\nu},\quad i\in\mathcal I. \nonumber
\end{align}
Under \cref{Assump:X.SDE,Assump:rect,Assump:funcs,Assump:varsigma}, by Theorem 5.2 in \cite{el1997reflected},
the RBSDE \cref{eq:picard.iteration} has a unique solution for all $n$ and $i$.
Furthermore, by the comparison theorem (Theorem 4.1 in \cite{el1997reflected}), 
we have $Y_t^{\nu,\eta,i,n-1}\leq Y_t^{\nu,\eta,i,n},\;\prob$-a.s. for all $i$ and $n$.

\textit{Step.2 Non-ambiguity processes.}
Consider the following BSDE.
\begin{align*}
-\diff U_t^{\nu,\eta,i,0} &= \Big(\psi(t,X_t^{\nu,\eta,i},i) - \rho(t,X_t^{\nu,\eta,i},i)U_t^{\nu,\eta,i,0}\Big)\diff t - (V_t^{\nu,\eta,i,0})^\prime\diff W_t,\; t\in[\nu,T], \\
U_T^{\nu,\eta,i,0} &= g(X_T^{\nu,\eta,i},i),\quad (U^{\nu,\eta,i,0},V^{\nu,\eta,i,0}) \in \SSd{\nu}\times\HH{\nu}{d},\quad i\in\mathcal I.
\end{align*}
Then, the above BSDE has a unique solution.
Similarly, we consider the following RBSDE for any $n\geq 1$.
\begin{align*}
-\diff U_t^{\nu,\eta,i,n} &= \Big(\psi(t,X_t^{\nu,\eta,i},i) - \rho(t,X_t^{\nu,\eta,i},i)U_t^{\nu,\eta,i,n}\Big)\diff t - (V_t^{\nu,\eta,i,n})^\prime\diff W_t + \diff S_t^{\nu,\eta,i,n},\; t\in[\nu,T],\\
U_T^{\nu,\eta,i,n} &= g(X_T^{\nu,\eta,i},i),\quad S^{\nu,\eta,i,n}_\nu = 0, \\
U_t^{\nu,\eta,i,n} &\geq \max_{j\in\mathcal I\setminus\{i\}}\{U_t^{\nu,\eta,j,n-1} - c_{i,j}(t,X_t^{\nu,\eta,i})\},\; t\in[\nu,T],\\
&\int_\nu^T\Big(U_t^{\nu,\eta,i,n} - \max_{j\in\mathcal I\setminus\{i\}}\{U_t^{\nu,\eta,j,n-1} - c_{i,j}(t,X_t^{\nu,\eta,i})\}\Big)\diff S^{\nu,\eta,i,n}_t = 0, \\
&(U^{\nu,\eta,i,n},V^{\nu,\eta,i,n},S^{\nu,\eta,i,n}) \in \SSd{\nu}\times\HH{\nu}{d}\times\KK{\nu},\quad i\in\mathcal I.
\end{align*}
Then, the above RBSDE has a unique solution, and we obtain that $U_t^{\nu,\eta,i,n}\geq U_t^{\nu,\eta,i,n-1},\;\prob$-a.s. for all $(t,i)\in[\nu,T]\times\mathcal I$ and $n\geq 1$ by the comparison theorem.
By the definition of $\varsigma$ and \cref{Assump:rect}.4,
we have
\[
\varsigma(t,x,i,z) \geq 0,\quad \forall (t,x,i,z)\in[0,T]\times\real^d\times\mathcal I\times\real^d.
\]
Hence, applying the comparison theorem again to $U_t^{\nu,\eta,i,n}$ and $Y_t^{\nu,\eta,i,n}$, we obtain that $U_t^{\nu,\eta,i,n}\geq Y_t^{\nu,\eta,i,n},\;\prob$-a.s. for all $(t,i)\in[\nu,T]\times\mathcal I$ and $n\geq 1$.
Furthermore, $U^{\nu,\eta,i,n}$ has a Snell envelope representation such that
\begin{multline*}
U^{\nu,\eta,i,n}_t = \esssup{\tau^*\in\mathcal T_t^T}\expect\biggl[\int_t^{\tau^*}\frac{D_s^{\nu,\eta,i}}{D_t^{\nu,\eta,i}}\psi(s,X_s^{\nu,\eta,i},i)\diff s  + \frac{D_T^{\nu,\eta,i}}{D_t^{\nu,\eta,i}}g(X_T^{\nu,\eta,i},i)\1_{\{\tau^*=T\}}\\
+ \frac{D_{\tau^*}^{\nu,\eta,i}}{D_t^{\nu,\eta,i}}\max_{j\in\mathcal I\setminus\{i\}}\left\{U^{\nu,\eta,j,n-1}_{\tau^*} - c_{i,j}(\tau^*,X_{\tau^*}^{\nu,\xi,i})\right\}\1_{\{\tau^*<T\}}\;\Big|\;\filt_t\biggr],
\end{multline*}
for all $t\in[\nu,T]$ and $n\geq 1$, where
\[
D_t^{\nu,\eta,i} = \exp\left\{-\int_\nu^t \rho(s,X_s^{\nu,\eta,i},i)\diff s\right\}, \quad t\in[\nu,T].
\]

\textit{Step.3 A priori estimates.}
Fix an arbitrary $t\in[\nu,T]$ and an natural number $n$.
Let $(\tau_0,i_0) = (t,i)$ and
\begin{align*}
\tau_k &= \inf\left\{s\in[\tau_{k-1},T]\;\Big|\;U^{\nu,\eta,i_{k-1},n-(k-1)}_{\tau_n} 
 = \max_{j\in\mathcal I\setminus\{i_{n-1}\}}\left\{U^{\nu,\eta,j,n-k}_{\tau_n} - c_{i_{k-1},j}(\tau_k,X_{\tau_k}^{\nu,\eta,i,\alpha})\right\}\right\},\\
i_k &\mbox{ is such that }U^{\nu,\eta,i_{k-1},n-(k-1)}_{\tau_n} = U^{\nu,\eta,i_{k},n-k}_{\tau_n} - c_{i_{k-1},i_k}(\tau_k,X_{\tau_k}^{\nu,\eta,i,\alpha}),
\end{align*}
for all $k=1,\dots,n$.
Then, we define $\alpha^n = (\tau_k,i_k)_{k\geq 0}$ and it holds that
\begin{multline*}
U^{\nu,\eta,i,n}_t = \expect\biggl[\int_t^T D_s^{t,X_t^{\nu,\eta,i},i,\alpha^n}\psi(s,X_s^{\nu,\eta,i,\alpha^n},\alpha_t^n)\diff s
 + D_T^{t,X_t^{\nu,\eta,i},i,\alpha^n}g(X_T^{\nu,\eta,i,\alpha^n},\alpha_T^n) \\
 - \sum_{k=1}^n D_{\tau_k}^{t,X_t^{\nu,\eta,i},i,\alpha^n}c_{i,j}(\tau_k,X_{\tau_k}^{\nu,\eta,i,\alpha^n})\1_{\{\tau_k<T\}}\;\Big|\;\filt_t\biggr],
\end{multline*}
by Proposition 2.3 in \cite{el1997reflected}.
Furthermore, by the polynomial growth condition for $c$, it is easy to check that $\alpha^n$ is in $\control{\nu}{i}$.
Thus, by \cref{Prop:cost.upper.bounded}, we have
\[
\expect\left[- \sum_{k=1}^n D_{\tau_k}^{t,X_t^{\nu,\eta,i},i,\alpha^n}c_{i,j}(\tau_k,X_{\tau_k}^{\nu,\eta,i,\alpha^n})\1_{\{\tau_k<T\}}\;\Big|\;\filt_t\right]
\leq C_f(1 + C_{q,X}(1 + \|X_{t}^{\nu,\eta,i}\|^q)e^{C_{2q}T}).
\]
On the other hand, by \cref{Prop:q.th.integrable}, there exists a constant $C_T>0$ such that
\begin{align*}
&\expect\left[\int_t^T D_s^{t,X_t^{\nu,\eta,i},i,\alpha^n}\psi(s,X_s^{\nu,\eta,i,\alpha^n},\alpha_t^n)\diff s + D_T^{t,X_t^{\nu,\eta,i},i,\alpha^n}g(X_T^{\nu,\eta,i,\alpha^n},\alpha_T^n)\;\Big|\;\filt_t\right] \\
&\leq \expect\left[\int_t^T |\psi(s,X_s^{\nu,\eta,i,\alpha^n},\alpha_t^n)|\diff s + |g(X_T^{\nu,\eta,i,\alpha^n},\alpha_T^n)|\;\Big|\;\filt_t\right] \\
&\leq C_T(1 + \|X_t^{\nu,\eta,i}\|^q).
\end{align*}
Finally, there exists a positive constant $C_M> 0$ such that
\begin{align*}
U^{\nu,\eta,i,n}_t &= \expect\left[\int_t^T D_s^{t,X_t^{\nu,\eta,i},i,\alpha^n}\psi(s,X_s^{\nu,\eta,i,\alpha^n},\alpha_t^n)\diff s \right.
 + D_T^{t,X_t^{\nu,\eta,i},i,\alpha^n}g(X_T^{\nu,\eta,i,\alpha^n},\alpha_T^n) \\
&\qquad\qquad \left. - \sum_{k=1}^n D_{\tau_k}^{t,X_t^{\nu,\eta,i},i,\alpha^n}c_{i,j}(\tau_k,X_{\tau_k}^{\nu,\eta,i,\alpha^n})\1_{\{\tau_k<T\}}\;\Big|\;\filt_t\right] \\
&\leq C_M(1 + \|X_t^{\nu,\eta,i}\|^q).
\end{align*}
Note that $C_M$ does not depend on $n$ and $t$.
This implies that
\[
U_t^{\nu,\eta,i,n} \leq M_t^{\nu,\eta} := C_M\left(1 + \sum_{j\in\mathcal I}\|X_t^{\nu,\eta,j}\|^q\right),
\]
for all $t\in[\nu,T],\;i\in\mathcal I$ and $n\geq 1$.
By \cref{Prop:q.th.integrable}, $M^{\nu,\eta}$ is in $\SSd{\nu}$.
Since $Y_t^{\nu,\eta,i,0}\leq Y_t^{\nu,\eta,i,n}\leq U_t^{\nu,\eta,i,n}\leq M_t^{\nu,\eta}$ for all $t\in[\nu,T],\;i\in\mathcal I$ and $n\geq 1$ and since $Y^{\nu,\eta,i,0}\in\SSd{\nu}$ for all $i\in\mathcal I$,
there exists a finitely positive constant $C_a$ such that
\begin{equation}\label{eq:Y.upper.bound.all.n}
\sum_{i\in\mathcal I}\expect\left[\sup_{\nu\leq t\leq T}|Y_t^{\nu,\eta,i,n}|^2\right] \leq C_a,
\end{equation}
for all $n\geq 0$.
Furthermore, by the polynomial growth condition for $c$, \cref{Prop:q.th.integrable} and the inequality \cref{eq:Y.upper.bound.all.n}, there exists a positive constant $C_b$ such that
\[
\expect\left[\sup_{\nu\leq t\leq T}\Big|\Big(\max_{j\in\mathcal I\setminus\{i\}}\{Y_t^{\nu,\eta,j,n-1} - c_{i,j}(t,X_t^{\nu,\eta,i})\}\Big)^+\Big|^2\right]
\leq C_b,
\]
for all $n\geq 0$.
Hence, Proposition 3.5 in \cite{el1997reflected} leads to that there exists a finitely positive constant $C_c$ such that
\begin{equation}\label{eq:apriori.estimate}
\expect\left[\sup_{\nu\leq t\leq T}|Y_t^{\nu,\eta,i,n}|^2 + \int_\nu^T\|Z_t^{\nu,\eta,i,n}\|^2\diff t + |K_T^{\nu,\eta,i,n}|^2\right] \leq C_c,
\end{equation}
for all $n\geq 0$ and $i\in\mathcal I$.

\textit{Step.4}
The rest of this proof is exactly the same as step 3-5 in the proof of Theorem 3.2 in \cite{hamadene2010switching}.
Thanks to the inequality \cref{eq:apriori.estimate}, we can use the monotone limit theorem in \cite{peng1999monotonic} and show that a limit of $(Y^{\nu,\eta,i,n})_{n\geq 0}$ and associated processes $(Z^{\nu,\eta,i},K^{\nu,\eta,i})$ satisfy properties of the solution to the multidimensional RBSDE \cref{eq:multidimensional.RBSDE}.
This limit, denoted by $(Y^{\nu,\eta,i})$, and $(K^{\nu,\eta,i})$ are continuous by the non-free loop condition.
By the continuity of $(Y^{\nu,\eta,i})$ and $(K^{\nu,\eta,i})$, we conclude that a triplet $(Y^{\nu,\eta,i},Z^{\nu,\eta,i},K^{\nu,\eta,i})$ is a $\SSd{\nu}\times\HH{\nu}{d}\times\KK{\nu}$ limit of the sequence $(Y^{\nu,\eta,i,n},Z^{\nu,\eta,i,n},K^{\nu,\eta,i,n})_{n\geq 0}$.
\end{proof}

\begin{remark}\label{Remark:minimality}
According to Corollary 3.3 in \cite{hamadene2010switching}, the solution $(Y^{\nu,\eta,i})$ constructed in \cref{Thm:existence} is a minimum solution of the multidimensional RBSDE \cref{eq:multidimensional.RBSDE}: For any solution $(\widetilde Y^{\nu,\eta,i})$ of the multidimensional RBSDE \cref{eq:multidimensional.RBSDE},
\[
\widetilde Y^{\nu,\eta,i}_t \geq Y^{\nu,\eta,i}_t, \prob\mbox{-a.s.},
\]
for all $t\in[\nu,T]$ and $i\in\mathcal I$.
\end{remark}

\cref{Thm:existence} provides the existence of the multidimensional RBSDE \cref{eq:multidimensional.RBSDE}.
Other articles prove the uniqueness of the solution after proving the existence.
However, we do not prove the uniqueness.
Instead, we prove the pathwise uniqueness of the minimal solution to the multidimensional RBSDE \cref{eq:multidimensional.RBSDE} since this is a sufficient condition for verification of the optimal switching problem under ambiguity.

\begin{proposition}\label{Prop:minimal.uniqueness}
Suppose \cref{Assump:X.SDE,Assump:rect,Assump:funcs,Assump:varsigma}.
For any $(\nu,\widetilde \nu)\in (\mathcal T_0^T)^2$ and $\eta\in L^{2q}_\nu(\real^d)$ such that $\nu\leq \widetilde \nu$ $\prob$-a.s.,
we consider the minimum solutions of the multidimensional RBSDE \cref{eq:multidimensional.RBSDE} $Y^{\nu,\eta,i}$ and $Y^{\widetilde \nu,X_{\widetilde\nu}^{\nu,\eta,i},i}$.
Then,
\begin{equation}\label{eq:minimal.uniqueness}
Y_t^{\nu,\eta,i} = Y_t^{\widetilde\nu,X_{\widetilde\nu}^{\nu,\eta,i},i}\;\prob\mbox{-a.s.},
\end{equation}
for all $i\in\mathcal I$ and $t\in[\widetilde\nu,T]$.
\end{proposition}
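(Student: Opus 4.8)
The plan is to read \cref{eq:minimal.uniqueness} as a flow, or dynamic-programming consistency, property of the minimal solution, and to prove it by combining pathwise uniqueness of the forward SDE with the minimality characterization of \cref{Remark:minimality} and the monotone Picard construction of \cref{Thm:existence}. I would establish the two inequalities separately. The inequality $Y^{\widetilde\nu,X^{\nu,\eta,i}_{\widetilde\nu},i}_t \le Y^{\nu,\eta,i}_t$ is immediate from minimality, once the restriction of $Y^{\nu,\eta,i}$ to $[\widetilde\nu,T]$ is recognised as a solution of the system restarted at $(\widetilde\nu, X^{\nu,\eta,i}_{\widetilde\nu})$; the reverse inequality, which is the crux, is obtained by propagating the flow identity through the Picard iterates.

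First I would record the forward flow property. By pathwise uniqueness of the strong solution to \cref{eq:fix.X.SDE} under \cref{Assump:X.SDE}, the state process started from $(\widetilde\nu, X^{\nu,\eta,i}_{\widetilde\nu})$ and run in regime $i$ coincides on $[\widetilde\nu,T]$ with $X^{\nu,\eta,i}$, $\prob$-a.s.; hence the generators, the terminal values $g(\cdot,i)$ and the cost data defining the RBSDE system \cref{eq:multidimensional.RBSDE} for the two initializations agree on $[\widetilde\nu,T]$. Next I would verify that the restricted triplet $(Y^{\nu,\eta,i}_t, Z^{\nu,\eta,i}_t, K^{\nu,\eta,i}_t - K^{\nu,\eta,i}_{\widetilde\nu})_{\widetilde\nu\le t\le T}$ solves \cref{eq:multidimensional.RBSDE} on $[\widetilde\nu,T]$ with datum $X^{\nu,\eta,i}_{\widetilde\nu}$: the backward dynamics, terminal condition and obstacle survive the restriction, the restricted $K$ is non-decreasing and null at $\widetilde\nu$, and the Skorokhod condition persists because the two nonnegative pieces of the vanishing integral $\int_\nu^T(Y^{\nu,\eta,i}_t - \max_{j\neq i}\{Y^{\nu,\eta,j}_t - c_{i,j}(t,X^{\nu,\eta,i}_t)\})\diff K^{\nu,\eta,i}_t = 0$ must each vanish. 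Since this restricted triplet lies in $\mathbb S^2_c[\widetilde\nu,T]\times\HH{\widetilde\nu}{d}\times\KK{\widetilde\nu}$ and $Y^{\widetilde\nu,X^{\nu,\eta,i}_{\widetilde\nu},i}$ is the \emph{minimal} solution by \cref{Remark:minimality}, I obtain $Y^{\widetilde\nu,X^{\nu,\eta,i}_{\widetilde\nu},i}_t \le Y^{\nu,\eta,i}_t$ on $[\widetilde\nu,T]$.

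For the reverse inequality I would use the Picard scheme \cref{eq:picard.iteration}, proving by induction on $n$ that $Y^{\nu,\eta,i,n}_t = Y^{\widetilde\nu,X^{\nu,\eta,i}_{\widetilde\nu},i,n}_t$ on $[\widetilde\nu,T]$ for every $i\in\mathcal I$. At $n=0$ the iterates solve plain BSDEs whose data coincide on $[\widetilde\nu,T]$ by the forward flow property, so the identity follows from uniqueness of BSDE solutions. For the inductive step the $n$-th obstacle is $\max_{j\neq i}\{Y^{\cdot,j,n-1}_t - c_{i,j}(t,X^{\cdot,i}_t)\}$; the induction hypothesis applied to every component $j$, together with the coincidence of the driving processes, makes these obstacles agree on $[\widetilde\nu,T]$, so the two reflected BSDEs \cref{eq:picard.iteration} share their data there and uniqueness of the reflected solution forces the identity at level $n$. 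Letting $n\to\infty$ and invoking the monotone convergence of $(Y^{\nu,\eta,i,n})_n$ and the uniform a priori bound \cref{eq:apriori.estimate} from \cref{Thm:existence} then yields $Y^{\nu,\eta,i}_t = Y^{\widetilde\nu,X^{\nu,\eta,i}_{\widetilde\nu},i}_t$ on $[\widetilde\nu,T]$, which in particular contains the missing inequality.

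The main obstacle I anticipate is closing the induction cleanly: because the obstacle of each component couples it to all the others, the inductive step must be run simultaneously over all $i\in\mathcal I$ and rests entirely on the coincidence, on $[\widetilde\nu,T]$, of the (regime-dependent) driving state processes established in the first step. The remaining delicate point is the interchange of the limit in $n$ with the flow identity; here the monotone limit argument and the uniform estimate \cref{eq:apriori.estimate} underlying \cref{Thm:existence} are precisely what guarantee $\mathbb S^2$-convergence of the iterates and hence the passage to the limit.
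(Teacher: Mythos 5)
Your proposal is correct and follows essentially the same route as the paper's proof: pathwise uniqueness of the forward SDE on $[\widetilde\nu,T]$, the observation that the restricted triplet $(Y^{\nu,\eta,i},Z^{\nu,\eta,i},K^{\nu,\eta,i}-K^{\nu,\eta,i}_{\widetilde\nu})$ solves the system restarted at $(\widetilde\nu,X^{\nu,\eta,i}_{\widetilde\nu})$, minimality (\cref{Remark:minimality}) for one inequality, and the Picard scheme of \cref{Thm:existence} for the other. The only (minor) deviation is in that last step: the paper recursively applies the comparison theorem to obtain $Y^{\nu,\eta,i,n}_t \le Y^{\widetilde\nu,X^{\nu,\eta,i}_{\widetilde\nu},i}_t$ directly against the limit solution and then lets $n\to\infty$, whereas you prove level-by-level \emph{equality} of the iterates via uniqueness of each reflected BSDE; both mechanisms rest on exactly the same identification of the data of the two systems on $[\widetilde\nu,T]$ and yield the same conclusion.
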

\begin{proof}[Proof of \cref{Prop:minimal.uniqueness}]
By \cref{Assump:X.SDE}, the SDE \cref{eq:fix.X.SDE} has a strong solution for all $i\in\mathcal I$.
This implies that
\begin{equation*}
X_t^{\nu,\eta,i} = X_t^{\widetilde\nu,X_{\widetilde\nu}^{\nu,\eta,i},i}\;\prob\mbox{-a.s.},
\end{equation*}
for all $i\in\mathcal I$ and $t\in[\widetilde\nu,T]$.
Hence, $(Y^{\nu,\eta,i},Z^{\nu,\eta,i},\widehat K^{\nu,\eta,i} = K^{\nu,\eta,i} - K^{\nu,\eta,i}_{\widetilde\nu})$ satisfies the following multidimensional RBSDE on $[\widetilde\nu,T]$.
\begin{align} \label{eq:shift.multidimensional.RBSDE}
-\diff Y_t^{\nu,\eta,i} &= \Big(\psi(t,X_t^{\widetilde\nu,X_{\widetilde\nu}^{\nu,\eta,i},i},i) - \rho(t,X_t^{\widetilde\nu,X_{\widetilde\nu}^{\nu,\eta,i},i},i)Y_t^{\nu,\eta,i} \nonumber\\
&\qquad - \varsigma(t,X_{\widetilde\nu}^{\nu,\eta,i},i,Z_t^{\nu,\eta,i,n}) \Big)\diff t  - (Z_t^{\nu,\eta,i})^\prime\diff W_t + \diff\widehat K_t^{\nu,\eta,i},\; t\in[\widetilde\nu,T],\nonumber \\
Y_T^{\nu,\eta,i} &= g(X_T^{\widetilde\nu,X_{\widetilde\nu}^{\nu,\eta,i},i},i),\quad \widehat K^{\nu,\eta,i}_{\widetilde\nu} = 0, \nonumber \\
Y_t^{\nu,\eta,i} &\geq \max_{j\in\mathcal I\setminus\{i\}}\{Y_t^{\nu,\eta,j} - c_{i,j}(t,X_t^{\widetilde\nu,X_{\widetilde\nu}^{\nu,\eta,i},i})\},\quad t\in[\widetilde\nu,T],\\
&\int_{\widetilde\nu}^T\Big(Y_t^{\nu,\eta,i} - \max_{j\in\mathcal I\setminus\{i\}}\{Y_t^{\nu,\eta,j} - c_{i,j}(t,X_t^{\widetilde\nu,X_{\widetilde\nu}^{\nu,\eta,i},i})\}\Big)\diff\widehat K^{\nu,\eta,i}_t = 0, \nonumber \\
&(Y^{\nu,\eta,i},Z^{\nu,\eta,i},\widehat K^{\nu,\eta,i})\in \SSd{\widetilde\nu}\times\HH{\widetilde\nu}{d}\times\KK{\widetilde\nu}, \quad i\in\mathcal I. \nonumber
\end{align}
Since for each $i$, the multidimensional RBSDE \cref{eq:shift.multidimensional.RBSDE} is the same as the multidimensional RBSDE \cref{eq:multidimensional.RBSDE} starting from $(\widetilde\nu,X_{\widetilde\nu}^{\nu,\eta,i},i)$,
it holds that $Y_t^{\nu,\eta,i} \geq Y_t^{\widetilde\nu,X_{\widetilde\nu}^{\nu,\eta,i},i}\;\prob\mbox{-a.s.}$ for all $i\in\mathcal I$ and $t\in[\widetilde\nu,T]$ because of the minimality of $Y^{\widetilde\nu,X_{\widetilde\nu}^{\nu,\eta,i},i}$ (see \cref{Remark:minimality}).

On the other hand, recursively applying the comparison theorem to the Picard's iterations of $Y_t^{\nu,\eta,i}$ constructed in \cref{Thm:existence} on $[\widetilde\nu,T]$ leads to that
\[
Y_t^{\nu,\eta,i,n} \leq Y_t^{\widetilde\nu,X_{\widetilde\nu}^{\nu,\eta,i},i}\;\prob\mbox{-a.s.},
\]
for all $n\geq 0,\;i\in\mathcal I$ and $t\in[\widetilde\nu,T]$.
Taking a limit of the above inequality, we obtain that $Y_t^{\nu,\eta,i} \leq Y_t^{\widetilde\nu,X_{\widetilde\nu}^{\nu,\eta,i},i}$ for all $i\in\mathcal I$ and $t\in[\widetilde\nu,T]$.
Hence, the equality \cref{eq:minimal.uniqueness} holds.
\end{proof}

\section{Verification and Viscosity Solutions}\label{sec:verification.viscosity}

In this section, we show that the minimum solution in \cref{Thm:existence} can be interpreted as the value function of the optimal switching problem under ambiguity.
\Cref{Prop:excess.Y} provides a verification of $Y$.
The proof of \cref{Prop:excess.Y} is standard, so we put it on \cref{subsec:verification.Y}.

\begin{proposition}\label{Prop:excess.Y}
Suppose \cref{Assump:X.SDE,Assump:rect,Assump:funcs,Assump:varsigma}.
\begin{enumerate}
\item For an arbitrary $(\nu,\eta,\iota)\in \widetilde{\mathcal K_T^{2q}}$, let $Y^{\nu,\eta,\iota}$ be a minimum solution of the multidimensional RBSDE \cref{eq:multidimensional.RBSDE}.
      Then,
      \begin{equation*}
       Y^{\nu,\eta,\iota}_t \geq Y^{\nu,\eta,\iota,\alpha}_t,\quad \forall t\in[\nu,T],
      \end{equation*}
      for all $\alpha=(\tau_k,i_k)_{k\geq 0}\in\control{\nu}{\iota}$.
\item Let $\alpha^*=(\tau^*_k,i^*_k)_{k\geq 0}$ be a control such that $(\tau^*,i^*_0) = (\nu,\iota)$ and that for all $n\geq 1$,
      \begin{align*}
       \tau^*_n &:= \inf\left\{s\in[\tau^*_{n-1},T]\;\Big|\;Y_s^{\tau_{n-1}^*,X^*_{\tau^*_{n-1}},i^*_{n-1}} 
       = \max_{j\in\mathcal I\setminus\{i^*_{n-1}\}}\{Y_s^{\tau_{n-1}^*,X^*_{\tau^*_{n-1}},j} - c_{i^*_{n-1},j}(s,X^*_s)\}\right\}, \\
       i^*_n &\;\mbox{is such that }Y_{\tau^*_n}^{\tau_{n-1}^*,X^*_{\tau^*_{n-1}},i^*_{n-1}} = Y_{\tau^*_n}^{\tau_{n-1}^*,X^*_{\tau^*_{n-1}},i^*_n} - c_{i^*_{n-1},i^*_n}(\tau^*_n,X^*_{\tau^*_n}),
      \end{align*}
      where $X^* = X^{\nu,\eta,\iota,\alpha^*}$.
      Then, $\alpha^*$ is an admissible control and
      \begin{equation*}
       Y^{\nu,\eta,\iota}_t = Y^{\nu,\eta,\iota,\alpha^*}_t,\quad \forall t\in[\nu,T].
      \end{equation*}
\end{enumerate}
\end{proposition}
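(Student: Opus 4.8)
The plan is to realise the minimal RBSDE solution of \cref{eq:multidimensional.RBSDE} as a concatenation of single-regime solutions along a given control, and then to convert the obstacle constraint into a comparison between this concatenation and the controlled BSDE \cref{eq:BSDE.minimal}, whose solution $Y^{\nu,\eta,\iota,\alpha}$ was shown to equal the objective $J(\cdot,\alpha)$. The three tools are the flow property of the minimal solution (\cref{Prop:minimal.uniqueness}), which lets me restart the regime-$i$ RBSDE at any stopping time and at the controlled state; the comparison theorem for reflected BSDEs (Theorem 4.1 in \cite{el1997reflected}); and the upper estimate of the realised costs from \cref{Prop:cost.upper.bounded}, which replaces the non-negativity of costs used in the standard argument.

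For the first claim I would fix $\alpha=(\tau_k,i_k)_{k\geq 0}\in\control{\nu}{\iota}$ and first truncate it to its first $N$ switches. On each interval $[\tau_k,\tau_{k+1})$ the controlled state $X^{\nu,\eta,\iota,\alpha}$ solves \cref{eq:fix.X.SDE} in the frozen regime $i_k$, so by \cref{Prop:minimal.uniqueness} the restarted solution $Y^{\tau_k,X_{\tau_k},i_k}$ is well defined and on that interval obeys the same generator as the controlled BSDE, the only difference being the non-decreasing reflection term $+\diff K\geq 0$. I would then run a backward induction over $k=N,\dots,0$: on the terminal leg both processes reduce to $g$, and at each switching time the obstacle inequality $Y^{\tau_k,X_{\tau_k},i_k}_{\tau_{k+1}}\geq Y^{\tau_{k+1},X_{\tau_{k+1}},i_{k+1}}_{\tau_{k+1}}-c_{i_k,i_{k+1}}(\tau_{k+1},X_{\tau_{k+1}})$ absorbs exactly the cost increment of $A^{\nu,\eta,\iota,\alpha}$, so the induction hypothesis propagates across the switch; the comparison theorem together with $\diff K\geq 0$ then propagates it along the interval. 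This yields $Y^{\nu,\eta,\iota}_t\geq Y^{\nu,\eta,\iota,\alpha}_t$ for the truncated control, and I would close the argument by letting $N\to\infty$, using the square-integrability built into \cref{eq:admissible.control.set} together with \cref{Prop:cost.upper.bounded} and dominated convergence to discard the tail of the cost sum.

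For the second claim the equality follows by re-running the same comparison with every inequality made tight. Along $\alpha^*$ each $\tau^*_n$ is, by construction, the first instant at which the obstacle in \cref{eq:multidimensional.RBSDE} becomes active, so on every open interval $(\tau^*_n,\tau^*_{n+1})$ the solution lies strictly above the obstacle and the Skorokhod condition $\int(Y^{\nu,\eta,i}_t-\max_{j\neq i}\{Y^{\nu,\eta,j}_t-c_{i,j}\})\diff K^{\nu,\eta,i}_t=0$ forces $K$ to stay constant there; at the switch times the defining equality for $i^*_n$ turns the obstacle inequality into an identity. Feeding these two facts through the restart identity of \cref{Prop:minimal.uniqueness} recursively, the comparison of the second paragraph becomes an equality on each leg, giving $Y^{\nu,\eta,\iota}_t=Y^{\nu,\eta,\iota,\alpha^*}_t$ for all $t\in[\nu,T]$.

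The main obstacle is admissibility of $\alpha^*$, namely ruling out accumulation of the $\tau^*_n$ before $T$ and verifying the integrability condition of \cref{eq:admissible.control.set}; this is precisely where the negative-cost setting bites. An accumulation point would force the minimal solution, which is continuous by \cref{Thm:existence}, to traverse an instantaneous loop of switches, which the non-free loop condition in \cref{Assump:funcs}.3 prohibits; hence $\tau^*_n\uparrow T$ and only finitely many switches occur on any $[\nu,t]$ with $t<T$. The square-integrability of the realised costs then follows from the polynomial growth in \cref{Assump:funcs}.2 combined with the moment bound \cref{Prop:q.th.integrable} and the upper estimate \cref{Prop:cost.upper.bounded}, after which the continuity of $Y^{\nu,\eta,i}$ and $K^{\nu,\eta,i}$ closes the equality at the accumulation time $T$.
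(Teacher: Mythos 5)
Your architecture for both claims is essentially the paper's: concatenate restarted single-regime solutions along the control via the flow property (\cref{Prop:minimal.uniqueness}), absorb the switching costs with the obstacle at the switch times, use monotonicity of $K$ for the inequality and the Skorokhod condition plus the defining equalities of $\alpha^*$ for the equality, and rule out accumulation of $(\tau^*_n)$ by continuity of the minimal solution and the non-free loop condition. Your interval-by-interval use of the comparison theorem (and, for part 2, uniqueness of the Lipschitz BSDE on each leg once $K\equiv 0$) is an acceptable repackaging of what the paper does in one stroke: a global It\^o expansion followed by a componentwise linearization of $\varsigma$ in $z$ and a Girsanov change of measure; the two are equivalent in content, since that linearization is exactly how the comparison theorem is proved. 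One point of care, because here (unlike in \cite{hamadene2010switching}) the state depends on the regime: the inequality you call ``the obstacle inequality'', $Y^{\tau_k,X_{\tau_k},i_k}_{\tau_{k+1}}\geq Y^{\tau_{k+1},X_{\tau_{k+1}},i_{k+1}}_{\tau_{k+1}}-c_{i_k,i_{k+1}}(\tau_{k+1},X_{\tau_{k+1}})$, is not the obstacle of the system started at $(\tau_k,X_{\tau_k})$ (whose $i_{k+1}$-component sits on the state evolved under regime $i_{k+1}$, not the controlled state). It must be obtained in two steps, as in the paper: first the flow identity for the \emph{frozen} regime, $Y^{\tau_k,X_{\tau_k},i_k}_{\tau_{k+1}}=Y^{\tau_{k+1},X^{\alpha}_{\tau_{k+1}},i_k}_{\tau_{k+1}}$, and then the obstacle of the freshly restarted system at its \emph{initial} time, where all components share the controlled state. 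Your sketch has the right ingredients, so this is a presentational refinement rather than a flaw.

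The genuine gap is the admissibility of $\alpha^*$, specifically the square-integrability $\expect\big[|\sum_{k}c_{i^*_{k-1},i^*_k}(\tau^*_k,X^*_{\tau^*_k})|^2\big]<\infty$ required by \cref{eq:admissible.control.set}. The tools you cite cannot deliver it. \cref{Prop:cost.upper.bounded} is stated, and proved (via dominated convergence), only for controls that are \emph{already} in $\control{\nu}{\iota}$, so invoking it for $\alpha^*$ is circular; moreover it controls only the first moment of the benefit side, not a second moment of the signed sum. Likewise, the non-free loop argument yields only $N^*<\infty$ $\prob$-a.s.\ with no moment bound on $N^*$, so polynomial growth of the individual costs gives at best the pathwise bound $N^*C_f(1+\max_{s}\|X^*_s\|^q)$, which is not known to be square integrable. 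The paper closes this step by a different idea: once $N^*<\infty$ a.s.\ is known, the telescoped \emph{identity} \cref{eq:now.Y.eq} expresses the realized discounted cost sum as
\begin{equation*}
\sum_{t\leq\tau^*_k\leq T}D^{\nu,\eta,\iota,\alpha^*}_{\tau^*_k}c_{i^*_{k-1},i^*_k}(\tau^*_k,X^*_{\tau^*_k})
= -D^{\nu,\eta,\iota,\alpha^*}_{t}Y^{\nu,\eta,\iota}_t + D^{\nu,\eta,\iota,\alpha^*}_{T}g(X^*_T,\alpha^*_T)
+\int_t^T D^{\nu,\eta,\iota,\alpha^*}_s\Big(\psi-\varsigma(\cdot,\overline Z_s)\Big)\diff s
-\int_t^T D^{\nu,\eta,\iota,\alpha^*}_s\overline Z_s^\prime\diff W_s,
\end{equation*}
and square-integrability then follows because $Y^{\nu,\eta,\iota}\in\SSd{\nu}$, $g(X^*_T,\alpha^*_T)$ is square integrable by polynomial growth and \cref{Prop:q.th.integrable}, and the concatenated process $\overline Z^{\nu,\eta,\iota,\alpha^*}$ lies in $\HH{\nu}{d}$. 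You need to replace your admissibility step by this argument (or an equivalent one that extracts integrability from the BSDE identity rather than from a priori cost estimates); note also that the same identity, not dominated convergence over truncations, is what lets the paper pass to the limit cleanly in part 1.
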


By \cref{Prop:excess.Y}, we obtain
\begin{equation*}
Y_t^{t,x,i} = \sup_{\alpha\in\control{t}{i}}Y_t^{t,x,i,\alpha} = \sup_{\alpha\in\control{t}{i}}J(t,x,i,\alpha),
\end{equation*}
for all $(t,x,i)\in \overline{\mathcal K_T}$.
Hence, $Y_t^{t,x,i}$ is the value function of the optimal switching problem under ambiguity.
Furthermore, $\alpha^*$ defined in \cref{Prop:excess.Y}.2 is an optimal control of the problem.

We next study a relationship between the multidimensional RBSDE \cref{eq:multidimensional.RBSDE} and partial differential equations (hereafter PDEs).
Let $u:[0,T]\times\real^d\times\mathcal I\rightarrow\real$ be a function.
Consider the following PDE,

\begin{align}\label{eq:PDE}
&\min\{-u_t(t,x,i) - \ig^i u(t,x,i) - \psi(t,x,i) + \rho(t,x,i)u(t,x,i) \nonumber \\
&\qquad\qquad + \varsigma(t,x,i,\sigma^\prime(t,x,i)\nabla u(t,x,i)), \\
&\qquad\qquad\qquad\qquad u(t,x,i) - \max_{j\in\mathcal I\setminus\{i\}}\{u(t,x,j) - c_{i,j}(t,x)\}\} = 0,\quad (t,x,i)\in\overline{\mathcal K_T}, \nonumber\\
&u(T,x,i) = g(x,i),\nonumber 
\end{align}
where $u_t(t,x,i) = \frac{\p u(t,x,i)}{\p t},\; \nabla u(t,x,i) = \frac{\p u(t,x,i)}{\p x}$ and
\[
\ig^i f(t,x) = (\nabla f(t,x))^\prime b(t,x,i) + \frac{1}{2}\mathrm{tr}\left(\sigma\sigma^\prime(t,x,i)\frac{\p f(t,x)}{\p x \p x^\prime}\right).
\]

If the PDE \cref{eq:PDE} has a classical solution, then we can easily show that this solution is a value function of the optimal switching problem under ambiguity.
However, the classical solution does not always exist.
We shall consider a more general concept of solutions, i.e., a viscosity solution.
Let $C^{1,2}([0,T)\times \real^d\times \mathcal I)$ be a set of functions that are continuously differentiable with respect to $t$ and twice continuously differentiable with respect $x$ on $[0,T)\times \real^d\times \mathcal I$.

\begin{definition}[Viscosity solution]\label{Def:viscosity.solution}
\
\begin{enumerate}
 \item \textit{Viscosity supersolution}.\\ A lower semi-continuous function $(u(\cdot,\cdot,1),\dots,u(\cdot,\cdot,I))$ is a viscosity supersolution of the PDE \cref{eq:PDE} if for any $(t,x,i)\in[0,T)\times\real^d\times\mathcal I$ and any $\varphi\in C^{1,2}([0,T)\times \real^d\times \mathcal I)$ such that $v(\cdot,\cdot,i) - \varphi(\cdot,\cdot,i)$ attains a local minimum at $(t,x)$ for all $i\in\mathcal I$,
 \begin{align*}
  &\min\{-\varphi_t(t,x,i) - \mathcal L^i\varphi(t,x,i)  - \psi(t,x,i) + \rho(t,x,i)u(t,x,i) + \varsigma(t,x,i,\sigma^\prime(t,x,i)\nabla \varphi(t,x,i)),\\
  &\qquad\qquad u(t,x,i) - \max_{j\in\mathcal I\setminus\{i\}}\{u(t,x,j) - c_{i,j}(t,x)\}\} \geq 0, \\
  &u(T,x,i) \geq g(x,i).
 \end{align*}
 \item \textit{Viscosity subsolution}.\\ A upper semi-continuous function $(u(\cdot,\cdot,1),\dots,u(\cdot,\cdot,I))$ is a viscosity subsolution of the PDE \cref{eq:PDE} if for any $(t,x,i)\in[0,T)\times\real^d\times\mathcal I$ and any $\varphi\in C^{1,2}([0,T)\times \real^d\times \mathcal I)$ such that $v(\cdot,\cdot,i) - \varphi(\cdot,\cdot,i)$ attains a local maximum at $(t,x)$ for all $i\in\mathcal I$,
 \begin{align*}
  &\min\{-\varphi_t(t,x,i) - \mathcal L^i\varphi(t,x,i)   - \psi(t,x,i) + \rho(t,x,i)u(t,x,i) + \varsigma(t,x,i,\sigma^\prime(t,x,i)\nabla \varphi(t,x,i)),\\
  &\qquad\qquad u(t,x,i) - \max_{j\in\mathcal I\setminus\{i\}}\{u(t,x,j) - c_{i,j}(t,x)\}\} \leq 0, \\
  &u(T,x,i) \leq g(x,i).
 \end{align*}
 \item \textit{Viscosity solution}.\\ A locally bounded function $(u(\cdot,\cdot,1),\dots,u(\cdot,\cdot,I))$ is a viscosity solution of the PDE \cref{eq:PDE} if its lower semi-continuous envelope is a viscosity supersolution of the PDE \cref{eq:PDE}, and if its upper semi-continuous envelope is a viscosity subsolution of the PDE \cref{eq:PDE}.
\end{enumerate}
\end{definition}

For details of the viscosity solutions, we refer to \cite{crandall1992user}. 
We define a set of functions $\mathcal{CP}([0,T]\times\real^d)$ as follows.
\begin{equation*}
\mathcal{CP}([0,T]\times\real^d)
:=\left\{f:[0,T]\times\real^d\rightarrow\real\;\Big|\;
\begin{array}{c}
f \mbox{ is jointly continuous and} \\
\mbox{there exist positive constants $C$ and $q$}\\
\mbox{such that }|f(t,x)| \leq C(1 + \|x\|^q),\\
\mbox{for all }(t,x)\in[0,T]\times\real^d.
\end{array}
\right\}.
\end{equation*}
Let
\begin{equation*}
v(t,x,i) := Y_t^{t,x,i},
\end{equation*}
for $(t,x,i)\in\overline{\mathcal K_T}$, where $Y_t^{t,x,i}$ is a minimum solution of the multidimensional RBSDE \cref{eq:multidimensional.RBSDE}.

Now, we will prove that $v$ is a unique viscosity solution of the PDE \cref{eq:PDE} in $\mathcal{CP}([0,T]\times\real^d)$.
In \cite{hamadene2013viscosity}, the viscosity solution of the PDE similar to \cref{eq:PDE} is investigated.
Main differences between our model and the model in \cite{hamadene2013viscosity} are as follows.
\begin{enumerate}
 \item The model in \cite{hamadene2013viscosity} allows for a generator of RBSDE for $Y^i$ to depend on the other $Y^j$, but we consider the case when it does not depend on the other $Y^j$.
 \item The model in \cite{hamadene2013viscosity} assumes that switching costs are non-negative, but we allow negative switching costs.
 \item The model in \cite{hamadene2013viscosity} assumes that a dynamics of the forward variable $X$ does not depend on a control process, but we allow the dynamics of $X$ to depend on the control.
\end{enumerate}
In fact, the results of \cite{hamadene2013viscosity} can be applied to our model.
In \cite{hamadene2013viscosity}, it is shown that there exists a unique viscosity solution without using non-negativity of the switching costs.
Furthermore, the controllability of $X$ does not affect to the results in \cite{hamadene2013viscosity}.
Hence, we can provide the existence and uniqueness of the solution to the PDE \cref{eq:multidimensional.RBSDE} in the viscosity sense and prove that the value function is a unique viscosity solution to \cref{eq:multidimensional.RBSDE}.

\begin{proposition}\label{Prop:visocity.property}
Suppose \cref{Assump:X.SDE,Assump:rect,Assump:funcs,Assump:varsigma}.
Let
\[
\vec v := (v(\cdot,\cdot,1),\dots,v(\cdot,\cdot,I)).
\]
Then, $\vec v$ is a unique viscosity solution to the PDE \cref{eq:PDE} in $(\mathcal{CP}([0,T]\times\real^d))^I$.
\end{proposition}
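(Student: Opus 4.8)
The plan is to follow the standard route connecting reflected BSDEs to obstacle-type PDEs through the nonlinear Feynman–Kac formula, and then to import the comparison machinery of \cite{hamadene2013viscosity}. The first step is to record the Markovian flow property of the minimal solution. Taking $\nu = t$, $\widetilde\nu = s$ and $\eta = x$ in \cref{Prop:minimal.uniqueness} gives $Y_s^{t,x,i} = Y_s^{s,X_s^{t,x,i},i} = v(s,X_s^{t,x,i},i)$, $\prob$-a.s. for $t\leq s\leq T$, so that $v$ is a genuine deterministic function of the current state and $Y^{t,x,i}$ is exactly its pathwise evaluation along the regime-$i$ diffusion. Using this identity together with the stability estimates for RBSDEs in \cite{el1997reflected} and the a priori bound $M^{\nu,\eta}$ built in the proof of \cref{Thm:existence}, I would show $\vec v \in (\mathcal{CP}([0,T]\times\real^d))^I$: polynomial growth is inherited from $|Y^{t,x,i}_t| \leq C_M(1+\sum_{j}\|X_t^{t,x,j}\|^q)$ and \cref{Prop:q.th.integrable}, while joint continuity in $(t,x)$ follows from continuous dependence of the RBSDE solution on its initial data and coefficients.

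Next I would verify that $\vec v$ is a viscosity solution. The obstacle inequality in \cref{eq:multidimensional.RBSDE}, namely $Y_t^{t,x,i} \geq \max_{j\neq i}\{Y_t^{t,x,j}-c_{i,j}(t,X_t)\}$, passes directly to $v(t,x,i) \geq \max_{j\neq i}\{v(t,x,j)-c_{i,j}(t,x)\}$, which supplies the second argument of the $\min$ in \cref{eq:PDE} for both sub- and supersolutions. For the differential part I would fix $i$, take $\varphi \in C^{1,2}$ touching $v(\cdot,\cdot,i)$ from below (resp. above) at an interior point, and argue locally on the set where the reflection is inactive: there $Y^{t,x,i}$ solves the non-reflected BSDE \cref{eq:BSDE.minimal} with the regime-$i$ generator, whose Lipschitz-in-$z$ and polynomial-growth properties come from \cref{Assump:varsigma}. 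Applying It\^o's formula to $\varphi(s,X_s^{t,x,i},i)$ and the BSDE comparison theorem in the usual manner yields $-\varphi_t - \ig^i\varphi - \psi + \rho v + \varsigma(\cdot,\sigma'\nabla\varphi) \geq 0$ (resp. $\leq 0$), which combined with the obstacle inequality gives the full $\min$-condition.

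Finally, uniqueness follows from a comparison principle for the system \cref{eq:PDE} in $(\mathcal{CP})^I$, and here I would appeal to \cite{hamadene2013viscosity} after checking that none of the three structural differences with their setting is essential. That our generator for $v(\cdot,\cdot,i)$ does not depend on $v(\cdot,\cdot,j)$, $j\neq i$, is a special case of theirs; the regime-dependence of the diffusion merely replaces the single operator by the family $\ig^i$, which is harmless since the doubling-of-variables estimate is carried out component by component using the Lipschitz and linear-growth bounds of \cref{Assump:X.SDE}; and their comparison argument does not rely on non-negativity of the costs, only on the continuity and polynomial growth of $c$ together with its triangular structure, all of which we have through \cref{Assump:funcs}. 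The main obstacle is precisely this last point: with negative costs one must ensure that the coupling through the obstacle $\max_{j\neq i}\{\cdot - c_{i,j}\}$ does not destroy the comparison when several regimes are compared at once. The non-free loop condition (\cref{Assump:funcs}.3) and the strong triangular condition (\cref{Assump:funcs}.4) are exactly what rule out the degenerate switching chains that would otherwise break the argument, so the comparison principle of \cite{hamadene2013viscosity} applies and forces any two viscosity solutions in $(\mathcal{CP})^I$ to agree; since $\vec v$ is one such solution, it is the unique one.
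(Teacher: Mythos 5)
There is a genuine gap at the very first step: your claim that joint continuity of $\vec v$ follows from ``continuous dependence of the RBSDE solution on its initial data and coefficients.'' For a \emph{system} of RBSDEs with interconnected obstacles this argument is circular: the obstacle for component $i$ at $(t,x)$ is $\max_{j\neq i}\{v(t,x,j)-c_{i,j}(t,x)\}$, i.e.\ it is built from the very functions whose continuity you are trying to establish, so the standard stability estimates of \cite{el1997reflected} (which require a \emph{given} continuous obstacle) do not apply. Nor can you get continuity from the Picard scheme: the iterates $v_n$ are continuous and increase pointwise to $v$, but a monotone pointwise limit of continuous functions is only lower semi-continuous --- Dini's theorem would require knowing the continuity of the limit beforehand. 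This is precisely why the paper does \emph{not} attempt to prove continuity directly: it only records that $\vec v$ is lsc with polynomial growth (as a monotone limit squeezed between $v_0$ and the dominating function $\overline v(t,x)=M^{t,x}_t$), then invokes Theorem 1 of \cite{hamadene2013viscosity} --- which is formulated for semicontinuous envelopes, exactly matching \cref{Def:viscosity.solution} --- to conclude $\vec v$ is a viscosity solution, and finally obtains \emph{both} continuity and uniqueness a posteriori from the comparison result (Corollary 1 of \cite{hamadene2013viscosity}). Your plan inverts this order (continuity first, then the PDE, then comparison only for uniqueness), and the inversion founders at the continuity step.

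The gap propagates: your direct It\^o/comparison verification of the viscosity property in the second step presupposes that $v$ is continuous so that test functions can touch it in the classical sense, and your appeal to the flow identity $Y^{t,x,i}_s=v(s,X^{t,x,i}_s,i)$ from \cref{Prop:minimal.uniqueness} is fine but does not substitute for that regularity. (A smaller inaccuracy in the same step: restricting to the set where the reflection is inactive is only needed, and only legitimate, for the \emph{sub}solution inequality; the \emph{super}solution inequality must hold at every point, and there it is the monotonicity of $K$, not inactivity of the obstacle, that saves the argument.) Your discussion of uniqueness is consistent with the paper --- the comparison principle in \cite{hamadene2013viscosity} indeed does not use non-negativity of the costs, and the regime-dependence of $X$ is harmless --- but to make the whole proof close you must either prove continuity of $\vec v$ by hand (a nontrivial estimate in its own right, as in the optimal-switching literature), or restructure the argument as the paper does, working with semicontinuous envelopes and letting the comparison theorem deliver continuity.
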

\begin{proof}[Proof of \cref{Prop:visocity.property}]
Let $(t,x,i)\in\overline{\mathcal K_T}$.
Let $(Y^{t,x,i,n})_{n\geq 0}$ be a sequence of the Picard's iterations defined in \cref{Thm:existence}.
Then, by \cite{el1997backward}, there exists $v_n(\cdot,\cdot,i)\in\mathcal{CP}([0,T]\times\real^d)$ for all $n\geq 0$ and $i\in\mathcal I$ such that
\[
Y_s^{t,x,i,n} = v_n(t,X_s^{t,x,i},i),
\]
for all $s\in[t,T]$.
Furthermore, we define $\overline v \in \mathcal{CP}([0,T]\times\real^d)$ as
\[
\overline v(t,x) := M^{t,x}_t,
\]
where $M^{t,x}$ is defined in \cref{Thm:existence}.
Recall that $Y^{t,x,i,n} \rightarrow Y^{t,x,i}$ in the mean-square sense.
Therefore, $\vec v$ is a lower semi-continuous function and it satisfies the polynomial growth condition with respect to $x$ since $v_0\leq v_n\leq \overline v$ and $v_n\leq v_{n+1}$ for all $n\geq 1$.

On the other hand, Corollary 1 in \cite{hamadene2013viscosity} provides the continuity and uniqueness of a viscosity solution to the PDE \cref{eq:multidimensional.RBSDE}.
Furthermore, by Theorem 1 in \cite{hamadene2013viscosity}, $\vec v$ is a viscosity solution of the PDE \cref{eq:multidimensional.RBSDE}.
Hence, we conclude that $\vec v$ is a unique viscosity solution of the PDE \cref{eq:multidimensional.RBSDE} in $(\mathcal{CP}([0,T]\times\real^d))^I$.
\end{proof}

\section{The Infinite Horizon Problem}\label{sec:infinite.horizon}
In this section, we consider the infinite horizon optimal switching problem under ambiguity.
Let $\controlinf{\nu}{i}$ be a set of admissible controls like \cref{eq:control} but $\tau_k\rightarrow\infty\;\prob$-almost surely.
Furthermore, we assume as follows.
\begin{assumption}\label{Assump:infinite}
\
\begin{enumerate}
\item \textit{Time-homogeneity.} $b,\sigma,\psi,\phi,\varsigma,$ and $c$ do not depend on $t$. There exists a positive constant $\rho$ such that
      \[
       \rho(t,x,i) = \rho > 0,
      \]
      for all $(t,x,i)\in[0,\infty)\times\real^d\times\mathcal I$. $\Theta_t$ only depends on the values of $X_t$ and $\alpha_t$. We denote $\Theta_t$ with $X_t = x\in\real^d$ and $\alpha_t=i\in\mathcal I$ by $\Theta^{x,i}$.
\item \textit{Sufficiently large discount.} $\rho$ is sufficiently large in the following sense. There exist constants $C\geq 0$ and $c_\infty > 0$ such that
      \begin{align}
       \expect\left[e^{-\rho t}\zeta_t^{\theta,0}\|X_t^{x,i,\alpha}\|^q\right] &\leq C(1+\|x\|^q)e^{-c_\infty t}, \label{eq:large.discount.01}\\
       \expect\left[\sup_{s\geq t}e^{-\rho s}\|X_s^{x,i,\alpha}\|^q\right] &\leq C(1+\|x\|^q)e^{-c_\infty t}, \label{eq:large.discount.02}
      \end{align}
      for all $(t,x,i)\in[0,\infty)\times\real^d\times\mathcal I,\;\theta\in\Theta[0,\infty)$ and $\alpha\in\controlinf{0}{i}$, where $X^{x,i,\alpha}$ is a solution to the SDE \cref{eq:X.SDE} starting at $X_0^{x,i,\alpha}=x$ and controlled by $\alpha\in\controlinf{0}{i}$.
\item \textit{Polynomial growth conditions.} $\psi,\phi$ and $c$ are continuous and satisfy the polynomial growth condition in \cref{Assump:funcs}.2.
\item \textit{Non-negative reward condition.}
      \begin{equation}\label{eq:psi.varsigma.condition}
       \psi(x,i) - \varsigma(x,i,0) \geq 0,
      \end{equation}
      for all $(x,i)\in\real^d\times\mathcal I$.
\item \textit{Temporary terminal condition.} There exist polynomial growth functions $g(x,1),$ $\dots,$ $g(x,I)$ such that
      \
      \begin{enumerate}
      \item \begin{equation}\label{eq:terminal.non.positive.infinite}
             g(x,i)\leq 0,
            \end{equation}
            for all $i\in\mathcal I$ and $x\in\real^d$;
      \item \begin{equation}\label{eq:terminal.freeloop.infinite}
             g(x,i) \geq \max_{j\in\mathcal I\setminus\{i\}}\{g(x,j) - c_{i,j}(x)\},
            \end{equation}
            for all $i\in\mathcal I$ and $x\in\real^d$;
      \item \begin{equation}\label{eq:terminal.submartingale.infinite}
             \inf_{\theta_t\in\Theta[T,\widetilde T)}\expect\left[e^{-\rho \widetilde T}\zeta_{\widetilde T}^{\theta,T}g(X_{\widetilde T}^{\nu,\eta,i},i)\;\Big|\;\filt_T\right] \geq e^{-\rho T}g(X_T^{\nu,\eta,i},i),
            \end{equation}
            for all $0\leq T\leq \widetilde T,\;\nu\in\mathcal T_0^T,\;\eta\in L_\nu^{2q}(\real^d)$ and $i\in\mathcal I$.
      \end{enumerate}
\item \textit{Non-free loop condition in the infinite horizon.} For all finite loops $(i_0,i_1,\dots,i_m)\in\mathcal I^{m+1}$ with $i_0 = i_m$ and $i_0\neq i_1$ and for all $x\in\real^d$, $c$ satisfies
      \begin{equation*}
             c_{i_0,i_1}(x) + \cdots + c_{i_{m-1},i_m}(x) > 0.
      \end{equation*}
\item \textit{Strong triangular condition in the infinite horizon.}
      \begin{equation*}
       c_{k,j}(x) \leq c_{k,i}(x) - C_i(1 + C_{q,X}^\infty(1 + \|x\|^q)),
      \end{equation*}
      for all $i\in\mathcal N,\;(j,k)\in\mathcal I$ and $x\in\real^d$ with $j\neq i$ and $k\neq j$, where $C_i,C_{q,X}^\infty$ and $q$ are defined in \cref{Prop:q.th.integrable,Assump:funcs}.
\end{enumerate}
\end{assumption}

The time-homogeneity (\cref{Assump:infinite}.1) is a standard condition.
With taking account of the time-homogeneity and the Markov property of $X$, the starting time does not matter to the optimal switching problem.
The sufficiently large discount condition (\cref{Assump:infinite}.2) is also standard.
If it is not postulated, then the value function can diverge.
Therefore, we need this condition to consider meaningful problems.
However, the condition \cref{eq:large.discount.01} is slightly strong.
Indeed, it is sufficient to satisfy \cref{eq:large.discount.01} with $\theta = 0$ and \cref{eq:large.discount.02} in order to prove the finiteness of the value function (\cref{Prop:value.func.poly.growth}).
The condition \cref{eq:large.discount.01} is needed to prove the convergent property of the value function from the finite horizon to the infinite horizon (\cref{Prop:convergence.verification}).

Under the non-negative reward condition (\cref{Assump:infinite}.4), the rewards of the optimal switching problem in the infinite horizon is non-negative.
Indeed, by the definition of $\varsigma$, we have
\[
 \psi(X^{x,i,\alpha}_t,\alpha_t) - \theta_t^\prime\phi(X_t^{x,i,\alpha},\alpha_t) 
 \geq \psi(X^{x,i,\alpha}_t,\alpha_t) - \varsigma(X^{x,i,\alpha}_t,\alpha_t,0)
 \geq 0,
\]
for all $(t,x,i)\in[0,\infty)\times\real^d\times\mathcal I,\;\theta_t\in\Theta_t$ and $\alpha\in\controlinf{0}{i}$.
The non-negative reward condition guarantees that an optimal switching problem in a longer finite horizon has a large value function.
This restriction is needed to exchange the orders of taking limits of Picard's iterations $n$ and time horizons $T$.
This is slightly restrictive, however, it can be replaced to a lower bounded condition (\cref{Remark:lower.bounded.reward}).

The temporary terminal conditions (\cref{Assump:infinite}.5) are assumed for purely technical reasons.
However, they are not so restrictive.
If all switching costs are non negative, then we can choose $g(x,i) = 0$ for all $(x,i)\in\real^d\times\mathcal I$ satisfying all the temporary terminal conditions.
Once we find the constants $g_1,\dots,g_I$ satisfying the inequality \cref{eq:terminal.freeloop.infinite},
then $g_1-\max_{j\in\mathcal I}g_j,\dots,g_I-\max_{j\in\mathcal I}g_j$ satisfy all the temporary terminal conditions.
If $g(x,i)$ satisfies the inequalities \cref{eq:terminal.non.positive.infinite,eq:terminal.freeloop.infinite} and if $g(\cdot,i)$ is twice continuously differentiable for all $i\in\mathcal I$,
then one of sufficient conditions to satisfy the inequality \cref{eq:terminal.submartingale.infinite} is
\begin{equation}\label{eq:sufficient.condition.terminal.submartingale.infinite}
\ig^i g(x,i) - \rho g(x,i) - (\nabla g(x,i))^\prime\sigma(x,i)\theta \geq 0,
\end{equation}
for all $(x,i)\in\real^d\times\mathcal I$ and $\theta\in\Theta^{x,i}$.
The condition \cref{eq:sufficient.condition.terminal.submartingale.infinite} can be derived by applying the Ito's lemma to $e^{-\rho t}\zeta^\theta_tg(X_t,i)$.
If the switching costs are constants, we can easily find the constants satisfying the temporary terminal conditions.
On the other hand, in the major applications such as the buy low and sell high problem and the pair-trading problem, we can also find the functions satisfying the temporary terminal conditions.
The other assumptions are essentially the same as the assumptions in the finite horizon problem.

The objective function in the infinite horizon is
\begin{multline*}
J(x,i,\alpha) = \inf_{\theta\in\Theta[0,\infty)}\expect\biggl[\int_0^\infty e^{-\rho t}\zeta_t^\theta\Big(\psi(X_t^{x,i,\alpha},\alpha_t) - \theta_t^\prime\phi(X_t^{x,i,\alpha},\alpha_t)\Big)\diff t \\
- \sum_{k=1}^\infty e^{-\rho\tau_k}\zeta_{\tau_k}^\theta c_{i_{k-1},i_k}(X_{\tau_k}^{x,i,\alpha})\biggr],
\end{multline*}
for $(x,i)\in\real^d\times\mathcal I$ and $\alpha\in\controlinf{0}{i}$.
The optimal switching problem under ambiguity in the infinite horizon is
\begin{equation}\label{eq:infinite.problem}
v^\infty(x,i) := \sup_{\alpha\in\controlinf{0}{i}}J(x,i,\alpha),
\end{equation}
for $(x,i)\in\real^d\times\mathcal I$.
We can easily show that $v^\infty$ is polynomial growth with respect to $x$.
\begin{proposition}\label{Prop:value.func.poly.growth}
Under \cref{Assump:X.SDE,Assump:infinite},
there exists a positive constant $C$ such that
\[
0\leq v^\infty(x,i)\leq C(1 + \|x\|^q),
\]
for all $x\in\real^d$ and $i\in\mathcal I$.
Thus, $v^\infty$ is polynomial growth with respect to $x$.
\end{proposition}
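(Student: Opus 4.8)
The plan is to prove the two inequalities separately: the lower bound $v^\infty(x,i)\geq 0$ by exhibiting one admissible control with non-negative objective, and the upper bound by dominating $J(x,i,\alpha)$ uniformly in $\alpha$ through the choice $\theta\equiv 0$ and a separate treatment of the running-reward and switching-cost parts.

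For the lower bound, first I would take the control $\bar\alpha$ that never switches (set $\tau_0=0$ and $\tau_k=+\infty$ for $k\geq 1$, so $\alpha_t\equiv i$); this belongs to $\controlinf{0}{i}$ because $\tau_k\to\infty$ and its cost sum is empty. For $\bar\alpha$ no switching cost is incurred, so
\[
J(x,i,\bar\alpha)=\inf_{\theta\in\Theta[0,\infty)}\expect\left[\int_0^\infty e^{-\rho t}\zeta_t^\theta\Big(\psi(X_t^{x,i,\bar\alpha},i)-\theta_t^\prime\phi(X_t^{x,i,\bar\alpha},i)\Big)\diff t\right].
\]
By the non-negative reward condition (\cref{Assump:infinite}.4) and the definition of $\varsigma$ one has $\psi(\cdot,i)-\theta_t^\prime\phi(\cdot,i)\geq\psi(\cdot,i)-\varsigma(\cdot,i,0)\geq 0$, while $e^{-\rho t}\zeta_t^\theta\geq 0$; hence every integral, and therefore the infimum over $\theta$, is non-negative, giving $v^\infty(x,i)\geq J(x,i,\bar\alpha)\geq 0$.

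For the upper bound, since $0\in\Theta_t$ by \cref{Assump:rect}.4, for every $\alpha\in\controlinf{0}{i}$ the infimum defining $J$ is at most its value at $\theta\equiv 0$ (where $\zeta^\theta\equiv 1$), so
\[
J(x,i,\alpha)\leq\expect\left[\int_0^\infty e^{-\rho t}\psi(X_t^{x,i,\alpha},\alpha_t)\diff t\right]+\expect\left[-\sum_{k=1}^\infty e^{-\rho\tau_k}c_{i_{k-1},i_k}(X_{\tau_k}^{x,i,\alpha})\right].
\]
The first term I would bound by the polynomial growth of $\psi$ (\cref{Assump:infinite}.3), Tonelli's theorem, and the discount estimate \cref{eq:large.discount.01} with $\theta=0$: this produces $C_f\int_0^\infty(e^{-\rho t}+C(1+\|x\|^q)e^{-c_\infty t})\diff t$, a constant multiple of $1+\|x\|^q$ independent of $\alpha$. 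The second term is the infinite-horizon counterpart of \cref{Prop:cost.upper.bounded}, and proving it is where the real work lies.

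I expect the main obstacle to be this infinite-horizon cost estimate, for which I would reproduce the telescoping argument of \cref{Prop:cost.upper.bounded} on the partial sums $\sum_{k=1}^n$. At each step I would split according to whether $\prob(i_{n-1}\in\mathcal N\mid\filt_{\tau_{n-1}})$ vanishes; when it is positive I would use $-c_{i_{n-1},i_n}(x)\leq C_{i_{n-1}}(1+\|x\|^q)$ together with the infinite-horizon strong triangular condition (\cref{Assump:infinite}.7) to absorb the (possibly negative) last cost into the previous one, collapsing the chain down to the single first switch $-e^{-\rho\tau_1}c_{i,i_1^*}(X_{\tau_1}^{x,i,\alpha})$. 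This term is controlled by $C_f\sup_{s\geq 0}e^{-\rho s}(1+\|X_s^{x,i,\alpha}\|^q)$, whose expectation is a constant multiple of $1+\|x\|^q$ by \cref{eq:large.discount.02} at $t=0$; admissibility of $\alpha$ and the dominated convergence theorem then pass this uniform-in-$n$ bound to the full series, exactly as at the end of \cref{Prop:cost.upper.bounded}. The delicate point is horizon-independence: unlike the finite-horizon bound, whose constant grows like $e^{C_q T}$, here the decay factors in \cref{Assump:infinite}.2 and the calibration of $C_i$ and $C_{q,X}^\infty$ in \cref{Assump:infinite}.7 are exactly what keep the constant bounded, so that combining the two terms yields $v^\infty(x,i)=\sup_\alpha J(x,i,\alpha)\leq C(1+\|x\|^q)$.
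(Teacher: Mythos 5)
Your proposal is correct and follows essentially the same route as the paper's (much terser) proof: the paper also gets non-negativity from the non-negative reward condition, and the upper bound by passing to $\theta\equiv 0$ and invoking the polynomial growth of $\psi$ and $c$ together with the strong triangular condition to bound the discounted reward and cost terms by $C(1+\|x\|^q)$. The infinite-horizon telescoping cost estimate you spell out is exactly the step the paper leaves implicit (cf. \cref{Prop:cost.upper.bounded} and \cref{Remark:cost.upper.bounded.infinite}), so your write-up is a legitimate filling-in of the same argument rather than a different one.
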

\begin{proof}[Proof of \cref{Prop:value.func.poly.growth}]
It is clear that $v^\infty$ is non-negative by the non-negative reward condition.
Fix an arbitrary $x\in\real^d$ and $i\in\mathcal I$.
Then, by the polynomial growth condition of $\psi$ and $c$ and the strong triangular condition, we have
\begin{align*}
J(x,i,\alpha)
&\leq \expect\left[\int_0^\infty e^{-\rho t}\psi(X_t^{x,i,\alpha},\alpha_t)\diff t - \sum_{k=1}^\infty e^{-\rho\tau_k} c_{i_{k-1},i_k}(X_{\tau_k}^{x,i,\alpha})\right] \leq C(1 + \|x\|^q),
\end{align*}
for all $\alpha\in\controlinf{0}{i}$, where $C$ is a positive constant not depending on $x,i$ and $\alpha$.
Hence, we obtain the desired result.
\end{proof}

\begin{remark}\label{Remark:lower.bounded.reward}
\Cref{Assump:infinite}.6 (the inequality \cref{eq:psi.varsigma.condition}) can be replaced to a lower bounded condition. We assume that there exists some constant $c_{\psi,\varsigma}$ such that
\[
\psi(x,i) - \varsigma(x,i,0) \geq c_{\psi,\varsigma},
\]
for all $(x,i)\in\real^d\times\mathcal I$.
Then,
\begin{align*}
 J(x,i,\alpha) - \frac{c_{\psi,\varsigma}}{\rho}
 &= J(x,i,\alpha) - \int_0^\infty e^{-\rho t}c_{\psi,\varsigma}\diff t \\
 &= \inf_{\theta\in\Theta[0,\infty)}\expect\left[\int_0^\infty e^{-\rho t}\zeta_t^\theta\Big(\psi(X_t^{x,i,\alpha},\alpha_t) - \theta_t^\prime\phi(X_t^{x,i,\alpha},\alpha_t) - c_{\psi,\varsigma}\Big)\diff t \right.\\
&\qquad\qquad\qquad \left.- \sum_{k=1}^\infty e^{-\rho\tau_k}\zeta_{\tau_k}^\theta c_{i_{k-1},i_k}(X_{\tau_k}^{x,i,\alpha})\right],
\end{align*}
for all $(x,i)\in\real^d\times\mathcal I$ and $\alpha\in\controlinf{0}{i}$.
By the definition $\varsigma$, we have
\begin{align*}
 \psi(X^{x,i,\alpha}_t,\alpha_t) - \theta_t^\prime\phi(X_t^{x,i,\alpha},\alpha_t) - c_{\psi,\varsigma} 
 \geq \psi(X^{x,i,\alpha}_t,\alpha_t) - \varsigma(X^{x,i,\alpha}_t,\alpha_t,0) - c_{\psi,\varsigma} 
 \geq 0,
\end{align*}
for all $(t,x,i)\in[0,\infty)\times\real^d\times\mathcal I,\;\theta_t\in\Theta_t$ and $\alpha\in\controlinf{0}{i}$.
Hence, we can replace the original rewards to non-negative rewards.
$c_{\psi,\varsigma}$ may be negative, but it is finite.
\end{remark}

\begin{remark}\label{Remark:cost.upper.bounded.infinite}
Similarly to \cref{Remark:cost.upper.bounded}, the strong triangular condition in the infinite horizon is not necessarily needed.
Instead of the strong triangular condition, it is sufficient to hold the following inequality
\[
\expect\left[-\sum_{k=1}^\infty e^{-\rho \tau_k}c_{i_{k-1},i_k}(X_{\tau_k}^{x,i,\alpha})\right] \leq C(1 + \|x\|^q),
\]
for all $x\in\real^d,\;i\in\mathcal I$ and $\alpha\in\controlinf{0}{i}$, where $C$ is a positive constant not depending on $(x,i)$ and $\alpha$.
Furthermore, under the above inequality, we do not also need the inequality \cref{eq:large.discount.02}.
\end{remark}

We consider the following multidimensional RBSDE on $[\nu,T]$ for $\nu\in \mathcal T_0^T$ and $\eta\in L_\nu^{2q}(\real^d)$,
\begin{align}\label{eq:convergence.RBSDE}
-\diff \widehat Y^{T,\nu,\eta,i}_t &= \Big(\psi(X_t^{\nu,\eta,i},i) - \rho \widehat Y^{T,\nu,\eta,i}_t - \varsigma(X_t^{\nu,\eta,i},i,\widehat Z_t^{T,\nu,\eta,i})\Big)\diff t \nonumber\\
&\qquad\qquad- (\widehat Z_t^{T,\nu,\eta,i})^\prime \diff W_t + \diff \widehat K^{T,\nu,\eta,i},\; t\in[\nu,T],\nonumber\\
\widehat Y^{T,\nu,\eta,i}_T &= g(X_T^{\nu,\eta,i},i),\quad \widehat K^{T,\nu,\eta,i}_\nu = 0,\nonumber \\
\widehat Y^{T,\nu,\eta,i}_t &\geq \max_{j\in\mathcal I\setminus\{i\}}\left\{\widehat Y^{T,\nu,\eta,j}_t - c_{i,j}(X_t^{\nu,\eta,i})\right\},\; t\in[\nu,T], \\
&\int_0^T\Big(\widehat Y^{T,\nu,\eta,i}_t - \max_{j\in\mathcal I\setminus\{i\}}\left\{\widehat Y^{T,\nu,\eta,j}_t - c_{i,j}(X_t^{\nu,\eta,i})\right\}\Big)\diff t = 0,\nonumber \\
&(\widehat Y^{T,\nu,\eta,i},\widehat Z^{T,\nu,\eta,i},\widehat K^{T,\nu,\eta,i}) \in \SSd{\nu}\times\HH{\nu}{d}\times\KK{\nu}, \quad i \in\mathcal I, \nonumber
\end{align}
where $g$ is a function satisfying the temporary terminal conditions.
By \cref{Thm:existence} and \cref{Prop:minimal.uniqueness},
there exists a unique minimum solution of the multidimensional RBSDE \cref{eq:convergence.RBSDE}.
Now, we show that the solution to the multidimensional RBSDE \cref{eq:convergence.RBSDE} converges to the value function \cref{eq:infinite.problem} as $T\rightarrow\infty$.

\begin{proposition}\label{Prop:convergence.verification}
Under \cref{Assump:X.SDE,Assump:rect,Assump:varsigma,Assump:infinite},
$\widehat Y_t^{T,\nu,\eta,\iota} \leq \widehat Y_t^{\widetilde T,\nu,\eta,\iota}$ for all $\nu\in\mathcal T_0^T,\;\nu\leq t\leq T\leq \widetilde T,\;\eta\in L_\nu^{2q}(\real^d)$ and $\iota\in\widetilde{\mathcal I}_\nu$.
Furthermore, for all $(t,x,i)\in[0,\infty)\times\real^d\times\mathcal I$,
\begin{equation}\label{eq:convergence.verification}
\lim_{T\rightarrow\infty}\widehat Y^{T,t,x,i}_t = v^\infty (x,i).
\end{equation}
Finally, $v^\infty(\cdot,i)$ is continuous for all $i\in\mathcal I$.
\end{proposition}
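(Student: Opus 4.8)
The plan is to treat the three assertions in turn, exploiting throughout the verification identity of \cref{Prop:excess.Y} (which holds verbatim for the time-homogeneous data of \cref{Assump:infinite}), the flow property of \cref{Prop:minimal.uniqueness}, and the comparison principle for the Picard iterations constructed in \cref{Thm:existence}. For the monotonicity $\widehat Y_t^{T,\nu,\eta,\iota}\le\widehat Y_t^{\widetilde T,\nu,\eta,\iota}$, I would compare the two problems only through their data on $[\nu,T]$. By the flow property of \cref{Prop:minimal.uniqueness}, the restriction of $\widehat Y^{\widetilde T,\nu,\eta,i}$ to $[\nu,T]$ is again the minimal solution of the multidimensional RBSDE on $[\nu,T]$, now carrying the terminal value $\widehat Y_T^{\widetilde T,\nu,\eta,i}$ at time $T$, whereas $\widehat Y^{T,\nu,\eta,i}$ carries $g(X_T^{\nu,\eta,i},i)$. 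Hence it suffices to show the terminal domination $\widehat Y_T^{\widetilde T,\nu,\eta,i}\ge g(X_T^{\nu,\eta,i},i)$ and then invoke comparison component-by-component along the Picard scheme. The domination follows from verification: $\widehat Y_T^{\widetilde T,\nu,\eta,i}$ is the value at time $T$ of the switching problem on $[T,\widetilde T]$ started in regime $i$, so testing the no-switch control and dropping the running term via the non-negative reward condition \cref{eq:psi.varsigma.condition} leaves exactly the left-hand side of \cref{eq:terminal.submartingale.infinite}, which is $\ge g(X_T^{\nu,\eta,i},i)$. Comparison on $[\nu,T]$ then gives the claim for all $t\in[\nu,T]$.

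For the convergence \cref{eq:convergence.verification} I would fix $(t,x,i)$, use time-homogeneity to read $\widehat Y_t^{T,t,x,i}$ as a value depending only on the horizon length, and prove the two inequalities via the control representation $\widehat Y_t^{T,t,x,i}=\sup_{\alpha}\inf_{\theta}\expect[\cdots]$. For the direction ``$\le$'', given $\alpha\in\control{t}{i}$ I extend it to $\alpha^{\mathrm{ext}}\in\controlinf{t}{i}$ by never switching after $T$; for every matched $\theta$ the infinite-horizon payoff exceeds the finite one, since the extra running reward on $[T,\infty)$ is non-negative and the discarded terminal term $e^{-\rho(T-t)}\zeta g(X_T,\cdot)$ is non-positive by \cref{eq:terminal.non.positive.infinite}. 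Because, by rectangularity (\cref{Assump:rect}), the infimum over $\Theta[t,\infty)$ restricts to $\Theta[t,T]$ on the finite payoff, this yields $\widehat Y_t^{T,t,x,i}\le v^\infty(x,i)$, which simultaneously supplies the uniform upper bound needed for the monotone limit to exist and be finite (cf.\ \cref{Prop:value.func.poly.growth}). For ``$\ge$'', I pick an $\varepsilon$-optimal $\alpha^\infty\in\controlinf{t}{i}$, truncate it at $T$, and bound the gap $J^\infty(\alpha^\infty)-J^{[t,T]}(\alpha^{\mathrm{trunc}})$ by three tail terms: the running reward on $[T,\infty)$, the terminal contribution $e^{-\rho(T-t)}\zeta g(X_T,\cdot)$, and the switching costs at times $\tau_k>T$.

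The main obstacle is precisely the uniform control of these tails over all $\theta$ as $T\to\infty$. The running and terminal terms are handled by \cref{eq:large.discount.01}: since $\theta$ is bounded and $\psi,\phi,g$ have polynomial growth, both are dominated by $C(1+\|x\|^q)e^{-c_\infty T}$-type quantities that vanish uniformly over $\theta$ and on compact $x$-sets, which is exactly the role of \cref{eq:large.discount.01} that the finiteness proof did not require. The switching-cost tail $\expect[\sum_{\tau_k>T}e^{-\rho(\tau_k-t)}\zeta_{\tau_k}c_{i_{k-1},i_k}(X_{\tau_k})]$ is the delicate one; I would bound it through \cref{eq:large.discount.02} together with the strong triangular condition, in the spirit of \cref{Prop:cost.upper.bounded} and \cref{Remark:cost.upper.bounded.infinite}, so that the total discounted switching cost seen from $T$ onward is again $\le C(1+\|x\|^q)e^{-c_\infty T}$. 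Letting $T\to\infty$ and then $\varepsilon\to0$ closes the inequality and gives \cref{eq:convergence.verification}.

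Finally, continuity of $v^\infty(\cdot,i)$ I would obtain by upgrading the convergence to be locally uniform in $x$. Each $v_T(\cdot,i):=\widehat Y_t^{T,t,\cdot,i}$ lies in $\mathcal{CP}([0,T]\times\real^d)$ and is continuous by \cref{Prop:visocity.property} applied to the finite-horizon problem, and the gap estimates from the convergence step are all of the form $C(1+\|x\|^q)$ times a factor decaying in $T$, hence uniformly small on each ball $\{\|x\|\le R\}$. Therefore $v_T(\cdot,i)\to v^\infty(\cdot,i)$ uniformly on compacts, and a uniform limit of continuous functions is continuous, which yields the last assertion.
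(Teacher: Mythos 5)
Your overall architecture matches the paper's (monotonicity via terminal domination plus comparison, convergence via a two-sided squeeze, continuity via locally uniform convergence), and two of your steps are sound but genuinely different from the paper's. For the monotonicity, the paper proves $\widehat Y_T^{\widetilde T,\nu,\eta,i,n}\ge g(X_T^{\nu,\eta,i},i)$ separately for \emph{every} Picard iterate $n$ and then compares iterate against iterate, while you prove the domination once, for the full solution (via verification and the no-switch control), and compare the $T$-iterates against the restricted $\widetilde T$-solution; both routes rest on exactly \cref{eq:psi.varsigma.condition} and \cref{eq:terminal.submartingale.infinite} and both work. For the inequality $\widehat Y_t^{T,t,x,i}\le v^\infty(x,i)$, the paper runs a fairly heavy induction on the number of switches (its Step 2: exchanging the limits in $n$ and $T$, Snell-envelope representations, the classes $\controlinf{\nu}{i,n}$), whereas your extension-by-no-switching argument --- using $g\le 0$ from \cref{eq:terminal.non.positive.infinite}, the non-negative reward condition, and extension of $\theta$ by $0\in\Theta_t$ --- reaches \cref{eq:value.dominated.infinite} in a few lines; this is a real simplification. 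One small repair is needed there: an arbitrary $\alpha\in\control{t}{i}$ may switch infinitely often before $T$, in which case its extension is not in $\controlinf{t}{i}$ (the $\tau_k$ do not tend to infinity), so apply the argument to the optimal control $\alpha^*$ of \cref{Prop:excess.Y}, which a.s.\ switches finitely many times.

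The genuine gap is in your ``$\ge$'' direction, at the switching-cost tail. You propose to bound all three tails \emph{uniformly over} $\theta\in\Theta[t,\infty)$ and then compare the two infima. That works for the running and terminal tails, because \cref{eq:large.discount.01} carries the density $\zeta^{\theta,0}$ inside the expectation and is uniform in $\theta$. It fails for the cost tail: \cref{eq:large.discount.02} contains no density --- it is an estimate under $\prob$ only --- whereas the quantity you must control,
\begin{equation*}
\expect\Bigl[-\sum_{\tau_k>T}e^{-\rho\tau_k}\zeta_{\tau_k}^{\theta,0}\,c_{i_{k-1},i_k}\bigl(X_{\tau_k}\bigr)\Bigr],
\end{equation*}
involves $\zeta^{\theta,0}$ evaluated at stopping times. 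Neither hypothesis applies as stated (\cref{eq:large.discount.01} is for fixed deterministic times and has no supremum; \cref{eq:large.discount.02} has no $\zeta$), and Cauchy--Schwarz does not rescue it, since $\expect[(\zeta_t^{\theta,0})^2]$ can grow like $e^{C^2t}$ and may swamp $e^{-c_\infty t}$. This is precisely why the paper does not attempt a uniform-in-$\theta$ bound: it takes the \emph{minimizing} $\theta^{T-t}$ of the truncated finite-horizon problem and extends it by zero (admissible because $0\in\Theta_t$ by \cref{Assump:rect}). Then the density is frozen on $[T-t,\infty)$, the tail factorizes as $\zeta_{T-t}$ times a conditional expectation computed under $\prob$, that conditional expectation is bounded by $C(1+\|X_{T-t}\|^q)e^{-\rho(T-t)}$ using the strong triangular condition and \cref{eq:large.discount.02}, and finally \cref{eq:large.discount.01} converts this into $C(1+\|x\|^q)e^{-c_\infty(T-t)}$. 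Substituting this specific choice of $\theta$ into your truncation argument closes the gap; note also that the locally uniform convergence, and hence your continuity step, depend on this quantitative bound, so they inherit the same repair.
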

Since the proof of \cref{Prop:convergence.verification} is too long, we put it on \cref{subsec:verification.infinite.horizon}.

We next study the relationships between $v^\infty$ and PDE.
Consider the following PDE.
\begin{multline}\label{eq:PDE.infinite}
\min\{-\ig^i u(x,i) - \psi(x,i) + \rho u(x,i) + \varsigma(x,i,\sigma^\prime(x,i)\nabla u(x,i)),\\
 u(x,i) - \max_{j\in\mathcal I\setminus\{i\}}\{u(x,j) - c_{i,j}(x)\}\} = 0,\quad (x,i)\in\real^d\times\mathcal I,
\end{multline}
where
\[
\ig^i f(x) = (\nabla f(x))^\prime b(x,i) + \frac{1}{2}\mathrm{tr}\left(\sigma\sigma^\prime(x,i)\frac{\p f(x)}{\p x\p x^\prime}\right).
\]
Then, the following proposition holds.

\begin{proposition}\label{Prop:uniform.viscosity}
Under \cref{Assump:X.SDE,Assump:rect,Assump:varsigma,Assump:infinite},
$v^\infty$ is a viscosity solution of the PDE \cref{eq:PDE.infinite}.
\end{proposition}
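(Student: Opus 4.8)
The plan is to realize $v^\infty$ as a monotone limit of finite-horizon value functions, each of which already solves the finite-horizon PDE \cref{eq:PDE} in the viscosity sense, and then to pass to the limit via the standard stability property of viscosity solutions. Fix $i\in\mathcal I$ and write $\widehat v^T(t,x,i):=\widehat Y_t^{T,t,x,i}$ for the minimum solution of \cref{eq:convergence.RBSDE}. Under \cref{Assump:infinite} this finite-horizon problem falls within the scope of \cref{Prop:visocity.property} (the time-homogeneous non-free-loop and triangular conditions imply their $t$-dependent counterparts, and \cref{eq:terminal.freeloop.infinite} is exactly the terminal requirement of \cref{Assump:funcs}.3); hence $(\widehat v^T(\cdot,\cdot,1),\dots,\widehat v^T(\cdot,\cdot,I))$ is the unique viscosity solution of \cref{eq:PDE} with $t$-independent coefficients and terminal datum $g$, and each $\widehat v^T(\cdot,\cdot,i)$ is jointly continuous with polynomial growth.

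First I would upgrade the pointwise limit of \cref{Prop:convergence.verification} to a local uniform one. By time-homogeneity (\cref{Assump:infinite}.1) and the Markov property, $\widehat v^T(t,x,i)=w^{T-t}(x,i)$ with $w^s(x,i):=\widehat Y_0^{s,0,x,i}$; by \cref{Prop:convergence.verification} the map $s\mapsto w^s(x,i)$ is non-decreasing with continuous limit $w^s(x,i)\uparrow v^\infty(x,i)$, and each $w^s(\cdot,i)$ is continuous. Dini's theorem then yields $w^s(\cdot,i)\to v^\infty(\cdot,i)$ uniformly on compacta as $s\to\infty$. Since for $(t,x)\in[0,t_{\max}]\times K$ monotonicity gives the sandwich $w^{T-t_{\max}}(x,i)\le \widehat v^T(t,x,i)=w^{T-t}(x,i)\le v^\infty(x,i)$, letting $T\to\infty$ shows $\widehat v^T(\cdot,\cdot,i)\to v^\infty$ locally uniformly on $[0,\infty)\times\real^d$, the limit being $t$-independent.

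Next I would invoke stability. Regard $v^\infty$ as the $t$-independent function $\hat u(t,x,i):=v^\infty(x,i)$. As the coefficients $b,\sigma,\psi,c$ and (by \cref{Assump:varsigma}) $\varsigma$ are continuous, the local uniform convergence $\widehat v^T\to\hat u$ together with the standard stability theorem for viscosity solutions (see \cite{crandall1992user}) shows that $\hat u$ is a viscosity solution of the space-time PDE \cref{eq:PDE} on $(0,\infty)\times\real^d$. Finally, for a test function $\varphi(\cdot,i)\in C^2(\real^d)$ at which $v^\infty(\cdot,i)-\varphi(\cdot,i)$ has a local extremum at $x_0$, the lift $\Phi(t,x):=\varphi(x)$ is a space-time test function for $\hat u$ with $\Phi_t\equiv 0$; inserting $\Phi$ into the viscosity inequalities for \cref{eq:PDE} annihilates the $-\Phi_t$ term and, using $\hat u(t,x_0,j)=v^\infty(x_0,j)$, reproduces exactly the sub- and supersolution inequalities for the stationary PDE \cref{eq:PDE.infinite}. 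Thus $v^\infty$ is a viscosity solution of \cref{eq:PDE.infinite}.

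The main obstacle is the step from monotone pointwise convergence to local uniform convergence: here the time-homogeneity, the monotonicity in the horizon, and the joint continuity of both $\widehat v^T$ and $v^\infty$ must be combined through Dini's theorem and the sandwich above. Once local uniform convergence is in hand, the obstacle term $u(x,i)-\max_{j\neq i}\{u(x,j)-c_{i,j}(x)\}$ passes to the limit by continuity alone, and the remaining argument is the routine viscosity stability passage, so no further difficulty is expected.
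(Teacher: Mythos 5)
Your proposal is correct, but it organizes the limit passage differently from the paper, and the comparison is instructive. Both proofs approximate $v^\infty$ by the finite-horizon value functions $\widehat v^T(t,x,i)=\widehat Y_t^{T,t,x,i}$, invoke \cref{Prop:visocity.property} for each $T$, and exploit the horizon-monotonicity of \cref{Prop:convergence.verification} together with Dini's theorem; the difference is in how the viscosity property survives the limit. The paper does the stability argument by hand: it first proves \cref{Lemma:finite.v.property} (that $v^T(\cdot,x,i)$ is non-increasing in $t$, plus the uniform increment bound \cref{eq:T.increment.condition}), then for the subsolution property perturbs the test function by $\|x-\overline x\|^4$ and uses time-monotonicity to force the maximizers over $[0,k]\times B(\overline x)$ to sit at $t_k=0$, while for the supersolution property it subtracts $\|x-\underline x\|^4+t/m$, uses the increment bound to trap the minimizing times in a compact set, passes $k\to\infty$ along the extremum points via Dini, and finally sends $m\to\infty$. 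You instead prove locally uniform convergence once and for all --- via the time-homogeneity identity $\widehat v^T(t,x,i)=w^{T-t}(x,i)$, the sandwich $w^{T-t_{\max}}\le\widehat v^T\le v^\infty$ on $[0,t_{\max}]\times K$, and Dini --- and then delegate the limit passage to the standard stability theorem, concluding with the (valid) remark that a $t$-independent viscosity solution of \cref{eq:PDE} solves \cref{eq:PDE.infinite} because lifted test functions have vanishing time derivative. Your route is shorter and dispenses with \cref{Lemma:finite.v.property} and the $t/m$ bookkeeping; it is also consistent with the locally uniform convergence already recorded at the end of the paper's proof of \cref{Prop:convergence.verification}. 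Two points should be made explicit in a write-up: first, the stability theorem in \cite{crandall1992user} is stated for scalar equations, so you should note that it extends to the present obstacle-coupled system because the coupling enters only through the zeroth-order values $u(\cdot,\cdot,j)$, which converge locally uniformly along with the component under test; second, your parenthetical that \cref{Assump:infinite} implies the finite-horizon hypotheses is loose for the strong triangular condition, since \cref{Assump:infinite}.7 carries the constant $C^\infty_{q,X}$ whereas \cref{Assump:funcs}.4 demands $C_{q,X}e^{C_q(T-t)}$, which grows with $T$ --- what one really needs, and what holds uniformly in $T$ under the large-discount condition via \cref{eq:Lp.estimate.X.infinite}, is the cost bound of \cref{Remark:cost.upper.bounded}; the paper makes the same silent identification when it applies \cref{Thm:existence} and \cref{Prop:visocity.property} to \cref{eq:convergence.RBSDE}, so this does not put your argument below the paper's own standard of rigor.
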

The proof of \cref{Prop:uniform.viscosity} is in \cref{subsec:viscosity.infinite}.
By \cref{Prop:uniform.viscosity}, we can study the optimal switching problem under ambiguity through the PDE \cref{eq:PDE.infinite}.
Moreover, we can easily show the uniqueness of the solution to the PDE \cref{eq:PDE.infinite} using the method of Proposition 3.1 in \cite{hamadene2013viscosity}, so we omit the proof of the uniqueness.

\section{Financial Applications}\label{sec:financial.appliciation}
\subsection{Monotone Conditions}\label{subsec:monotone.condition}
We first prove that under certain conditions, the optimal switching problem under ambiguity can be interpreted as an optimal switching problem with a shift of the drift of $X$ not depending on its value function.
We first assume the followings.
\begin{assumption}{\textit{Monotone conditions.}}\label{Assump:monotone}
We assume $d=1$.
\begin{enumerate}
 \item \textit{$\kappa$-ignorance.} There exist non-negative constants $\kappa_1,\dots,\kappa_I$ such that
 \[
  \Theta_t^{x,i} = [-\kappa_i,\kappa_i],
 \]
 for all $i\in\mathcal I,\;x\in\real^d$ and $t\in[0,\infty)$.
 \item For every $x,y\in\real$, $X$ satisfies,
 \begin{equation*}
  x \leq y \quad\Rightarrow\quad X_s^{t,x,i} \leq X_s^{t,y,i},\;\prob\mbox{-a.s.},
 \end{equation*}
 for all $t,s\in[0,T],\;i\in\mathcal I$ with $t\leq s$.
 \item $\rho$ does not depend on a value of $x$.
 \item For every $(t,i)\in[0,T]\times\mathcal I$, $\psi(t,\cdot,i)$ is non-decreasing.
 \item $\phi(t,x,i) = 0$ for every $(t,x,i)\in\overline{\mathcal K_T}$.
 \item For every $i\in\mathcal I$, $g(\cdot,i)$ is non-decreasing.
 \item For every $(t,i,j)\in[0,T]\times(\mathcal I)^2$, $c_{i,j}(t,\cdot)$ is non-increasing.
\end{enumerate}
\end{assumption}

In \cite{chen2002ambiguity}, \cref{Assump:monotone}.1 is called $\kappa$-ignorance.
The other conditions guarantee the monotonicity of the value function with respect to the initial value of $X$.
Under \cref{Assump:monotone}, we can prove the following result.

\begin{proposition}\label{Prop:monotone.conditions}
Suppose \cref{Assump:X.SDE,Assump:rect,Assump:funcs,Assump:varsigma,Assump:monotone}.
For all $(t,x,i)\in\overline{\mathcal K_T}$ and $\alpha\in\control{t}{i}$, let ${}^{-\kappa} X^{t,x,i,\alpha}$ be a solution to the following SDE,
\begin{align*}
\diff {}^{-\kappa} X^{t,x,i,\alpha}_s &= \Big(b(s,{}^{-\kappa} X^{t,x,i,\alpha}_s,\alpha_s) - \kappa_{\alpha_s}|\sigma(s,{}^{-\kappa} X^{t,x,i,\alpha}_s,\alpha_s)|\Big) \diff s 
 + \sigma(s,{}^{-\kappa} X^{t,x,i,\alpha}_s,\alpha_s)\diff W_s,\\
&{}^{-\kappa} X^{t,x,i,\alpha}_t = x.
\end{align*}
Then, the value function $v(t,x,i)$ satisfies
\begin{multline}\label{eq:kappa.minus.switching}
v(t,x,i) = \sup_{\alpha\in\control{t}{i}}\expect\biggl[\int_t^T{}^{-\kappa} D_s^{t,i,\alpha}\psi(s,{}^{-\kappa} X_s^{t,x,i,\alpha},\alpha_s)\diff s 
 + {}^{-\kappa} D_T^{t,i,\alpha}g({}^{-\kappa} X_T^{t,x,i,\alpha},\alpha_T)\\
 - \sum_{t\leq \tau_k\leq T}{}^{-\kappa} D_{\tau_k}^{t,i,\alpha}c_{i_{k-1},i_{k}}(\tau_k,{}^{-\kappa} X_{\tau_k}^{t,x,i,\alpha})\;\Big|\;\filt_t\biggr],
\end{multline}
where
\[
{}^{-\kappa} D_s^{t,i,\alpha} = \exp\left\{-\int_t^s\rho(u,\alpha_u) \diff u\right\},\quad s\in[t,T].
\]
Furthermore, $x\rightarrow v(t,x,i)$ is non-decreasing for all $(t,i)\in[0,T]\times\mathcal I$.
\end{proposition}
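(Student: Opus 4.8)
\textit{Proof plan.}
The strategy is to recognise the worst-case prior as the one that pushes the drift of $X$ to its lowest admissible value, and then to identify the ambiguity problem with an ordinary switching problem through the PDE \cref{eq:PDE} and its uniqueness. By \cref{Assump:monotone}.1 and \cref{Assump:monotone}.5 we have $\phi\equiv0$ and $\Theta_t^{x,i}=[-\kappa_i,\kappa_i]$, so that $\varsigma(t,x,i,z)=\max_{|\theta|\le\kappa_i}\theta z=\kappa_i|z|$ (recall $d=1$). Under $\prob_T^\theta$ the Girsanov theorem gives $X$ the drift $b(s,X_s,\alpha_s)-\sigma(s,X_s,\alpha_s)\theta_s$, and since $|\theta_s|\le\kappa_{\alpha_s}$ this drift dominates $b-\kappa_{\alpha_s}|\sigma|$ pointwise, the latter being exactly the drift of ${}^{-\kappa}X$. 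Because $\psi(s,\cdot,\alpha_s)$ and $g(\cdot,\alpha_T)$ are non-decreasing and $c_{i,j}(s,\cdot)$ non-increasing (\cref{Assump:monotone}.4,6,7), while $\rho$ is independent of $x$ (\cref{Assump:monotone}.3) so the discount factor is a deterministic functional of $\alpha$, the reward $R^\alpha$ (the bracket in \cref{eq:kappa.minus.switching}) is a non-decreasing functional of the controlled path; hence the infimum over $\theta$ should be attained at the lowest drift, giving the candidate value $\bar v(t,x,i):=\sup_{\alpha\in\control{t}{i}}\expect[R^\alpha]$.

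To make this rigorous I would avoid matching controls pathwise across the change of measure and instead argue at the level of the value function. By \cref{Prop:visocity.property}, $\vec v$ is the unique viscosity solution in $(\mathcal{CP}([0,T]\times\real^d))^I$ of \cref{eq:PDE} with $\varsigma(t,x,i,z)=\kappa_i|z|$. The quantity $\bar v$ is the value function of an \emph{ordinary} optimal switching problem (no ambiguity) whose forward process is ${}^{-\kappa}X$; its coefficients $b-\kappa_i|\sigma|$ and $\sigma$ are Lipschitz and of linear growth, so applying the machinery of \cref{sec:rbsde,sec:verification.viscosity} with the trivial prior set $\Theta_t=\{0\}$ shows that $\bar v$ is the unique viscosity solution of the analogous PDE with $\varsigma\equiv0$ but with $\ig^i$ replaced by $\ig^i\,\cdot-\kappa_i|\sigma(x,i)|\,\p_x\,\cdot$.

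It then remains to identify the two PDEs. First I would prove that $x\mapsto\bar v(t,x,i)$ is non-decreasing: for fixed $\alpha$ the one-dimensional comparison theorem for the Lipschitz SDE defining ${}^{-\kappa}X$ (cf.\ \cref{Assump:monotone}.2) gives ${}^{-\kappa}X^{t,x,i,\alpha}_s\le{}^{-\kappa}X^{t,y,i,\alpha}_s$ a.s.\ whenever $x\le y$, and since $R^\alpha$ is a non-decreasing functional of this path the corresponding expectations are ordered; taking the supremum over $\alpha$ preserves the inequality (the set $\control{t}{i}$ does not depend on the starting point). Consequently $\p_x\bar v\ge0$ in the viscosity sense: at any point where a test function $\varphi$ touches $\bar v$ from above or from below, monotonicity of $\bar v$ forces $\p_x\varphi\ge0$, whence $\kappa_i|\sigma|\p_x\varphi=\kappa_i|\sigma\,\p_x\varphi|=\varsigma(t,x,i,\sigma\p_x\varphi)$. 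Substituting this into the PDE for $\bar v$ shows that $\bar v$ is a viscosity sub- and supersolution of \cref{eq:PDE}; by the uniqueness in \cref{Prop:visocity.property} we conclude $v=\bar v$, which is \cref{eq:kappa.minus.switching}, and the asserted monotonicity of $x\mapsto v(t,x,i)$ follows from that of $\bar v$.

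The main obstacle is precisely this last identification: one must establish the sign $\p_x\bar v\ge0$ and transfer it to the test functions at contact points, so that the term $\kappa_i|\sigma|\p_x\varphi$ coming from the shifted generator coincides with $\varsigma(t,x,i,\sigma\p_x\varphi)=\kappa_i|\sigma\,\p_x\varphi|$. The monotonicity of the data in \cref{Assump:monotone} is what makes this sign available, and it is also what rules out a naive pathwise Girsanov identification of $J(t,x,i,\alpha)$ with $\expect[R^\alpha]$ for each fixed $\alpha$, since an adapted control ``sees'' a shifted noise under $\prob_T^\theta$ and that pointwise equality need not hold.
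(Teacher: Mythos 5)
Your proof is correct, and it runs the paper's argument in the opposite direction. The paper first proves monotonicity of the \emph{ambiguity} value function $v$ itself, by recursively applying the BSDE comparison theorem to the Picard iterations of \cref{Thm:existence} (using the inequalities supplied by \cref{Assump:monotone}); it then shows $v$ is a viscosity solution of the drift-shifted PDE \cref{eq:monotone.PDE} via the same sign argument you use ($\nabla\varphi\geq 0$ at contact points, hence $\kappa_i|\sigma\nabla\varphi|=\kappa_i|\sigma|\nabla\varphi$), and concludes $v=w$ by the comparison theorem for \cref{eq:monotone.PDE}, $w$ being identified with the classical switching value function. You instead prove monotonicity of the \emph{classical} value function $\bar v$ (pathwise one-dimensional SDE comparison for ${}^{-\kappa}X$ plus monotone rewards), show $\bar v$ is a viscosity solution of the ambiguity PDE \cref{eq:PDE}, and invoke the uniqueness already established in \cref{Prop:visocity.property}. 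Your route buys a more elementary, probabilistic monotonicity step (no RBSDE comparison machinery) and avoids needing a separate comparison/uniqueness result for the monotone PDE, leaning only on \cref{Prop:visocity.property}; the cost is that monotonicity of $v$ is obtained only a posteriori through $v=\bar v$, whereas the paper gets it directly. Two small points you should make explicit: the pathwise ordering for ${}^{-\kappa}X$ is not literally \cref{Assump:monotone}.2 (which concerns $X$), but follows from the standard comparison theorem for one-dimensional SDEs since $b-\kappa_i|\sigma|$ and $\sigma$ are Lipschitz; and you should verify that the shifted coefficients satisfy \cref{Assump:X.SDE} and that $\bar v\in(\mathcal{CP}([0,T]\times\real^d))^I$, so that the machinery with trivial priors and the uniqueness statement both apply — both checks are routine.
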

\begin{proof}[Proof of \cref{Prop:monotone.conditions}]
By the $\kappa$-ignorance and $\phi=0$, we have
\[
\varsigma(t,x,i,z) = \kappa_i|z|,
\]
for all $(t,x,i,z)\in\overline{\mathcal K_T}\times\real$.
Now, fix an arbitrary $t\in[0,T]$ and $x,\widetilde x\in\real$ with $x\leq \widetilde x$.
Then, by the monotone conditions 2-6, we have
\begin{align}
\psi(s,X_s^{t,x,i},i) - \rho(s,i)y -\kappa_i|z| 
 &\leq \psi(s,X_s^{t,\widetilde x,i},i) - \rho(s,i)y -\kappa_i|z|,\label{eq:monotone.ineq.01} \\
g(X_T^{t,x,i},i) &\leq g(X_T^{t,\widetilde x,i},i), \label{eq:monotone.ineq.02}
\end{align}
for all $(s,i,y,z)\in[t,T]\times\mathcal I\times\real\times\real$.
Furthermore, by the monotone conditions 2 and 7, we have
\begin{equation}\label{eq:monotone.ineq.03}
\max_{j\in\mathcal I\setminus\{i\}}\Big\{y^j - c_{i,j}(s,X_s^{t,x,i})\Big\} \leq \max_{j\in\mathcal I\setminus\{i\}}\Big\{\overline y^j - c_{i,j}(s,X_s^{t,\widetilde x,i})\Big\},
\end{equation}
for all $(s,i)\in[t,T]\times\mathcal I$ and $(y^1,\dots,y^I),(\overline y^1,\dots,\overline y^I)\in\real^I$ with $y^k\leq\overline y^k$ for all $k\in\mathcal I$.
Let $(Y^{t,x,i,n})_{i\in\mathcal I,\;n\geq 0}$ and $(Y^{t,\widetilde x,i,n})_{i\in\mathcal I,\;n\geq 0}$ be the Picard's iterations defined in \cref{Thm:existence} with starting $x$ and $\widetilde x$, respectively.
Then, by the inequalities \cref{eq:monotone.ineq.01,eq:monotone.ineq.02,eq:monotone.ineq.03},
recursively applying the comparison theorem leads to that
\[
Y^{t,x,i,n}_s \leq Y^{t,\widetilde x,i,n}_s,
\]
for all $i\in\mathcal I,\;s\in[t,T]$ and $n\geq 0$.
Taking a limit of the above inequality, we have
\begin{equation}\label{eq:monotone.ineq.04}
v(t,x,i) = Y_t^{t,x,i} \leq Y_t^{t,\widetilde x,i} = v(t,\widetilde x,i),
\end{equation}
for all $i\in\mathcal I$.
Since we arbitrarily choose $t,x$ and $\widetilde x$ with $x\leq \widetilde x$, the inequality \cref{eq:monotone.ineq.04} implies that a mapping $x\rightarrow v(t,x,i)$ is non-decreasing for all $t\in[0,T]$ and $i\in\mathcal I$.

Now, let us consider the following PDE,
\begin{align}\label{eq:monotone.PDE}
&\min\{-w_t(t,x,i) - \ig^{-\kappa,i} w(t,x,i) - \psi(t,x,i) + \rho(t,i)w(t,x,i), \nonumber\\
&\qquad\qquad w(t,x,i) - \max_{j\in\mathcal I\setminus\{i\}}\{w(t,x,j) - c_{i,j}(t,x)\}\} = 0,\quad (t,x,i)\in\overline{\mathcal K_T}, \\
&w(T,x,i) = g(x,i),\nonumber
\end{align}
where
\begin{align*}
\ig^{-\kappa,i} f(t,x) = (b(t,x,i) - \kappa_i|\sigma(t,x,i)|)\nabla f(t,x) + \frac{1}{2}(\sigma(t,x,i))^2\frac{\p^2 f(t,x)}{\p x^2}.
\end{align*}
The PDE \cref{eq:monotone.PDE} has a unique continuous viscosity solution, denoted by $w$.
Let $(t,x)\in[0,T)\times\real$ and let $\varphi\in C^{1,2}([0,T)\times\real\times\mathcal I)$ be a test function such that $v(\cdot,\cdot,i) - \varphi(\cdot,\cdot,i)$ attains a local minimum at $(t,x)$ for all $i\in\mathcal I$.
Since $y\rightarrow v(s,y,j)$ is monotone non-decreasing for all $(s,j)\in[0,T)\times\mathcal I$, we have $\nabla \varphi(t,x,i)\geq 0$ for all $i\in\mathcal I$.
Since $v$ is the viscosity supersolution to the PDE \cref{eq:PDE} by \cref{Prop:visocity.property},
we have
\begin{align*}
&\min\{-\varphi_t(t,x,i) - \ig^{-\kappa,i} \varphi(t,x,i) - \psi(t,x,i) + \rho(t,i)v(t,x,i),\\
&\qquad\qquad v(t,x,i) - \max_{j\in\mathcal I\setminus\{i\}}\{v(t,x,j) - c_{i,j}(t,x)\}\} \\
&= \min\{-\varphi_t(t,x,i) - \ig^{i} \varphi(t,x,i) - \psi(t,x,i) + \rho(t,i)v(t,x,i) + \kappa|\sigma(t,x,i)\nabla \varphi(t,x,i)|,\\
&\qquad\qquad v(t,x,i) - \max_{j\in\mathcal I\setminus\{i\}}\{v(t,x,j) - c_{i,j}(t,x)\}\} \geq 0,
\end{align*}
for all $i\in\mathcal I$.
Hence, $v$ is a viscosity supersolution to the PDE \cref{eq:monotone.PDE}.
The comparison theorem of viscosity solutions gives $v\geq w$.
Using the similar argument, we also have $v\leq w$.
Thus, $v = w$.
Since a value function of the optimal switching problem in the right hand side of our desired equality \cref{eq:kappa.minus.switching} is a unique viscosity solution to the PDE \cref{eq:monotone.PDE}, we obtain the equality \cref{eq:kappa.minus.switching}.
\end{proof}

In the infinite horizon case,
\cref{Prop:monotone.conditions} also holds under the same conditions as \cref{Assump:monotone}.
\Cref{Prop:monotone.conditions} implies that under the monotone conditions, the optimal switching problem under ambiguity can be regarded as usual optimal switching problems.
Thus, we can use existing results in the literature of the optimal switching if the monotone conditions are satisfied.
In fact, under the monotone conditions, it is sufficient to solve the PDE \cref{eq:monotone.PDE} instead of the PDE \cref{eq:PDE} in order to derive the value function.

The monotone conditions and \cref{Prop:monotone.conditions} are very similar to the results in Cheng and Riedel \cite{cheng2013optimal}.
In \cite{cheng2013optimal}, Cheng and Riedel consider the optimal stopping problem under ambiguity and show that if a payoff function $f(t,x)$ is non-decreasing in $x$ and $\kappa$-ignorance is satisfied, then the optimal stopping problem under ambiguity can be regarded as a standard optimal stopping problem in which the drift of $X$ shifts into $b-\kappa|\sigma|$ (Theorem 4.1 in \cite{cheng2013optimal}).
Our result implies that the optimal switching problem under ambiguity holds the same property as the optimal stopping under ambiguity.

In \cref{subsec:selection.funds,subsec:buy.low.sell.high}, we consider two applications of the optimal switching problem under ambiguity in finance.
The first application in \cref{subsec:selection.funds} is a selection of investment funds and it satisfies the monotone conditions.
However, the second application (the buy low and sell high problem) in \cref{subsec:buy.low.sell.high} does not satisfy the monotone conditions and it definitely needs negative switching costs.

\subsection{Selection of Investment Funds}\label{subsec:selection.funds}
In this section, we consider an optimal selection of two investment funds under ambiguity in the infinite horizon.
Let $d=1$ and $\mathcal I = \{1,2\}$.
Assume that $X$ satisfies the following SDE,
\begin{equation}\label{eq:fund.selection.X.SDE}
\diff X_t = b_{\alpha_t} X_t \diff t + \sigma_{\alpha_t}X_t\diff W_t,
\end{equation}
where $b_i\in\real,\;\sigma_i>0,\;i=1,2$ are constants.
The solution to the SDE \cref{eq:fund.selection.X.SDE} is
\begin{align*}
X_t^{x,i,\alpha} &= x\exp\left\{\int_0^t\left(b_{\alpha_s} - \frac{1}{2}\sigma^2_{\alpha_s}\right)\diff s + \int_0^t\sigma_{\alpha_s}\diff W_s\right\},
\end{align*}
for all $\alpha\in\controlinf{0}{i}$.
Assume that $\phi=0$ and that $\psi$ is
\[
\psi(x) = x^p,\quad x\in[0,\infty),\; 0<p<1.
\]
The switching costs $c_{1,2}$ and $c_{2,1}$ are constants over $x$, and they satisfy $c_{1,2} + c_{2,1} > 0$.
The constant discount rate $\rho$, satisfies
\[
\rho > p\max_{i\in\mathcal I}\left\{b_i - \frac{1- p}{2}\sigma_i^2\right\}.
\]
The set of multiple priors is
\[
\Theta^{x,i} = [-\kappa_i,\kappa_i],\quad \kappa_i\geq 0,
\]
for all $x\in\real$ and $i\in\mathcal I$.
In the above settings, an optimal switching problem of interest is
\begin{equation}\label{eq;fund.selection.problem}
v^\infty(x,i) = \sup_{\alpha\in\controlinf{0}{i}}\inf_{\theta\in\Theta[0,\infty)}\expect\left[\int_0^\infty e^{-\rho t}\zeta_t^{\theta,0}(X_t^{x,i,\alpha})^p \diff t - \sum_{k=1}^\infty e^{-\rho\tau_k}\zeta_{\tau_k}^{\theta,0} c_{i_{k-1},i_k}\right].
\end{equation}
Since the problem \cref{eq;fund.selection.problem} satisfies \cref{Assump:X.SDE,Assump:rect,Assump:varsigma,Assump:infinite}, we can use the results in \cref{sec:infinite.horizon}.
Furthermore, the problem \cref{eq;fund.selection.problem} also satisfies the monotone conditions (\cref{Assump:monotone}).

Without ambiguity (i.e., $\kappa_i=0$ for all $i\in\mathcal I$),
the problem \cref{eq;fund.selection.problem} is well studied in \cite{ly2007explicit}.
We shortly summarize the results in \cite{ly2007explicit} as follows.
\begin{proposition}[Theorem 4.1 in \cite{ly2007explicit}]\label{Prop:LyBath.Pham.2007}
Let
\begin{equation}\label{eq:funds.selection.K}
K_i = \frac{1}{\rho - b_ip + \frac{1}{2}\sigma_i^2p(1-p)},
\end{equation}
for all $i\in\mathcal I$.
Let $i,j\in\mathcal I,\;i\neq j$.
\begin{enumerate}
\item If $K_i = K_j$, then it is always optimal to switch from regime $i$ to $j$ if the corresponding switching cost is non positive, and never optimal to switch otherwise.
\item If $K_j > K_i$, then the following switching strategies depending on the switching costs are optimal.
      \begin{enumerate}
       \item $c_{i,j}\leq 0$: it is always optimal to switch from regime $i$ to $j$ if one first stands in $i$, and it is always optimal not to switch from $j$ to $i$ otherwise.
       \item $c_{i,j}> 0$:
             \begin{enumerate}
             \item $c_{j,i}\geq 0$: there exists $\underline x^*_i\in[0,\infty)$ such that if one first stands in regime $i$, then it is optimal to switch from $i$ to $j$ whenever $X$ exceeds $\underline x^*_i$. If one first stands in regime $j$, then it is optimal not to switch from $j$ to $i$.
             \item $c_{j,i}< 0$: there exist $\underline x^*_i,\overline x^*_j\in[0,\infty)$ with $\overline x^*_j < \underline x^*_i$ such that if one first stands in regime $i$, then it is optimal to switch from $i$ to $j$ whenever $X$ exceeds $\underline x^*_i$, and that if one first stands in regime $j$, then it is optimal to switch from $j$ to $i$ whenever $X$ falls below $\overline x^*_j$.
             \end{enumerate}
      \end{enumerate}
\end{enumerate}
\end{proposition}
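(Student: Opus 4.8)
The plan is to identify the value pair $(v^\infty(\cdot,1),v^\infty(\cdot,2))$ as the unique viscosity solution of the two-regime HJB system obtained by specialising \cref{eq:PDE.infinite} to $\mathcal I=\{1,2\}$ and $\varsigma\equiv 0$ (which holds here since $\kappa_i=0$ makes $\Theta^{x,i}=\{0\}$),
\[
\min\Big\{-\ig^i v^\infty(x,i) + \rho v^\infty(x,i) - x^p,\ v^\infty(x,i) - \big(v^\infty(x,j) - c_{i,j}\big)\Big\} = 0,
\]
for $\{i,j\}=\{1,2\}$; existence and uniqueness in the polynomial-growth class are furnished by \cref{Prop:uniform.viscosity} and the uniqueness argument cited from \cite{hamadene2013viscosity}. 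The first step is to compute the no-switching value: using the log-normal moments of \cref{eq:fund.selection.X.SDE} one gets $\expect[\int_0^\infty e^{-\rho t}(X_t^{x,i})^p\,dt]=K_i x^p$, and the standing discount condition is exactly what makes the denominator in \cref{eq:funds.selection.K} positive, so $K_i>0$ and $K_i x^p$ is the value of staying in regime $i$ forever.

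Next I would solve the linear ODE $-\ig^i w + \rho w = x^p$ satisfied by $v^\infty(\cdot,i)$ on its continuation region. A particular solution is $K_i x^p$, and the homogeneous Euler equation $\tfrac12\sigma_i^2 x^2 w'' + b_i x w' - \rho w = 0$ has exponents $m_i^\pm$ solving $\tfrac12\sigma_i^2 m^2 + (b_i-\tfrac12\sigma_i^2)m - \rho = 0$. The product of the roots equals $-2\rho/\sigma_i^2<0$, so $m_i^-<0<m_i^+$, and evaluating the quadratic at $m=p$ gives $-1/K_i<0$, whence $m_i^-<p<m_i^+$. Boundedness of $v^\infty$ as $x\downarrow 0$ kills the coefficient of $x^{m_i^-}$, while the polynomial-growth bound from \cref{Prop:value.func.poly.growth} controls the $x^{m_i^+}$ term on any unbounded continuation region.

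With these pieces I would assemble a candidate case by case. If $K_i=K_j$ the regimes have identical no-switch value, so moving only pays off when the cost is non-positive, yielding case 1 with $v^\infty(x,i)=K_i x^p$ up to an immediate beneficial switch. If $K_j>K_i$ then, since $x^p$ is increasing, the better regime $j$ is attractive for large $x$: I posit for regime $i$ a continuation region $\{x<\underline x_i^*\}$ on which $v^\infty(x,i)=K_i x^p + A_i x^{m_i^+}$ and a switching region $\{x\ge\underline x_i^*\}$ on which $v^\infty(x,i)=v^\infty(x,j)-c_{i,j}$, with regime $j$ absorbing (so $v^\infty(\cdot,j)=K_j x^p$) when $c_{j,i}\ge 0$, and symmetrically endowed with a lower threshold $\overline x_j^*$ when $c_{j,i}<0$. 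The free boundaries and the coefficient $A_i$ are pinned down by value-matching and smooth-pasting ($C^1$) at $\underline x_i^*$ (and at $\overline x_j^*$ in case 2(b)(ii)), and the signs of $c_{i,j},c_{j,i}$ select among cases 2(a), 2(b)(i), 2(b)(ii): $c_{i,j}\le 0$ collapses the threshold to immediate switching, while $c_{i,j}>0$ produces an interior $\underline x_i^*$.

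The last step is verification: check that the candidate is $C^1$ with the stated region geometry, solves the QVI classically off the free boundaries (hence in the viscosity sense everywhere), and lies in the polynomial-growth class, so that uniqueness forces it to equal $v^\infty$; optimality of the induced threshold strategy then follows from the verification argument behind \cref{Prop:excess.Y}, adapted to the infinite horizon via a transversality estimate coming from the sufficiently-large-discount condition. The hard part will be the smooth-fit analysis---showing the transcendental system for $(\underline x_i^*,A_i)$, and in case 2(b)(ii) the coupled system for $(\underline x_i^*,\overline x_j^*)$, has a unique admissible solution with the required ordering $\overline x_j^*<\underline x_i^*$. This is exactly where the explicit exponents and the strict separation $m_i^-<p<m_i^+$ are essential, and where one needs a monotonicity or intermediate-value argument rather than a routine calculation.
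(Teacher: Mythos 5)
You should first be aware that the paper contains no proof of this statement: \cref{Prop:LyBath.Pham.2007} is a verbatim summary of Theorem 4.1 of \cite{ly2007explicit}, quoted so that the effect of ambiguity (via the drift shift $b_2\mapsto b_2-\kappa_2\sigma_2$ from \cref{Prop:monotone.conditions}) can be read off from known results, and the paper explicitly refers the reader to \cite{ly2007explicit} for the thresholds and the functional form of the value function. So the relevant comparison is with the argument of that cited work, and your plan is essentially a reconstruction of it: characterize the value pair as the unique polynomial-growth viscosity solution of the two-regime QVI obtained from \cref{eq:PDE.infinite} with $\varsigma\equiv 0$, solve $-\ig^i w + \rho w = x^p$ on continuation regions (particular solution $K_i x^p$, homogeneous exponents $m_i^{\pm}$), assemble threshold-type candidates case by case, pin the free boundaries by value matching and smooth fit, and conclude by verification plus uniqueness. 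Your computational anchors are all correct: $\expect\bigl[\int_0^\infty e^{-\rho t}(X_t^{x,i})^p\,\diff t\bigr] = K_i x^p$ exactly under the standing discount condition, the root product $-2\rho/\sigma_i^2<0$ gives $m_i^-<0<m_i^+$, and the characteristic quadratic at $m=p$ equals $-1/K_i<0$, whence $m_i^-<p<m_i^+$.

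That said, as a proof the proposal is a plan with the decisive steps deferred, and the gap is exactly where you locate it: existence, uniqueness, and the ordering $\overline x_j^* < \underline x_i^*$ for the smooth-fit systems, together with the global verification of the QVI inequalities on the stopping regions, are the actual content of Theorem 4.1 in \cite{ly2007explicit} and occupy the bulk of that paper; acknowledging them as ``the hard part'' does not establish the four-way case analysis in the statement. Two further points deserve care. First, your appeal to the polynomial-growth bound of \cref{Prop:value.func.poly.growth} to kill the $x^{m_i^+}$ term on an unbounded continuation region requires $m_i^+$ to exceed the growth exponent, which does not follow from the standing condition $\rho > p\max_i\{b_i - \frac{1-p}{2}\sigma_i^2\}$ (that condition only yields $m_i^+>p$); the cleaner route, consistent with the statement, is to compute the value of an absorbing regime directly as the expectation $K_j x^p$, so that no homogeneous term ever arises there. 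Second, in case 1 the assertion ``always switch when the cost is non-positive'' is only consistent because $c_{1,2}+c_{2,1}>0$ forbids both costs from being non-positive; without invoking that constraint your argument would permit an infinite switching loop, so it should be cited explicitly when you claim the post-switch regime is absorbing.
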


For details of $\underline x_i^*$ and $\overline x^*_j$ and the functional form of the value function, we refer to \cite{ly2007explicit}.
By \cref{Prop:LyBath.Pham.2007}, the types of the switching strategies are determined by $K_i$ defined in \cref{eq:funds.selection.K} and the switching costs.
The most interesting case is \cref{Prop:LyBath.Pham.2007}.2.(b).ii in which the decision maker continuously switches the regimes.

The problem \cref{eq;fund.selection.problem} can be interpreted as an optimal selection of investment funds.
An investor chooses a fund to maximize her expected utility with multiple priors.
The switching costs are interpreted as costs or benefits in changing funds.

We now assume $K_2 > K_1$ and $c_{1,2} > 0 > c_{2,1}$.
Then, heuristically speaking, the fund 2 (regime 2) is more attractive than the fund 1 (regime 1), but one requires the positive switching cost $c_{1,2}$ to switch from the fund 1 to the fund 2.
On the other hand, one gets the switching benefit $-c_{2,1}$ when switching from the fund 2 to the fund 1.
We can also interpret the fund 2 as a new fund well performing and the fund 1 as an old fund less performing.
To obtain customers, the fund 1 begins the campaign that one switching from the fund 2 to the fund 1 obtains the benefit $-c_{2,1}$.
Then, the investor has a motivation switching between the fund 1 and 2.

However, in practice, the investor may doubt the good performance of the fund 2 since the fund 2 is new and less experienced.
The investor therefore considers that the fund 2 has a premium of ambiguity.
Mathematically, this implies that $\kappa_2 > 0$ and $\kappa_1 = 0$.
We now consider the case that $\kappa_2 > 0$ and $\kappa_1 = 0$.

Since the problem \cref{eq;fund.selection.problem} satisfies the monotone conditions, we can use the results in \cite{ly2007explicit}.
Let
\[
K^\kappa_2 = \frac{1}{\rho - (b_2 - \kappa_2\sigma_2)p + \frac{1}{2}\sigma_2^2p(1-p)} > 0.
\]
Then, we have
\[
K^\kappa_2 - K_1 = \frac{p K^\kappa_2 K_1}{2}\Big((1 - p)(\sigma_1^2 - \sigma_2^2) - 2(b_1 - b_2) - 2\kappa_2\sigma_2\Big).
\]
Therefore, the sign of $(1 - p)(\sigma_1^2 - \sigma_2^2) - 2(b_1 - b_2) - 2\kappa_2\sigma_2$ determines the type of the switching strategy.
On the other hand, we have
\[
K_2 - K_1 = \frac{p K_2 K_1}{2}\Big((1 - p)(\sigma_1^2 - \sigma_2^2) - 2(b_1 - b_2)\Big) > 0.
\]
Hence, $(1 - p)(\sigma_1^2 - \sigma_2^2) - 2(b_1 - b_2)$ is positive.
However, if $\kappa_2$ is sufficiently large such that $(1 - p)(\sigma_1^2 - \sigma_2^2) - 2(b_1 - b_2) < 2\kappa_2\sigma_2$,
then $K^\kappa_2 < K_1$.
Therefore, the large ambiguity with respect to the fund 2 can change the type of the switching strategy.

To illustrate effects of ambiguity, we conduct a numerical simulation.
Let $b_1 = 0.03,\;b_2 = 0.07,\;\sigma_1 = 0.1,\;\sigma_2 = 0.3,\;p = 0.5,\;\rho = 0.03,\;c_{1,2} = 30000,$ and $c_{2,1} = -1000$.
Then,
\[
K_1 = 61.53846\cdots < 160 = K_2.
\]
Hence, the investor continuously switches between the fund 1 and 2 without ambiguity.
On the other hand, we have
\[
\frac{(1 - p)(\sigma_1^2 - \sigma_2^2) - 2(b_1 - b_2)}{2\sigma_2} = \frac{1}{15} = 0.0666\cdots.
\]
Thus, if $\kappa_2 > 1/15$, then the type of switching strategy changes to that one always chooses the fund 1.

\begin{figure}[tbhp]
  \centering
  \includegraphics[width=10.0cm,height=7.0cm]{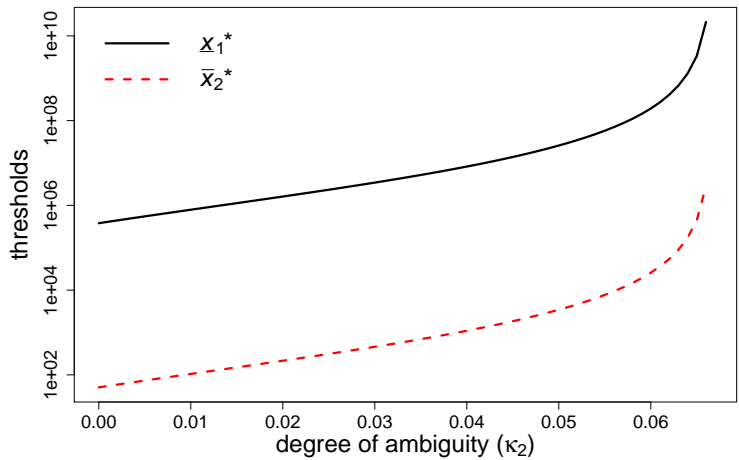}
  \caption{\textbf{The Optimal Switching Thresholds in the Selection of Investment Funds.}
           The vertical axis is a logarithmic scale.
           $\underline x^*_1$ under different $\kappa_2$ is plotted in the solid line.
           $\overline x^*_2$ under different $\kappa_2$ is plotted in the dashed line.}
  \label{fig:fund.selection.thresholds}
\end{figure}

\cref{fig:fund.selection.thresholds} displays the switching thresholds $\underline x_1^*$ and $\overline x_2^*$ with different degrees of ambiguity $\kappa_2$.
If the investor is investing in the fund 1 at time $t$ and if $X_t \geq \underline x_1^*$, then the investor switches from the fund 1 to the fund 2.
On the other hand, if the investor is investing in the fund 2 at time $t$ and if $X_t \leq \overline x_2^*$, then the investor switches from the fund 2 to the fund 1.

According to \cref{fig:fund.selection.thresholds}, in a higher degree of ambiguity $\kappa_2$, both of the thresholds $\underline x_1^*$ and $\overline x_2^*$ are large.
This implies that if $\kappa_2$ is large, then the investor investing in the fund 1 needs sufficiently large wealth $X$ to switch from the fund 1 to the fund 2.
On the other hand, if $\kappa_2$ is large, then the investor investing in the fund 2 switches to the fund 1 with smaller wealth than that in small $\kappa_2$.
Each behavior is well convincing.
The large ambiguity makes the fund 2 less attractive, so the investor tends to choose the fund 1.

\begin{remark}
Suppose $\kappa_1 > 0$. Let
\[
K_1^\kappa = \frac{1}{\rho - (b_1 - \kappa_1\sigma_1)p + \frac{1}{2}\sigma_1^2p(1-p)} > 0,
\]
and
\[
K_2^\kappa - K_1^\kappa = \frac{p K^\kappa_2 K_1^\kappa}{2}\Big((1 - p)(\sigma_1^2 - \sigma_2^2) - 2(b_1 - b_2) - 2 (\kappa_2\sigma_2 - \kappa_1\sigma_1)\Big).
\]
Hence, in this case, if $(1 - p)(\sigma_1^2 - \sigma_2^2) - 2(b_1 - b_2) < 2 (\kappa_2\sigma_2 - \kappa_1\sigma_1)$,
then $K_2^\kappa < K_1^\kappa$.
\end{remark}

\subsection{Buy Low and Sell High}\label{subsec:buy.low.sell.high}
Next, we consider an optimal trading (buy and sell) rule under ambiguity.
Without ambiguity, this problem in trading a mean-reverting asset is well studied in \cite{zhang2008trading}.
We adopt the settings in \cite{zhang2008trading} and consider an optimal trading rule under ambiguity.
Let $d=1$.
A trader concerns with trading of a certain asset.
A cumulative log return of this asset at time $t$ is denoted by $X_t$ and it satisfies the following SDE.
\begin{align}\label{eq:buy.sell.SDE}
\diff X_t &= a(b - X_t)\diff t + \sigma\diff W_t,
\end{align}
where $a>0,\;b\in\real$ and $\sigma>0$ are constants.
Therefore, the asset price at time $t$ is given by $S_t = \exp(X_t)$.
We denote the solution to the SDE \cref{eq:buy.sell.SDE} starting from $X_0 = x$ by $X^x$.
Furthermore, this asset does not have any dividend and coupon.
This implies $\psi = 0$ and $\phi = 0$.

Let $\mathcal I = \{1,2\}$.
The regime $i=1$ means that the trader's position is flat. Hence, the trader wants to buy the asset at as low a price as possible.
The regime $i=2$ means that the trader's position is long. Hence, the trader wants to sell the asset at as high a price as possible.
If the trader goes from the regime 1 to the regime 2, in other words, if the trader buys the asset,
then the switching cost function is
\begin{equation}\label{eq:buying.cost}
c_{1,2}(x) = e^x(1 + K),
\end{equation}
where $K\in(0,1)$ is a constant percentage of slippage or commission per transaction.
On the other hand, if the trader goes from the regime 2 to the regime 1, in other words, if the trader sells the asset,
then the cost (benefit) function is
\begin{equation}\label{eq:selling.benefit}
c_{2,1}(x) = -e^x(1 - K).
\end{equation}
The set of multiple priors is
\[
\Theta^{x,i} = [-\kappa,\kappa],\quad\kappa\geq 0,
\]
for all $x\in\real$ and $i\in\mathcal I$.
Therefore, we assume $\kappa$-ignorance.

The buy low and sell high problem under ambiguity can be interpreted as the following optimal switching problem,
\begin{equation}\label{eq:buy.low.sell.high.problem}
v(x,i) = \sup_{\alpha\in\controlinf{0}{i}}\inf_{\theta\in\Theta[0,\infty)}\expect\left[-\sum_{k=1}^\infty e^{-\rho \tau_k}\zeta^{\theta,0}_{\tau_k} c_{i_{k-1},i_k}(X_{\tau_k}^{x})\right].
\end{equation}
More directly, the problem \cref{eq:buy.low.sell.high.problem} can be expressed as
\begin{align*}
&v(x,1) \\
&= \sup_{\alpha\in\controlinf{0}{i}}\inf_{\theta\in\Theta[0,\infty)}\expect\left[\sum_{k=1}^\infty \Big(e^{-\rho \tau_{2k}}\zeta^{\theta,0}_{\tau_{2k}} e^{X_{\tau_{2k}}^x}(1 - K) - e^{-\rho \tau_{2k-1}}\zeta^{\theta,0}_{\tau_{2k-1}} e^{X_{\tau_{2k-1}}^x}(1 + K)\Big)\right], \\
&v(x,2) \\
&= \sup_{\alpha\in\controlinf{0}{i}}\inf_{\theta\in\Theta[0,\infty)}\expect\left[e^{-\rho \tau_{1}}\zeta^{\theta,0}_{\tau_{1}} e^{X_{\tau_{1}}^x}(1 - K) \right.\\
&\qquad\qquad\qquad\left.+\sum_{k=1}^\infty \Big(e^{-\rho \tau_{2k+1}}\zeta^{\theta,0}_{\tau_{2k+1}} e^{X_{\tau_{2k+1}}^x}(1 - K) - e^{-\rho \tau_{2k}}\zeta^{\theta,0}_{\tau_{2k}} e^{X_{\tau_{2k}}^x}(1 + K)\Big)\right].
\end{align*}

The cost/benefit functions \cref{eq:buying.cost,eq:selling.benefit} do not satisfy the polynomial growth condition and the strong triangular condition.
However, changing variables from $X$ to $S$, then these functions satisfy the polynomial growth condition.
Furthermore, we can easily prove \cref{Prop:cost.upper.bounded} in the problem \cref{eq:buy.low.sell.high.problem} (see Lemma 4 in \cite{zhang2008trading} and \cref{Remark:cost.upper.bounded.infinite} in this paper).
Therefore, we can apply the method in \cref{sec:infinite.horizon}.
Note that for sufficiently large constant $C\geq 0$, the following function satisfies the temporary terminal conditions:
\[
g(x,i) = -\1_{\{i=1\}}e^x(1 - K) - C.
\]
It is easy to show that $g$ satisfies the sufficient condition \cref{eq:sufficient.condition.terminal.submartingale.infinite} for sufficiently large $C$.

According to \cref{Prop:uniform.viscosity}, the value function $v$ is a viscosity solution of the following system of PDEs.
\begin{align}
&\min\{-\ig v(x,1) + \rho v(x,1) + \kappa\sigma|\nabla v(x,1)|, v(x,1) - v(x,2) + e^x(1 + K)\} = 0, \label{eq:short.PDE}\\
&\min\{-\ig v(x,2) + \rho v(x,2) + \kappa\sigma|\nabla v(x,2)|, v(x,2) - v(x,1) - e^x(1 - K)\} = 0, \label{eq:long.PDE}
\end{align}
where
\[
\ig f(x) = a(b - x)\nabla f(x) + \frac{\sigma^2}{2}\frac{\p^2 f(x)}{\p x^2}.
\]
Unfortunately, the problem \cref{eq:buy.low.sell.high.problem} does not satisfy the monotone conditions,
therefore we need to solve the system of PDEs \cref{eq:short.PDE,eq:long.PDE}.
It seems to be difficult to solve this system since it contains the absolute values of the first derivatives of $v$.
However, we can find a continuous solution to the system of PDEs \cref{eq:short.PDE,eq:long.PDE} using the smooth-fit techniques (for details of the smooth-fit techniques, we refer to Chapter 5 in \cite{pham2009continuous}).

First, let $\mathcal C_1$ be a continuation region of the regime 1 such that
\[
\mathcal C_1 = (x_1,\infty),
\]
for some $x_1$. Thus, the trader in the flat position buys the asset whenever the asset price falls below $e^{x_1}$. Also let $\mathcal C_2$ be a continuation region of the regime 2 such that
\[
\mathcal C_2 = (-\infty, x_2),
\]
for some $x_2$. Thus, the trader in the long position sells the asset whenever the asset price exceeds $e^{x_2}$.
Naturally we impose $x_1\leq x_2$.
We assume that
\begin{equation}\label{eq:monotones.buy.sell}
\nabla v(x,1) \leq 0,\; \forall x\in \mathcal C_1, \quad\mbox{and}\quad
\nabla v(x,2) \geq 0,\; \forall x\in \mathcal C_2.
\end{equation}
By \cite{zhang2008trading},
the PDE,
\[
-\ig V(x,1) + \rho V(x,1) - \kappa\sigma\nabla V(x,1) = 0,
\]
on $\mathcal C_1$ has a solution such that
\[
V(x,1) = C_1\varphi_1(x),
\]
where $C_1,\;m=\sqrt{2a}/\sigma,$ and $\lambda = \rho / a$ are constants, and
\[
\varphi_1(x) = \int_0^\infty t^{\lambda - 1} e^{-0.5 t^2 + m (b + \kappa \sigma / a - x)t}\diff t.
\]
Similarly, the PDE,
\[
-\ig V(x,2) + \rho V(x,2) + \kappa\sigma\nabla V(x,2) = 0,
\]
on $\mathcal C_2$ has a solution such that
\[
V(x,2) = C_2\varphi_2(x),
\]
where $C_2$ is a constant and
\[
\varphi_2(x) = \int_0^\infty t^{\lambda - 1} e^{-0.5 t^2 - m (b - \kappa \sigma / a - x)t}\diff t.
\]
Now, let us guess that candidates of the solution to the PDEs \cref{eq:short.PDE,eq:long.PDE} are
\begin{align}
v(x,1) &= \left\{\begin{array}{ll}
                 V(x,1),              &\quad \mbox{if }x\in\mathcal C_1, \\
                 V(x,2) - e^x(1 + K), &\quad \mbox{if }x\notin\mathcal C_1,
                 \end{array}\right.\label{eq:buy.sell.value.function.1}\\
v(x,2) &= \left\{\begin{array}{ll}
                 V(x,2),              &\quad \mbox{if }x\in\mathcal C_2, \\
                 V(x,1) + e^x(1 - K), &\quad \mbox{if }x\notin\mathcal C_2.
                 \end{array}\right.\label{eq:buy.sell.value.function.2}
\end{align}
Let
\begin{align*}
\varphi_1^*(x) = \int_0^\infty t^{\lambda} e^{-0.5 t^2 + m (b + \kappa \sigma / a - x)t}\diff t,\quad
\varphi_2^*(x) = \int_0^\infty t^{\lambda} e^{-0.5 t^2 - m (b - \kappa \sigma / a - x)t}\diff t.
\end{align*}
Then, $\nabla V(x,1) = -m C_1\varphi_1^*(x)$ and $\nabla V(x,2) = mC_2\varphi_2^*(x)$.
Hence, by the conditions \cref{eq:monotones.buy.sell}, we need $C_1\geq 0$ and $C_2 \geq 0$.
By the smooth-fit conditions, we need
\begin{align}
&\left\{\begin{array}{l}
V(x_1,1) = V(x_1,2) - e^{x_1}(1+K), \\
\nabla V(x_1,1) = \nabla V(x_1,2) - e^{x_1}(1+K), \\
V(x_2,2) = V(x_2,1) + e^{x_2}(1-K), \\
\nabla V(x_2,2) = \nabla V(x_2,1) + e^{x_2}(1-K),
\end{array}\right. \label{eq:smooth.fit.01}\\
&\left\{\begin{array}{l}
v(x,1) \geq v(x,2) - e^{x}(1+K), \quad\mbox{on }(x_1,\infty), \\
v(x,2) \geq v(x,1) + e^{x}(1-K), \quad\mbox{on }(-\infty,x_2),
\end{array}\right. \label{eq:smooth.fit.02}\\
&\left\{\begin{array}{l}
(- \ig + \rho + \kappa\sigma|\nabla|)(V(x,2) - e^{x}(1+K))\geq 0, \quad\mbox{on }(-\infty,x_1), \\
(- \ig + \rho + \kappa\sigma|\nabla|)(V(x,1) + e^{x}(1-K))\geq 0, \quad\mbox{on }(x_2,\infty).
\end{array}\right. \label{eq:smooth.fit.03}
\end{align}
After simple algebraic computation, the equalities \cref{eq:smooth.fit.01} can be expressed as
\begin{align}\label{eq:smooth.fit.a01}
\left(
      \begin{array}{c}
       C_1 \\
       C_2
      \end{array}
\right)
 &=
e^{x_1}(1+K)
\left(
      \begin{array}{cc}
       -\varphi_1(x_1) &\varphi_2(x_1) \\
       \varphi_1^*(x_1) &\varphi_2^*(x_1) 
      \end{array}
\right)^{-1}
\left(
      \begin{array}{c}
       1 \\
       1 / m
      \end{array}
\right)  \\
 &= e^{x_2}(1-K)
\left(
      \begin{array}{cc}
       -\varphi_1(x_2) &\varphi_2(x_2) \\
       \varphi_1^*(x_2) &\varphi_2^*(x_2)
      \end{array}
\right)^{-1}
\left(
      \begin{array}{c}
       1 \\
       1 / m
      \end{array}
\right) \geq 0.\nonumber
\end{align}
By the definitions of $v$, the inequalities \cref{eq:smooth.fit.02} are equivalent to
\begin{equation}\label{eq:smooth.fit.a02}
V(x,1) \geq V(x,2) - e^{x}(1+K), \quad
V(x,2) \geq V(x,1) + e^{x}(1-K),
\end{equation}
on $(x_1,x_2)$.
For the first inequality of \cref{eq:smooth.fit.03}, we have
\begin{align*}
(- \ig + \rho + \kappa\sigma|\nabla|)(V(x,2) - e^{x}(1+K)) 
& = (- \ig + \rho + \kappa\sigma|\nabla|)( - e^{x}(1+K)) \\
& = -\Big(\rho - a (b - x) - \frac{\sigma^2}{2} - \kappa\sigma\Big)e^x(1+K) \geq 0
\end{align*}
on $(-\infty,x_1)$ since $(-\infty,x_1)\subseteq\mathcal C_2$.
Thus, the condition expressed by the first inequality is equivalent to
\begin{equation}\label{eq:smooth.fit.a03}
x_1 \leq \frac{1}{a}\left(\frac{\sigma^2}{2} + ab + \kappa\sigma - \rho\right).
\end{equation}
Similarly, the condition expressed by the second inequality of \cref{eq:smooth.fit.03} is equivalent to
\begin{equation}\label{eq:smooth.fit.a04}
x_2 \geq \frac{1}{a}\left(\frac{\sigma^2}{2} + ab - \kappa\sigma - \rho\right).
\end{equation}
Finally, we need
\begin{align}\label{eq:smooth.fit.a05}
e^{x_2}(1 - K) > e^{x_1}(1 + K)
&\Leftrightarrow x_2 - x_1 > \log(1 + K) - \log(1 - K).
\end{align}
Hence, if $C_1,C_2,x_1$ and $x_2$ satisfy the conditions \cref{eq:smooth.fit.a01,eq:smooth.fit.a02,eq:smooth.fit.a03,eq:smooth.fit.a04,eq:smooth.fit.a05},
then the candidates of the solutions \cref{eq:buy.sell.value.function.1,eq:buy.sell.value.function.2} are true viscosity solutions to the system of the PDEs \cref{eq:short.PDE,eq:long.PDE}.

To illustrate effects of ambiguity, we conduct a numerical simulation.
Let $a = 0.8,\;b = 2,\;\sigma=0.5,\;\rho = 0.5,$ and $K=0.01$.
The values of these parameters are the same as \cite{zhang2008trading}.
We compute thresholds $(x_1,x_2)$ with different degrees of ambiguity $\kappa$.

\begin{figure}[tbhp]
 \centering
  \includegraphics[width=10.0cm,height=7.0cm]{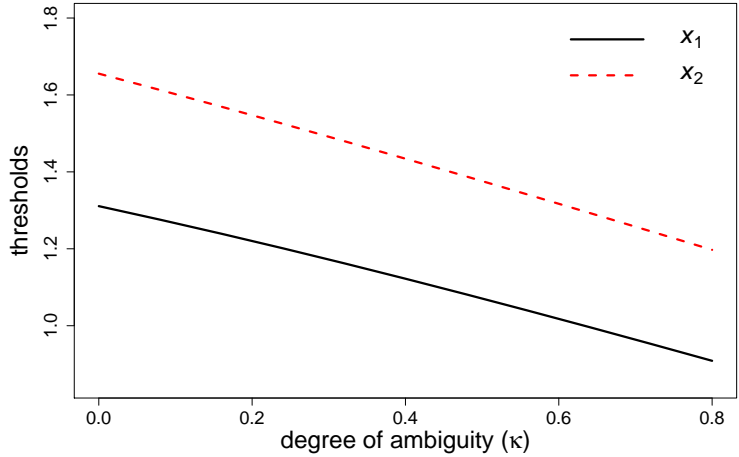}
  \caption{\textbf{The Optimal Switching Thresholds in the Buy Low and Sell High Problem.}
           $x_1$ under different $\kappa$ is plotted in the solid line.
           $x_2$ under different $\kappa$ is plotted in the dashed line.}
  \label{fig:buy.sell.01}
\end{figure}

\Cref{fig:buy.sell.01} displays the thresholds.
According to \cref{fig:buy.sell.01}, in a larger degree of ambiguity, both of the optimal thresholds become small.
The long position trader (that is, the initial regime is 2) considers the worst case that the steady mean of $X$ is smaller than that without ambiguity.
Therefore, the trader sells the asset at a lower price than that without ambiguity.

On the other hand, in the flat position case (that is, the initial regime is 1), the trader also buys the asset at a lower price than that without ambiguity.
That is because a gain of the trader in the flat position does not realize until he or she sells the asset.
Now, we assume that the trader considers the case when the steady mean of $X$ is larger than that without ambiguity.
Then, the trader can expect a bigger profit in his or her belief than that in the true probability measure.
This is a contradiction since the trader considers the worst case.
Therefore, even if the trader has the flat position, he or she considers the case that the steady mean of $X$ is smaller than that without ambiguity.
Hence, the optimal thresholds of buying the asset under ambiguity is lower than that without ambiguity.

In \cite{zhang2008trading}, the comparative statics with varying the steady mean of $X$, (i.e., $b$), is conducted.
The results in \cite{zhang2008trading} are that in a small $b$, both of the optimal thresholds are also small.
These are similar to the results in large ambiguity.
However, the results under large ambiguity can not be reproduced by a small $b$.
By the equality \cref{eq:smooth.fit.a01} with $\kappa = 0$,
the optimal thresholds under the steady mean $b$ are equal to the optimal thresholds under the steady mean $\widetilde b$ plus $b - \widetilde b$ for all $b,\widetilde b\in\real$ if the other parameters are the same.
Therefore, the optimal thresholds are linear in the steady mean $b$.

\begin{figure}[tbhp]
 \centering
  \includegraphics[width=10.0cm,height=7.0cm]{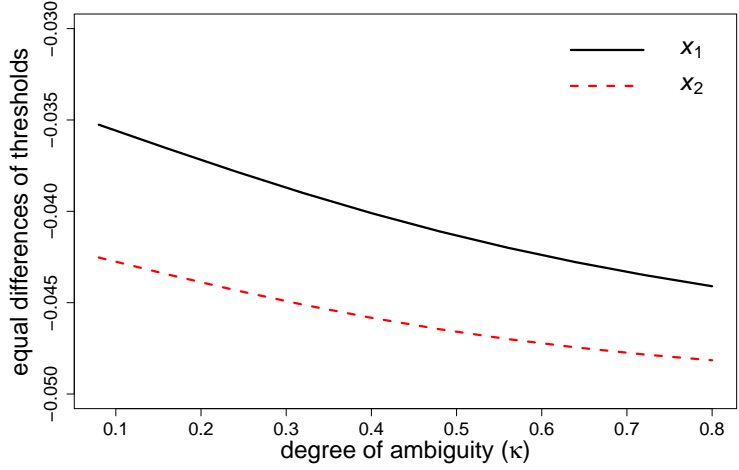}
  \caption{\textbf{The Equal Differences of the Optimal Switching Thresholds in the Buy Low and Sell High Problem.}
           $x_1$ under different $\kappa$ is plotted in the solid line.
           $x_2$ under different $\kappa$ is plotted in the dashed line.
           Each interval of $\kappa$ is 0.08.}
  \label{fig:buy.sell.02}
\end{figure}

On the other hand, \cref{fig:buy.sell.02} displays equal differences of the optimal thresholds with different degrees of ambiguity.
According to \cref{fig:buy.sell.02}, the equal differences are not constant, therefore the optimal thresholds are not linear in the degree of ambiguity $\kappa$.
Our PDEs \cref{eq:short.PDE,eq:long.PDE} cause these non-linearities.
The PDEs \cref{eq:short.PDE,eq:long.PDE} can not be expressed as any variational inequality of an optimal switching problem without ambiguity since these do not satisfy the monotone conditions.
Indeed, the difference $x_2 - x_1$ without ambiguity is constant over $b$,
whereas $x_2 - x_1$ is small with large $\kappa$.
Thus, the optimal switching problem under ambiguity can generate this interesting result which can not be reproduced by the problem without ambiguity.

\section*{Acknowledgments}
The author is grateful to Masahiko Egami for helpful advices and discussions on this paper.

\appendix
\section{\texorpdfstring{The Moment Estimates of $X$}{The Moment Estimates of X}}\label{subsec:moment.estimate.X}

\begin{proof}[Proof of \cref{Prop:q.th.integrable}]
Since $x\rightarrow\|x\|^q$ is twice continuously differentiable for all $q\geq 4$,
we can apply the Ito's lemma to $\|X_s^{t,x,i,\alpha}\|^q$.
Then, for all $s\in[t,T]$, using the quadratic growth condition for $b$ and $\sigma$, we have
\begin{align*}
\|X_s^{t,x,i,\alpha}\|^q
 &= \|x\|^q + \int_t^sq\|X_r^{t,x,i,\alpha}\|^{q-2}(X_r^{t,x,i,\alpha})^\prime b(r,X_r^{t,x,i,\alpha},\alpha_r)\diff r \\
 &\quad + \frac{1}{2}\int_t^s\Big(q(q-2)\|X_r^{t,x,i,\alpha}\|^{q-4}\|\sigma^\prime(r,X_r^{t,x,i,\alpha},\alpha_r)X_r^{t,x,i,\alpha}\|^2 \\
 &\qquad\qquad q\|X_r^{t,x,i,\alpha}\|^{q-2}\|\sigma(r,X_r^{t,x,i,\alpha},\alpha_r)\|^2\Big)\diff r \\
 &\quad + \int_t^sq\|X_r^{t,x,i,\alpha}\|^{q-2}(X_r^{t,x,i,\alpha})^\prime\sigma(r,X_r^{t,x,i,\alpha},\alpha_r)\diff W_r\\
 &\leq \|x\|^q + \widehat C_q \int_t^s\Big(1 + \|X_r^{t,x,i,\alpha}\|^q\Big)\diff r \\
 &\quad + q\int_t^s\|X_r^{t,x,i,\alpha}\|^{q-2}(X_r^{t,x,i,\alpha})^\prime\sigma(r,X_r^{t,x,i,\alpha},\alpha_r)\diff W_r,
\end{align*}
where $\widehat C_q$ is the constant only depending on $q$ and $L$.
The above stochastic integral in the right hand side is a local martingale.
Hence, there exists an increasing sequence of stopping times $(\tau_n)_{n\geq 1}$ such that $\tau_n\rightarrow\infty$ and
\begin{align*}
\expect[\|X_{s\wedge\tau_n}^{t,x,i,\alpha}\|^q]
\leq \|x\|^2 + \widehat C_q \expect\left[\int_t^{s\wedge\tau_n}\Big(1 + \|X_r^{t,x,i,\alpha}\|^q\Big)\diff r\right],
\end{align*}
for all $s\in[t,T]$ and $n\geq 1$, where $a\wedge b = \min\{a,b\}$.
By the Fatou lemma, the monotone convergence theorem and the continuity of $X^{t,x,i,\alpha}$,
taking a limit, we have
\begin{align*}
1 + \expect[\|X_{s}^{t,x,i,\alpha}\|^q]
&\leq 1 + \liminf_{n\rightarrow\infty}\expect[\|X_{s\wedge\tau_n}^{t,x,i,\alpha}\|^q] \\
&\leq 1 + \|x\|^2 + \widehat C_q \expect\left[\int_t^{s}\Big(1 + \|X_r^{t,x,i,\alpha}\|^q\Big)\diff r\right] \\
&\quad = 1 + \|x\|^2 + \widehat C_q \int_t^{s}\expect\left[1 + \|X_r^{t,x,i,\alpha}\|^q\right]\diff r.
\end{align*}
By the Gronwall lemma, we have
\begin{equation}
\expect[\|X_{s}^{t,x,i,\alpha}\|^q]
\leq 1 + \expect[\|X_{s}^{t,x,i,\alpha}\|^q]
\leq (1 + \|x\|^q)e^{\widehat C_q(s-t)}, \label{eq:usual.estimate.X}
\end{equation}
for all $0\leq t\leq s$ and $x\in\real^d$.
Similarly, we have
\begin{align*}
\max_{t\leq s\leq T}\|X_s^{t,x,i,\alpha}\|^q
 &\leq \|x\|^q + \widehat C_q \int_t^T\Big(1 + \|X_r^{t,x,i,\alpha}\|^q\Big)\diff r \\
 &\qquad + q\max_{t\leq s\leq T}\int_t^s\|X_r^{t,x,i,\alpha}\|^{q-2}(X_r^{t,x,i,\alpha})^\prime\sigma(r,X_r^{t,x,i,\alpha},\alpha_r)\diff W_r.
\end{align*}
By the Burkholder-Davis-Gundy inequality and Jensen inequality, we have
\begin{align*}
&\expect\left[\max_{t\leq s\leq T}\int_t^s\|X_r^{t,x,i,\alpha}\|^{q-2}(X_r^{t,x,i,\alpha})^\prime\sigma(r,X_r^{t,x,i,\alpha},\alpha_r)\diff W_r\right] \\
&\leq \expect\left[\left(\int_t^T\|X_r^{t,x,i,\alpha}\|^{2q-4}\|\sigma^\prime(r,X_r^{t,x,i,\alpha},\alpha_r)X_r^{t,x,i,\alpha}\|^2\diff r\right)^{1/2}\right] \\
&\leq \expect\left[\left(\int_t^T\|X_r^{t,x,i,\alpha}\|^{2q-2}\|\sigma(r,X_r^{t,x,i,\alpha},\alpha_r)\|^2\diff r\right)^{1/2}\right] \\
&\leq L\expect\left[\left(\int_t^T\|X_r^{t,x,i,\alpha}\|^{2q-2}\Big(1 + \|X_r^{t,x,i,\alpha}\|^2\Big)\diff r\right)^{1/2}\right] \\
&\leq \sqrt{2}L\left(\int_t^T\expect\left[1 + \|X_r^{t,x,i,\alpha}\|^{2q}\right]\diff r\right)^{1/2}.
\end{align*}
Furthermore, using the inequality \cref{eq:usual.estimate.X}, we have
\begin{align*}
\left(\int_t^T\expect\left[1 + \|X_r^{t,x,i,\alpha}\|^{2q}\right]\diff r\right)^{1/2} 
&\leq \left(\int_t^T(1 + \|x\|^{2q})e^{\widehat C_{2q}(r-t)}\diff r\right)^{1/2} \\
&\leq \frac{1}{\widehat C_{2q}^{1/2}}(1 + \|x\|^q)e^{\widehat C_{2q}(T-t)/2}.
\end{align*}
Thus, we obtain
\begin{align*}
\expect\left[\max_{t\leq s\leq T}\|X_{s}^{t,x,i,\alpha}\|^q\right]
&\leq 1 + \expect\left[\max_{t\leq s\leq T}\|X_{s}^{t,x,i,\alpha}\|^q\right] 
\leq C_{q,X}(1 + \|x\|^q)e^{C_q(T-t)},
\end{align*}
where
\[
C_{q,X} = \max\left\{1,\widehat C_q,\sqrt{\frac{2}{\widehat C_q}}qL\right\},\quad
C_q = \frac{\widehat C_{2q}}{2}.
\]
If $q\in(0, 4)$, then by the Jensen inequality, we have
\begin{align*}
\expect\left[\max_{t\leq s\leq T}\|X_{s}^{t,x,i,\alpha}\|^q\right]
 = \expect\left[\Big(\max_{t\leq s\leq T}\|X_{s}^{t,x,i,\alpha}\|^4\Big)^{q/4}\right] 
 &\leq \left(\expect\left[\max_{t\leq s\leq T}\|X_{s}^{t,x,i,\alpha}\|^4\right]\right)^{q/4}\\
 &\leq C_{4,X}^{q/4}(1 + \|x\|^4)^{q/4}e^{(qC_4/4)(T-t)} \\
 &\leq C_{4,X}^{q/4}(1 + \|x\|^q)e^{(qC_4/4)(T-t)}.
\end{align*}
It is easy to show the inequality \cref{eq:Lp.estimate.X.infinite} applying the Ito's lemma to $e^{-\rho s}(1+\|X_s^{t,x,i,\alpha}\|^q)$.
\end{proof}

\section{\texorpdfstring{Verification of $Y$}{Verification of Y}}\label{subsec:verification.Y}
\begin{proof}[Proof of \cref{Prop:excess.Y}]
\textit{Step.1 $Y$ is at least as large as any objective function.}
We define a sequence of random variables as follows.
\[
X^0 := \eta,\quad
X^k := X_{\tau_k}^{\tau_{k-1},X^{k-1},i_{k-1}},\quad k\geq 1.
\]
By the definition, $X^k\in L^{2q}_{\tau_k}(\real^d)$ for all $k$.
Furthermore, for all $k\geq 1$ and $t\in[\tau_{k-1},\tau_k)$,
the strong uniqueness of $X$ leads to that
\begin{equation}\label{eq:X.uniqueness2}
X_t^{\nu,\eta,\iota,\alpha} = X_t^{\tau_{k-1},X^{k-1},i_{k-1}},
\end{equation}
$\prob$-almost surely.

Let $N = \inf\{k\;|\;\tau_k\geq T\}$ and $\tau_0 = \nu$.
By the admissibility of $\alpha=(\tau_k,i_k)_{k\geq 0}$,
$N$ is finite $\prob$-almost surely.
Let $\overline Z^{\nu,\eta,\iota,\alpha}$ be a stochastic process such that
\begin{align}
\overline Z^{\nu,\eta,\iota,\alpha}_t &= \sum_{k=1}^{N}Z^{\tau_{k-1},X^{k-1},i_{k-1}}_{t}\1_{[\tau_{k-1},\tau_{k})}(t),\quad t\in[0,T],\label{eq:Z.over}
\end{align}

Let $D^k$ be a stochastic process on $[\tau_{k-1},\tau_k]$ such that
\[
D_t^k = \exp\left\{-\int_{\tau_{k-1}}^t\rho(s,X_s^{\tau_{k-1},X^{k-1},i_{k-1}},i_{k-1})\diff s\right\},\quad t\in[\tau_{k-1},\tau_k].
\]
By the equality \cref{eq:X.uniqueness2}, we have
\begin{align}
D_t^{\nu,\eta,\iota,\alpha} &=  D_t^1,\;t\in[\tau_0,\tau_1],\nonumber\\
D_t^{\nu,\eta,\iota,\alpha} &=  D_{\tau_{k-1}}^{\nu,\eta,\iota,\alpha}D_t^k,\;t\in[\tau_{k-1},\tau_k],\;k\geq 2. \nonumber
\end{align}

Then, for any $k\geq 1$,
applying the Ito's lemma to $D_t^kY_t^{\tau_{k-1},X^{k-1},i_{k-1}}$ leads to
\begin{align*}
Y_{\tau_{k-1}}^{\tau_{k-1},X^{k-1},i_{k-1}}
&\geq D_{\tau_{k}}^k Y_{\tau_{k}}^{\tau_{k-1},X^{k-1},i_{k-1}} 
 + \int_{\tau_{k-1}}^{\tau_k}D_s^k\Big(\psi(s,X_s^{\tau_{k-1},X^{k-1},i_{k-1}},i_{k-1}) \\
&\quad - \varsigma(s,X_s^{\tau_{k-1},X^{k-1},i_{k-1}},i_{k-1},Z_s^{\tau_{k-1},X^{k-1},i_{k-1}})\Big)\diff s  \\
&\qquad - \int_{\tau_{k-1}}^{\tau_k}D_s^k(Z_s^{\tau_{k-1},X^{k-1},i_{k-1}})^\prime \diff W_s,
\end{align*}
where we have used the non-negativity of $D_t^k$ and monotonicity of $K_t^{\tau_{k-1},X^{k-1},i_{k-1}}$.
Furthermore, by the pathwise uniqueness of $X$ and $Y$ (see \cref{Prop:minimal.uniqueness,eq:X.uniqueness2,eq:Z.over}),
we have
\begin{multline*}
Y_{\tau_{k-1}}^{\tau_{k-1},X^{k-1},i_{k-1}} \\
\geq D_{\tau_{k}}^k Y_{\tau_{k}}^{\tau_{k},X^k,i_{k-1}} 
 + \int_{\tau_{k-1}}^{\tau_k}D_s^k\Big(\psi(s,X_s^{\nu,\eta,\iota,\alpha},\alpha_s) - \varsigma(s,X_s^{\nu,\eta,\iota,\alpha},\alpha_s,\overline Z_s^{\nu,\eta,\iota,\alpha})\Big)\diff s \\
 - \int_{\tau_{k-1}}^{\tau_k}D_s^k(\overline Z_s^{\nu,\eta,\iota,\alpha})^\prime \diff W_s.
\end{multline*}

Since each $Y_{\tau_{k}}^{\tau_{k},X^k,i_{k-1}}$ dominates the lower barrier, we obtain
\begin{align*}
 Y^{\nu,\eta,\iota}_\nu 
 &\geq D_{\tau_{1}}^1 Y_{\tau_{1}}^{\tau_{1},X^1,i_{0}} + \int_{\tau_{0}}^{\tau_1}D_s^1\Big(\psi(s,X_s^{\nu,\eta,\iota,\alpha},\alpha_s) - \varsigma(s,X_s^{\nu,\eta,\iota,\alpha},\alpha_s,\overline Z_s^{\nu,\eta,\iota,\alpha})\Big)\diff s \\
 &\quad - \int_{\tau_{0}}^{\tau_1}D_s^1(\overline Z_s^{\nu,\eta,\iota,\alpha})^\prime \diff W_s\\
 &\geq D_{\tau_{1}}^1\left(Y_{\tau_{1}}^{\tau_{1},X^1,i_{1}} - c_{i_0,i_1}(\tau_1,X_{\tau_1}^{\tau_1,X^1,i_0})\right)\\
 &\quad + \int_{\tau_{0}}^{\tau_1}D_s^1\Big(\psi(s,X_s^{\nu,\eta,\iota,\alpha},\alpha_s) - \varsigma(s,X_s^{\nu,\eta,\iota,\alpha},\alpha_s,\overline Z_s^{\nu,\eta,\iota,\alpha})\Big)\diff s \\
 &\quad - \int_{\tau_{0}}^{\tau_1}D_s^1(\overline Z_s^{\nu,\eta,\iota,\alpha})^\prime \diff W_s\\
 &= D_{\tau_{1}}^1 Y_{\tau_{1}}^{\tau_{1},X^1,i_{1}} - D_{\tau_{1}}^1 c_{i_0,i_1}(\tau_1,X_{\tau_1}^{\nu,\eta,\iota,\alpha})\\
 &\quad + \int_{\tau_{0}}^{\tau_1}D_s^1\Big(\psi(s,X_s^{\nu,\eta,\iota,\alpha},\alpha_s) - \varsigma(s,X_s^{\nu,\eta,\iota,\alpha},\alpha_s,\overline Z_s^{\nu,\eta,\iota,\alpha})\Big)\diff s \\
 &\quad - \int_{\tau_{0}}^{\tau_1}D_s^1(\overline Z_s^{\nu,\eta,\iota,\alpha})^\prime \diff W_s\\
 &\geq D_{\tau_{1}}^1 Y_{\tau_{2}}^{\tau_{1},X^1,i_{1}} - D_{\tau_{1}}^1 c_{i_0,i_1}(\tau_1,X_{\tau_1}^{\nu,\eta,\iota,\alpha}) \\
 &\quad + \int_{\tau_{0}}^{\tau_1}D_s^1\Big(\psi(s,X_s^{\nu,\eta,\iota,\alpha},\alpha_s) -\varsigma(s,X_s^{\nu,\eta,\iota,\alpha},\alpha_s,\overline Z_s^{\nu,\eta,\iota,\alpha})\Big)\diff s \\
 &\quad + D_{\tau_{1}}^1\int_{\tau_{1}}^{\tau_2}D_s^2\Big(\psi(s,X_s^{\nu,\eta,\iota,\alpha},\alpha_s) - \varsigma(s,X_s^{\nu,\eta,\iota,\alpha},\alpha_s,\overline Z_s^{\nu,\eta,\iota,\alpha})\Big)\diff s \\
 &\quad - \int_{\tau_{0}}^{\tau_1}D_s^1(\overline Z_s^{\nu,\eta,\iota,\alpha})^\prime \diff W_s - D_{\tau_{1}}^1\int_{\tau_{1}}^{\tau_2}D_s^2(\overline Z_s^{\nu,\eta,\iota,\alpha})^\prime \diff W_s \\
 &= D_{\tau_{1}}^1 Y_{\tau_{2}}^{\tau_{2},X^2,i_{1}} - D_{\tau_{1}}^1 c_{i_0,i_1}(\tau_1,X_{\tau_1}^{\nu,\eta,\iota,\alpha})\\
 &\quad + \int_{\tau_{0}}^{\tau_2}D_s^{\nu,\eta,\iota,\alpha}\Big(\psi(s,X_s^{\nu,\eta,\iota,\alpha},\alpha_s) - \varsigma(s,X_s^{\nu,\eta,\iota,\alpha},\alpha_s,\overline Z_s^{\nu,\eta,\iota,\alpha})\Big)\diff s \\
 &\quad - \int_{\tau_{0}}^{\tau_2}D_s^{\nu,\eta,\iota,\alpha}(\overline Z_s^{\nu,\eta,\iota,\alpha})^\prime \diff W_s,
\end{align*}
where we have used \cref{Prop:minimal.uniqueness,eq:X.uniqueness2}.
By repeating this up to $n\geq 1$, we have
\begin{align*}
Y^{\nu,\eta,\iota}_\nu
 &\geq  D_{\tau_{n}}^{\nu,\eta,\iota,\alpha}Y^{\tau_n,X^n,i_{n-1}}_{\tau_n} - \sum_{k=1}^{n-1} D_{\tau_{k}}^{\nu,\eta,\iota,\alpha}c_{i_{k-1},i_k}(\tau_k,X_{\tau_k}^{\nu,\eta,\iota,\alpha}) \\
 &\quad + \int_{\tau_{0}}^{\tau_n}D_s^{\nu,\eta,\iota,\alpha}\Big(\psi(s,X_s^{\nu,\eta,\iota,\alpha},\alpha_s) - \varsigma(s,X_s^{\nu,\eta,\iota,\alpha},\alpha_s,\overline Z_s^{\nu,\eta,\iota,\alpha})\Big)\diff s \\
 &\qquad - \int_{\tau_{0}}^{\tau_n}D_s^{\nu,\eta,\iota,\alpha}(\overline Z_s^{\nu,\eta,\iota,\alpha})^\prime \diff W_s,
\end{align*}
for all $n$.
Since $\tau_n\rightarrow T\;\prob$-a.s. and $Y^{\nu,\eta,\iota}$ is continuous,
taking a limit, we have
\begin{align*}
Y^{\nu,\eta,\iota}_\nu
&\geq  D_{T}^{\nu,\eta,\iota,\alpha}g(X_T^{\nu,\eta,\iota,\alpha},\alpha_T) - \sum_{\nu\leq \tau_k\leq T} D_{\tau_{k}}^{\nu,\eta,\iota,\alpha}c_{i_{k-1},i_k}(\tau_k,X_{\tau_k}^{\nu,\eta,\iota,\alpha}) \\
&\quad + \int_{\nu}^{T}D_{s}^{\nu,\eta,\iota,\alpha}\Big(\psi(s,X_s^{\nu,\eta,\iota,\alpha},\alpha_s) - \varsigma(s,X_s^{\nu,\eta,\iota,\alpha},\alpha_s,\overline Z_s^{\nu,\eta,\iota,\alpha})\Big)\diff s \\
&\qquad - \int_{\nu}^{T}D_{s}^{\nu,\eta,\iota,\alpha}(\overline Z^{\nu,\eta,\iota,\alpha}_s)^\prime \diff W_s.
\end{align*}
Similarly to the above, we have
\begin{align*}
D_{t}^{\nu,\eta,\iota,\alpha}Y^{\nu,\eta,\iota}_t
&\geq  D_{T}^{\nu,\eta,\iota,\alpha}g(X_T^{\nu,\eta,\iota,\alpha},\alpha_T) - \sum_{t\leq \tau_k\leq T} D_{\tau_{k}}^{\nu,\eta,\iota,\alpha}c_{i_{k-1},i_k}(\tau_k,X_{\tau_k}^{\nu,\eta,\iota,\alpha})  \\
&\quad + \int_{t}^{T}D_{s}^{\nu,\eta,\iota,\alpha}\Big(\psi(s,X_s^{\nu,\eta,\iota,\alpha},\alpha_s) - \varsigma(s,X_s^{\nu,\eta,\iota,\alpha},\alpha_s,\overline Z_s^{\nu,\eta,\iota,\alpha})\Big)\diff s \nonumber \\
&\qquad - \int_{t}^{T}D_{s}^{\nu,\eta,\iota,\alpha}(\overline Z^{\nu,\eta,\iota,\alpha}_s)^\prime \diff W_s, \nonumber
\end{align*}
for all $t\in[\nu,T]$. On the other hand, we have
\begin{align*}
D_{t}^{\nu,\eta,\iota,\alpha}Y^{\nu,\eta,\iota,\alpha}_t
&= D_{T}^{\nu,\eta,\iota,\alpha}g(X_T^{\nu,\eta,\iota,\alpha},\alpha_T) - \sum_{t\leq \tau_k\leq T} D_{\tau_{k}}^{\nu,\eta,\iota,\alpha}c_{i_{k-1},i_k}(\tau_k,X_{\tau_k}^{\nu,\eta,\iota,\alpha}) \\
&\quad + \int_{t}^{T}D_{s}^{\nu,\eta,\iota,\alpha}\Big(\psi(s,X_s^{\nu,\eta,\iota,\alpha},\alpha_s) - \varsigma(s,X_s^{\nu,\eta,\iota,\alpha},\alpha_s, Z_s^{\nu,\eta,\iota,\alpha})\Big)\diff s \\
&\qquad - \int_{t}^{T}D_{s}^{\nu,\eta,\iota,\alpha}(Z^{\nu,\eta,\iota,\alpha}_s)^\prime \diff W_s,
\end{align*}
for all $t\in[\nu,T]$.
Hence, it holds that
\begin{multline}\label{eq:ineq.optim.01}
D_{t}^{\nu,\eta,\iota,\alpha}\Big(Y^{\nu,\eta,\iota}_t - Y^{\nu,\eta,\iota,\alpha}_t\Big) \\
\geq -\int_{t}^{T}D_{s}^{\nu,\eta,\iota,\alpha}\Big(\varsigma(s,X_s^{\nu,\eta,\iota,\alpha},\alpha_s,\overline Z_s^{\nu,\eta,\iota,\alpha}) - \varsigma(s,X_s^{\nu,\eta,\iota,\alpha},\alpha_s, Z_s^{\nu,\eta,\iota,\alpha})\Big)\diff s \\
 - \int_{t}^{T}D_{s}^{\nu,\eta,\iota,\alpha}(\overline Z^{\nu,\eta,\iota,\alpha}_s - Z^{\nu,\eta,\iota,\alpha}_s)^\prime \diff W_s \\
= \int_{t}^{T}D_{s}^{\nu,\eta,\iota,\alpha}\Delta_s^\prime(\overline Z^{\nu,\eta,\iota,\alpha}_s - Z^{\nu,\eta,\iota,\alpha}_s)\diff s 
 - \int_{t}^{T}D_{s}^{\nu,\eta,\iota,\alpha}(\overline Z^{\nu,\eta,\iota,\alpha}_s - Z^{\nu,\eta,\iota,\alpha}_s)^\prime \diff W_s, 
\end{multline}
where $(\Delta_s)_{\nu\leq s\leq T}$ is a $d$-dimensional adopted process as follows:
Now, we denote by $x_{i,s}$ the $i$th component of a random vector process $(x_u)_{u\geq 0}$ at time $s$.\\
Let $\overline Z^{\nu,\eta,\iota,\alpha,\; i}_{s} = (\overline Z^{\nu,\eta,\iota,\alpha}_{1,s},\dots,\overline Z^{\nu,\eta,\iota,\alpha}_{i-1,s},\overline Z^{\nu,\eta,\iota,\alpha}_{i,s},Z^{\nu,\eta,\iota,\alpha}_{i+1,s},\dots,Z^{\nu,\eta,\iota,\alpha}_{d,s})^\prime$
and \\
let $Z^{\nu,\eta,\iota,\alpha,\; i}_{s} = (\overline Z^{\nu,\eta,\iota,\alpha}_{1,s},\dots,\overline Z^{\nu,\eta,\iota,\alpha}_{i-1,s}, Z^{\nu,\eta,\iota,\alpha}_{i,s},Z^{\nu,\eta,\iota,\alpha}_{i+1,s},\dots,Z^{\nu,\eta,\iota,\alpha}_{d,s})^\prime$.
$\Delta_{i,s}$ is
\[
\Delta_{i,s} = -\dfrac{\varsigma(s,X_s^{\nu,\eta,\iota,\alpha},\alpha_s,\overline Z_s^{\nu,\eta,\iota,\alpha,\; i}) - \varsigma(s,X_s^{\nu,\eta,\iota,\alpha},\alpha_s, Z_s^{\nu,\eta,\iota,\alpha,\; i})}{\overline Z^{\nu,\eta,\iota,\alpha}_{i,s} - Z^{\nu,\eta,\iota,\alpha}_{i,s}},
\]
if $\overline Z^{\nu,\eta,\iota,\alpha}_{i,s} \neq Z^{\nu,\eta,\iota,\alpha}_{i,s}$ and $\Delta_{i,s} = 0$ otherwise.
Then, $(\Delta_s)_{\nu\leq s\leq T}$ is uniformly bounded since $z\rightarrow\varsigma(s,X_s^{\nu,\eta,\iota,\alpha},\alpha_s,z)$ is uniformly Lipschitz for all $s$.
This implies that the following process,
\[
\zeta_s^\Delta = \exp\left\{\int_\nu^s \Delta_u^\prime \diff W_u - \frac{1}{2}\int_\nu^s\|\Delta_u\|^2\diff u\right\},\quad s\geq \nu
\]
is a martingale.
Hence, we can define a new probability measure such that
\[
\prob^\Delta_T(A) := \expect[\1_A\zeta_T^\Delta],\quad A\in\mathcal F_T.
\]
Furthermore, by the Girsanov theorem,
the following process,
\[
W^\Delta_t := \int_\nu^t\Delta_s\diff s - W_t,\quad t\in[\nu,T],
\]
is a $d$-dimensional Brownian motion under $\prob^\Delta_T$.
We denote by $\expect^\Delta_T$ an expectation operator under $\prob_T^\Delta$.
Since $\overline Z^{\nu,\eta,\iota,\alpha}$ and $Z^{\nu,\eta,\iota,\alpha}$ are in $\HH{\nu}{d}$,
it holds that
\[
\expect^\Delta_T\left[\int_\nu^T(D_{s}^{\nu,\eta,\iota,\alpha})^2\|\overline Z^{\nu,\eta,\iota,\alpha}_s - Z^{\nu,\eta,\iota,\alpha}_s\|^2\diff s\right] < \infty.
\]
This implies that the stochastic integral
\[
\int_\nu^u D_{s}^{\nu,\eta,\iota,\alpha}(\overline Z^{\nu,\eta,\iota,\alpha}_s - Z^{\nu,\eta,\iota,\alpha}_s)^\prime\diff W_s^\Delta,\quad u\in[\nu,T],
\]
is a martingale under $\prob_T^\Delta$.
Hence, taking conditional expectation of the inequality \cref{eq:ineq.optim.01} under the probability measure $\prob_T^\Delta$ given by $\filt_t$,
we obtain
\[
Y^{\nu,\eta,\iota}_t - Y^{\nu,\eta,\iota,\alpha}_t \geq 0,
\]
$\prob$-almost surely for all $t\in[\nu,T]$.

\textit{Step.2 Optimality of $Y$.}
We first prove the admissibility of $\alpha^*$.
Let $\overline Z^{\nu,\eta,\iota,\alpha^*}_s$ be a stochastic process defined as \cref{eq:Z.over}.
Then, by the definition $\alpha^*$, $K^{\tau_{k-1},X_{\tau_{k-1}}^*,i_{k-1}^*}_s = 0$ for all $k\geq 1$ and $s\in[\tau_{k-1}^*,\tau_{k}^*]$.
Furthermore, it holds that
\[
Y_{\tau^*_k}^{\tau_{k-1}^*,X^*_{\tau^*_{k-1}},i^*_{k-1}} = Y_{\tau^*_k}^{\tau_{k-1}^*,X^*_{\tau^*_{k-1}},i^*_k} - c_{i^*_{k-1},i^*_k}(\tau^*_k,X^*_{\tau^*_k}),
\]
for all $k\geq 1$.
Hence, the following equality holds.
\begin{align}\label{eq:temp.Y.eq}
D_t^{\nu,\eta,\iota,\alpha^*}Y^{\nu,\eta,\iota}_t
&= D_{t\vee\tau_n^*}^{\nu,\eta,\iota,\alpha^*}Y^{\tau_n^*,X_{\tau^*_n}^*,i_{n-1}^*}_{t\vee \tau_n^*} - \sum_{k=1}^n D_{\tau_k^*}^{\nu,\eta,\iota,\alpha^*}c_{i_{k-1}^*,i_k^*}(\tau_k^*,X_{\tau_k^*}^*)\1_{[\nu,\tau_k^*]}(t) \\
 &\quad + \int_{t}^{t\vee\tau_n^*}D_s^{\nu,\eta,\iota,\alpha^*}\Big(\psi(s,X_s^*,\alpha_s^*) - \varsigma(s,X_s^*,\alpha_s,\overline Z_s^{\nu,\eta,\iota,\alpha^*})\Big)\diff s \nonumber\\
&\qquad - \int_{t}^{t\vee\tau_n^*}D_s^{\nu,\eta,\iota,\alpha^*}(\overline Z^{\nu,\eta,\iota,\alpha^*}_s)^\prime \diff W_s,\nonumber
\end{align}
for all $n\geq 1$, where $a\vee b = \max\{a,b\}$.
Let $N^* = \inf\{k\;|\;\tau_k^*\geq T\}$ and $B=\{N^*=+\infty\}$.
Suppose that $\prob(B) > 0$.
Then, as $\mathcal I$ is a finite set, there exists a finite loop $i_0,i_1,\dots,i_m,i_0$, $i_0\in\mathcal I,i_0\neq i_1$ such that
\[
Y_{\tau^*_{k_{q+l}}}^{\nu,\eta,i_{l-1}} = Y_{\tau^*_{k_{q+l}}}^{\nu,\eta,i_{l}} - c_{i_{l-1},i_l}(\tau^*_{k_{q+l}},X^*_{\tau^*_{k_{q+l}}})\mbox{ on } B,
\]
for all $l=1,\dots,m+1,\;q\geq 0$ and $i_{m+1}=i_0$, where $(\tau_{k_q}^*)_{q\geq 1}$ is a subsequence of $(\tau_k^*)_{k\geq 0}$.
Let $\overline\tau = \lim_{q\rightarrow\infty}\tau_{k_q}^*$.
Then $\overline\tau < T$ on $B$ and
\[
Y_{\overline\tau}^{\nu,\eta,i_{l-1}} = Y_{\overline\tau}^{\nu,\eta,i_{l}} - c_{i_{l-1},i_l}(\overline\tau,X^*_{\overline\tau})\mbox{ on } B,
\]
for all $l=1,\dots,m+1$.
This implies that
\[
\sum_{l=1}^{m+1}c_{i_{l-1},i_l}(\overline\tau,X^*_{\overline\tau}) = 0\mbox{ on } B,
\]
which is contradiction to \cref{Assump:funcs}.3. Therefore, $\prob(B) = 0$ and $N^*$ is finite $\prob$-almost surely.
Hence, taking the limit of \cref{eq:temp.Y.eq}, we have
\begin{align}\label{eq:now.Y.eq}
D_t^{\nu,\eta,\iota,\alpha^*}Y^{\nu,\eta,\iota}_t
&= D_T^{\nu,\eta,\iota,\alpha^*}g(X_T^*,\alpha_T^*) - \sum_{t\leq \tau_k^*\leq T} D_{\tau_k^*}^{\nu,\eta,\iota,\alpha^*}c_{i_{k-1}^*,i_k^*}(\tau_k^*,X_{\tau_k^*}^*) \\
 &\quad  + \int_{t}^{T}D_s^{\nu,\eta,\iota,\alpha^*}\Big(\psi(s,X_s^*,\alpha_s^*) - \varsigma(s,X_s^*,\alpha_s,\overline Z_s^{\nu,\eta,\iota,\alpha^*})\Big)\diff s \nonumber \\
&\qquad - \int_{t}^{T}D_s^{\nu,\eta,\iota,\alpha^*}(\overline Z^{\nu,\eta,\iota,\alpha^*}_s)^\prime \diff W_s.\nonumber
\end{align}
Since $(Y^{\nu,\eta,\iota},\overline Z^{\nu,\eta,\iota,\alpha^*})\in\SSd{\nu}\times\HH{\nu}{d}$ and since \cref{Assump:X.SDE,Assump:rect,Assump:funcs,Assump:varsigma} are satisfied,
$\sum_{\nu\leq \tau_k^*\leq T} c_{i_{k-1}^*,i_k^*}(\tau_k^*,X_{\tau_k^*}^*)$ is quadratic integrable under $\prob$.
Hence, $\alpha^*$ is admissible.

We consider the solution to the BSDE \cref{eq:BSDE.minimal} at $(\nu,\eta,\iota,\alpha^*)$,
denoted by $(Y^{\nu,\eta,\iota,\alpha^*},Z^{\nu,\eta,\iota,\alpha^*})$.
Then, combining \cref{eq:now.Y.eq} and $(Y^{\nu,\eta,\iota,\alpha^*},Z^{\nu,\eta,\iota,\alpha^*})$,
we obtain that
\begin{multline*}
D_t^{\nu,\eta,\iota,\alpha^*}\Big(Y^{\nu,\eta,\iota}_t - Y^{\nu,\eta,\iota,\alpha^*}_t\Big)\\
 = \int_{t}^{T}D_s^{\nu,\eta,\iota,\alpha^*}\Delta_s^\prime(\overline Z^{\nu,\eta,\iota,\alpha^*}_s - Z^{\nu,\eta,\iota,\alpha^*}_s)\diff s  - \int_{t}^{T}D_s^{\nu,\eta,\iota,\alpha^*}(\overline Z^{\nu,\eta,\iota,\alpha^*}_s - Z^{\nu,\eta,\iota,\alpha^*}_s)^\prime \diff W_s,
\end{multline*}
where $(\Delta_s)_{0\leq s\leq T}$ is the stochastic process defined in Step.1.
As well as Step.1, we conclude that
\[
Y^{\nu,\eta,\iota}_t = Y^{\nu,\eta,\iota,\alpha^*}_t,
\]
$\prob$-almost surely for all $t\in[\nu,T]$.
\end{proof}

\section{Verification in the Infinite Horizon}\label{subsec:verification.infinite.horizon}

\begin{proof}[Proof of \cref{Prop:convergence.verification}]
\textit{Step.1 Monotonicity of $\widehat Y$.}
Fix an arbitrary $0\leq T\leq \widetilde T,\;\nu\in\mathcal T_0^T$ and $\eta\in L_\nu^{2q}(\real^d)$.
Let $(\widehat Y^{T,\nu,\eta,i,n},\widehat Z^{T,\nu,\eta,i,n},\widehat K^{T,\nu,\eta,i,n})_{n\geq 0}$ be the Picard's iterations of 
$(\widehat Y^{T,\nu,\eta,i},\widehat Z^{T,\nu,\eta,i},\widehat K^{T,\nu,\eta,i})$ constructed in \cref{Thm:existence}.\\
Also let $(\widehat Y^{\widetilde T,\nu,\eta,i,n},\widehat Z^{\widetilde T,\nu,\eta,i,n},\widehat K^{\widetilde T,\nu,\eta,i,n})_{n\geq 0}$ be the Picard's iterations of \\
$(\widehat Y^{\widetilde T,\nu,\eta,i},\widehat Z^{\widetilde T,\nu,\eta,i},\widehat K^{\widetilde T,\nu,\eta,i})$ constructed in \cref{Thm:existence}.
Then, by the non-negative reward condition, temporary terminal condition and Proposition 2.2 in \cite{el1997backward}, we have
\begin{align*}
e^{-\rho T}\widehat Y_T^{\widetilde T,\nu,\eta,i,0} 
&= \inf_{\theta\in\Theta[T,\widetilde T]}\expect\biggl[e^{-\rho \widetilde T}\zeta^{\theta,T}_{\widetilde T}g(X_{\widetilde T}^{\nu,\eta,i},i)  \\
&\qquad\qquad + \int_T^{\widetilde T}e^{-\rho t}\zeta_t^{\theta, T}\Big(\psi(X_t^{\nu,\eta,i},i) - \theta_t^\prime\phi(X_t^{\nu,\eta,i},i)\Big)\diff t\;\Big|\;\filt_T\biggr] \\
&\geq \inf_{\theta\in\Theta[T,\widetilde T]}\expect\left[e^{-\rho \widetilde T}\zeta^{\theta,T}_{\widetilde T}g(X_{\widetilde T}^{\nu,\eta,i},i)\;\Big|\;\filt_T\right] \geq e^{-\rho T}g(X_{T}^{\nu,\eta,i},i).
\end{align*}
Hence, $\widehat Y_T^{\widetilde T,\nu,\eta,i,0}\geq g(X_{T}^{\nu,\eta,i},i)$ for all $i\in\mathcal I$.
On the other hand,
$(\widehat Y_t^{\widetilde T,\nu,\eta,i,0},\widehat Z_t^{\widetilde T,\nu,\eta,i,0})$ for all $i\in\mathcal I$ is the solution to the following BSDE on $[\nu,T]$,
\begin{align*}
-\diff y_t &= \Big(\psi(X_t^{\nu,\eta,i},i) - \rho y_t - \varsigma(X_t^{\nu,\eta,i},i,z_t)\Big)\diff t - z_t^\prime \diff W_t,\\
y_T &= \widehat Y_T^{\widetilde T,\nu,\eta,i,0},\quad(y,z) \in \SSd{\nu}\times\HH{\nu}{d}.
\end{align*}
By the comparison theorem, $\widehat Y_t^{\widetilde T,\nu,\eta,i,0} \geq \widehat Y_t^{T,\nu,\eta,i,0}$ for all $t\in[\nu,T]$ and $i\in\mathcal I$.
Similarly, by the non-negative reward condition, temporary terminal condition and Proposition 7.1 in \cite{el1997reflected}, we have
\[
\widehat Y_T^{\widetilde T,\nu,\eta,i,n} \geq g(X_{T}^{\nu,\eta,i},i),
\]
for all $n\geq 1$.
Hence, recursively applying the comparison theorem, we obtain that $\widehat Y_t^{\widetilde T,\nu,\eta,i,n} \geq \widehat Y_t^{T,\nu,\eta,i,n}$ for all $t\in[\nu,T],\;i\in\mathcal I$ and $n\geq 1$.
Taking a limit, we also have $\widehat Y_t^{\widetilde T,\nu,\eta,i} \geq \widehat Y_t^{T,\nu,\eta,i}$ for all $t\in[\nu,T]$ and $i\in\mathcal I$.

\textit{Step.2 $n$-step dominated.}
Since $T\rightarrow \widehat Y_t^{T,\nu,\eta,i,n}$ is increasing by Step.1 and since $n\rightarrow \widehat Y_t^{T,\nu,\eta,i,n}$ is also increasing,
we can exchange the orders of taking the limits such that
\[
\lim_{T\rightarrow\infty}\widehat Y_t^{T,\nu,\eta,i} = \lim_{T\rightarrow\infty}\lim_{n\rightarrow\infty}\widehat Y_t^{T,\nu,\eta,i,n}
 = \lim_{n\rightarrow\infty}\lim_{T\rightarrow\infty}\widehat Y_t^{T,\nu,\eta,i,n}
 = \lim_{n\rightarrow\infty}\widehat Y_t^{\infty,\nu,\eta,i,n},
\]
where
\[
\widehat Y_t^{\infty,\nu,\eta,i,n} = \lim_{T\rightarrow\infty}\widehat Y_t^{T,\nu,\eta,i,n},\quad n\geq 1.
\]
By Proposition 2.2 in \cite{el1997backward} and the comparison theorem, it holds that
\begin{align}\label{eq:n.step.veri.inf}
e^{-\rho \nu}\widehat Y_{\nu}^{T,\nu,\eta,i,0}
 &= \inf_{\theta\in\Theta[\nu,T]}\expect\biggl[\zeta^{\theta,\nu}_T e^{-\rho T}g(X_T^{\nu,\eta,i},i)  \\
 &\qquad\qquad + \int_\nu^Te^{-\rho t}\zeta^{\theta,\nu}_t\Big(\psi(X_t^{\nu,\eta,i},i) - \theta_t^\prime \phi(X_t^{\nu,\eta,i},i)\Big)\diff t\;\Big|\;\filt_\nu\biggr],\nonumber
\end{align}
for all $T\geq \nu$.
Now, we choose an arbitrary $\theta\in\Theta[\nu,\infty)$.
Then, by the equality \cref{eq:n.step.veri.inf} and the temporary terminal condition, we have
\begin{align*}
e^{-\rho \nu}\widehat Y_{\nu}^{T,\nu,\eta,i,0}
 &\leq \expect\left[\int_\nu^Te^{-\rho t}\zeta^{\theta,\nu}_t\Big(\psi(X_t^{\nu,\eta,i},i) - \theta_t^\prime \phi(X_t^{\nu,\eta,i},i)\Big)\diff t\;\Big|\;\filt_\nu\right],
\end{align*}
for all $T\geq \nu$.
By the Lebesgue dominated convergence theorem, we have
\begin{align*}
e^{-\rho \nu}\widehat Y_{\nu}^{\infty,\nu,\eta,i,0}
 &\leq \expect\left[\int_\nu^\infty e^{-\rho t}\zeta^{\theta,\nu}_t\Big(\psi(X_t^{\nu,\eta,i},i) - \theta_t^\prime \phi(X_t^{\nu,\eta,i},i)\Big)\diff t\;\Big|\;\filt_\nu\right].
\end{align*}
Since $\theta$ is arbitrary, we obtain that
\begin{align}\label{eq:zero.step.veri}
e^{-\rho \nu}\widehat Y_{\nu}^{\infty,\nu,\eta,i,0}
 &\leq \inf_{\theta\in\Theta[\nu,\infty)}\expect\left[\int_\nu^\infty e^{-\rho t}\zeta^{\theta,\nu}_t\Big(\psi(X_t^{\nu,\eta,i},i) - \theta_t^\prime \phi(X_t^{\nu,\eta,i},i)\Big)\diff t\;\Big|\;\filt_\nu\right].
\end{align}
Now, we assume that for some $n\geq 1$,
\begin{align*}
e^{-\rho\widetilde\tau}\widehat Y_{\widetilde\tau}^{\infty,\widetilde\tau,\widetilde\eta,j,n-1} 
&\leq \sup_{\alpha\in\controlinf{\widetilde\tau}{j,n-1}}\inf_{\theta\in\Theta[\nu,\infty)}\expect\left[\int_{\widetilde\tau}^\infty e^{-\rho t}\zeta^{\theta,\widetilde\tau}_t
\Big( \psi( X_t^{\widetilde\tau,\widetilde\eta,j,\alpha},\alpha_t) - \theta_t^\prime \phi(X_t^{\widetilde\tau,\widetilde\eta,j,\alpha},\alpha_t)\Big)\diff t\right. \\
&\qquad\qquad\qquad \left.- \sum_{k=1}^{n-1}e^{-\rho\tau_k}\zeta^{\theta,\widetilde\tau}_{\tau_k}c_{i_{k-1},i_k}(X_{\tau_k}^{\widetilde\tau,\widetilde\eta,j,\alpha})\;\Big|\;\filt_{\widetilde\tau}\right],
\end{align*}
where $\widetilde\tau\in\mathcal T_\nu$, $\widetilde\eta\in L_{\widetilde\tau}^{2q}(\real^d)$, and $\controlinf{\widetilde\tau}{j,n-1}$ is a set of the admissible controls on $[\widetilde\tau,\infty)$ changing the regimes at most $n-1$ times.
On the other hand, by Proposition 7.1 in \cite{el1997reflected} and the uniqueness of $\widehat Y$,
it holds that 
\begin{align}\label{eq:temp.Snell}
e^{-\rho\nu} \widehat Y_{\nu}^{T,\nu,\eta,i,n}
&= \sup_{\widetilde\tau\in\mathcal T_\tau^T}\inf_{\theta\in\Theta[\nu,T]}\expect\biggl[
e^{-\rho T}
\zeta^{\theta,\nu}_T
g(X_T^{\nu,\eta,i},i)
\1_{\{\widetilde\tau = T\}} \\
&\qquad\qquad 
+ e^{-\rho \widetilde\tau}
\zeta^{\theta,\nu}_{\widetilde\tau}
\max_{ j\in\mathcal I\setminus \{ i \} }\left\{\widehat Y_{\widetilde\tau}^{T,\widetilde\tau,X^{\nu,\eta,i}_{\widetilde\tau},j,n-1}
 - c_{i,j}(X_{\widetilde\tau}^{\nu,\eta,i})\right\}\1_{ \{ \widetilde\tau < T \}} \nonumber\\
&\qquad\qquad\quad +\int_\nu^{\widetilde\tau} e^{-\rho t}\zeta^{\theta,\nu}_t
\Big( \psi( X_t^{\nu,\eta,i},i) - \theta_t^\prime \phi(X_t^{\nu,\eta,i},i)\Big)\diff t \;\Big|\;\filt_\nu\biggr].\nonumber
\end{align}
Let $\tau^*$ be an optimal stopping time of the maximization problem in the right hand side of \cref{eq:temp.Snell}.
Then, by Proposition 2.3 in \cite{el1997reflected}, we have
\[
\tau^* = \inf\left\{t\in[\nu,T]\;|\;\widehat Y_{t}^{T,t,X^{\nu,\eta,i}_{t},i,n} = \max_{ j\in\mathcal I\setminus \{ i \} }\left\{\widehat Y_{t}^{T,t,X^{\nu,\eta,i}_{t},j,n-1} - c_{i,j}(X_{t}^{\nu,\eta,i})\right\}\right\}.
\]
Hence,
\begin{align*}
&e^{-\rho T}\zeta^{\theta,\nu}_T g(X_{T}^{\nu,\eta,i},i)\1_{\{\tau^* = T\}} + e^{-\rho \tau^*}\zeta^{\theta,\nu}_{\tau^*}\max_{ j\in\mathcal I\setminus \{ i \} }\left\{\widehat Y_{\tau^*}^{T,\tau^*,X^{\nu,\eta,i}_{\tau^*},j,n-1} - c_{i,j}(X_{\tau^*}^{\nu,\eta,i})\right\}\1_{ \{ \tau^* < T \}} \\
&= e^{-\rho\tau^*}\zeta^{\theta,\nu}_{\tau^*}\Big(\widehat Y_{\tau^*}^{T,\tau^*,X^{\nu,\eta,i}_{\tau^*},j^*,n-1} - c_{i,j^*}(X_{\tau^*}^{\nu,\eta,i})\1_{\{\tau^*<T\}}\Big),
\end{align*}
where $j^*$ satisfies
\[
\widehat Y_{\tau^*}^{T,\tau^*,X^{\nu,\eta,i}_{\tau^*},i,n} = \widehat Y_{\tau^*}^{T,\tau^*,X^{\nu,\eta,i}_{\tau^*},j^*,n-1} - c_{i,j^*}(X_{\tau^*}^{\nu,\eta,i}),
\]
if $\tau^* < T$, and $j^*=i$ otherwise.
By the monotonicity of $\widehat Y$, we have
\[
\widehat Y_{\tau^*}^{T,\tau^*,X^{\nu,\eta,i}_{\tau^*},j^*,n-1} \leq \widehat Y_{\tau^*}^{\infty,\tau^*,X^{\nu,\eta,i}_{\tau^*},j^*,n-1}.
\]
Hence, we obtain
\begin{align*}
e^{-\rho\nu} \widehat Y_{\nu}^{T,\nu,\eta,i,n}
&\leq \inf_{\theta\in\Theta[0,T]}\expect\left[\int_\nu^{\tau^*} e^{-\rho t}\zeta^{\theta,\nu}_t\Big( \psi( X_t^{\nu,\eta,i},i) - \theta_t^\prime \phi(X_t^{\nu,\eta,i},i)\Big)\diff t\right.\\
&\qquad\qquad + \left. e^{-\rho\tau^*}\zeta^{\theta,\nu}_{\tau^*}\widehat Y_{\tau^*}^{\infty,\tau^*,X^{\nu,\eta,i}_{\tau^*},j^*,n-1} - e^{-\rho\tau^*}\zeta^{\theta,\nu}_{\tau^*}c_{i,j^*}(X_{\tau^*}^{\nu,\eta,i})\1_{\{\tau^*<T\}}\;\Big|\;\filt_\nu\right] \\
&\leq \inf_{\theta\in\Theta[\nu,\tau^*)}\expect\left[\int_\nu^{\tau^*} e^{-\rho t}\zeta^{\theta,\nu}_t\Big( \psi( X_t^{\nu,\eta,i},i) - \theta_t^\prime \phi(X_t^{\nu,\eta,i},i)\Big)\diff t\right.\\
&\qquad\qquad - e^{-\rho\tau^*}\zeta^{\theta,\nu}_{\tau^*}c_{i,j^*}(X_{\tau^*}^{\nu,\eta,i})\1_{\{\tau^*<T\}}\\
&\qquad\qquad + \zeta^{\theta,\nu}_{\tau^*}\sup_{\alpha\in\controlinf{\tau^*}{j^*,n-1}}\inf_{\theta\in\Theta[\tau^*,\infty)}\expect\left[\int_{\tau^*}^\infty e^{-\rho t}\zeta^{\theta,\tau^*}_t
\Big( \psi( X_t^{\tau^*,X_{\tau^*}^{\nu,\eta,i},j^*,\alpha},\alpha_t) \right. \\
&\qquad\qquad\qquad\qquad\qquad - \theta_t^\prime \phi(X_t^{\tau^*,X_{\tau^*}^{\nu,\eta,i},j^*,j,\alpha},\alpha_t)\Big)\diff t \\
&\qquad\qquad\qquad\qquad\qquad \left.\left.- \sum_{k=1}^{n-1}e^{-\rho\tau_k}\zeta^{\theta,\tau^*}_{\tau_k}c_{i_{k-1},i_k}(X_{\tau_k}^{\tau^*,X_{\tau^*}^{\nu,\eta,i},j^*,\alpha})\;\Big|\;\filt_{\tau^*}\right]\;\Big|\;\filt_\nu\right] \\
&\leq \sup_{\alpha\in\controlinf{\nu}{i,n}}\inf_{\theta\in\Theta[\nu,\infty)}\expect\left[\int_{\nu}^\infty e^{-\rho t}\zeta^{\theta,\nu}_t
\Big( \psi( X_t^{\nu,\eta,i,\alpha},\alpha_t) - \theta_t^\prime \phi(X_t^{\nu,\eta,i,\alpha},\alpha_t)\Big)\diff t\right. \\
&\quad \left.- \sum_{k=1}^{n}e^{-\rho\tau_k}\zeta^{\theta,\nu}_{\tau_k}c_{i_{k-1},i_k}(X_{\tau_k}^{\nu,\eta,i,\alpha})\;\Big|\;\filt_\nu\right],
\end{align*}
where we have used the uniqueness of the strong solution of $X$.
Taking a limit, we have
\begin{align}\label{eq:n.step.veri}
e^{-\rho\nu} \widehat Y_{\nu}^{\infty,\nu,\eta,i,n} 
&\leq \sup_{\alpha\in\controlinf{\nu}{i,n}}\inf_{\theta\in\Theta[\nu,\infty)}\expect\left[\int_{\nu}^\infty e^{-\rho t}\zeta^{\theta,\nu}_t
\Big( \psi( X_t^{\nu,\eta,i,\alpha},\alpha_t) - \theta_t^\prime \phi(X_t^{\nu,\eta,i,\alpha},\alpha_t)\Big)\diff t\right. \\
&\quad \left.- \sum_{k=1}^{n}e^{-\rho\tau_k}\zeta^{\theta,\nu}_{\tau_k}c_{i_{k-1},i_k}(X_{\tau_k}^{\nu,\eta,i,\alpha})\;\Big|\;\filt_\nu\right].\nonumber
\end{align}
By the inequalities \cref{eq:zero.step.veri,eq:n.step.veri},
we can prove that the inequality \cref{eq:n.step.veri} holds for all $n\geq 1$ using the induction method.
Since $\controlinf{t}{i,n}\subseteq\controlinf{t}{i}$ for all $n\geq 1$,
the inequality \cref{eq:n.step.veri} leads to
\begin{equation}\label{eq:dominated.infinity}
\lim_{T\rightarrow\infty}\widehat Y_t^{T,t,x,i} = \lim_{n\rightarrow\infty}\widehat Y_t^{\infty,t,x,i,n} \leq v^\infty(x,i),
\end{equation}
for all $(t,x,i)\in[0,\infty)\times\real^d\times\mathcal I$.
By the monotonicity of $\widehat Y$ and the inequality \cref{eq:dominated.infinity}, we have
\begin{equation}\label{eq:value.dominated.infinite}
\widehat Y_t^{T,t,x,i} \leq v^\infty(x,i),
\end{equation}
for all $(t,T,x,i)\in[0,\infty)^2\times\real^d\times\mathcal I$.

\textit{Step.3 Convergence.}
To prove the opposite inequality of \cref{eq:dominated.infinity}, we use the $\epsilon$-optimal argument such as Corollary 2.1 in \cite{bayraktar2010one}.
Fix any $(t,x,i)\in[0,\infty)\times\real^d\times\mathcal I$.
Let $J^T(t,x,i,\alpha)$ be an objective function in the finite horizon $[0,T]$.
Then, by the time-homogeneity, we have
\[
Y_t^{T,t,x,i} = Y_0^{T-t,0,x,i} \geq J^{T-t}(0,x,i,\alpha),
\]
for all $0\leq t\leq T,\;x\in\real^d,\;i\in\mathcal I$ and $\alpha\in\mathbb A_i[0,T-t]$.
Now, we fix an arbitrary $t\geq 0$ and $x\in\real^d$.
For any $\epsilon>0$, we choose a control $\alpha^\epsilon=(\tau_k^\epsilon,i_k^\epsilon)_{k\geq 0}\in\controlinf{0}{i}$ such that
\[
J(x,i,\alpha^\epsilon) \geq v^\infty(x,i) - \epsilon.
\]
For all $T\geq t$, define
\[
\alpha^{\epsilon,T-t}_s := \alpha_{s}^\epsilon,\quad s\in[0,T-t].
\]
Then, $\alpha^{\epsilon,T-t}\in\mathbb A_i[0,T-t]$ for all $T\geq t$.
For all $T\geq t$, let
\begin{multline*}
\theta^{T-t} := \arg\inf_{\theta\in\Theta[0,T-t]}\expect\biggl[\int_0^{T-t} e^{-\rho s}\zeta^{\theta,0}_s
\Big( \psi( X_s^{0,x,i,\alpha^{\epsilon}},\alpha_s^{\epsilon}) - \theta_s^\prime \phi(X_s^{0,x,i,\alpha^{\epsilon}},\alpha_s^{\epsilon})\Big)\diff s\\
- \sum_{k=1}^{\infty}e^{-\rho\tau_k^{\epsilon}}\zeta^{\theta,0}_{\tau_k^{\epsilon}}c_{i_{k-1}^{\epsilon},i_k^{\epsilon}}(X_{\tau_k^{\epsilon}}^{0,x,i,\alpha^{\epsilon}})\1_{\{\tau_k< T-t\}} + e^{-\rho (T-t)}\zeta_{T-t}^{\theta,0}g(X_{T-t}^{0,x,i,\alpha^{\epsilon}},\alpha_{T-t}^{\epsilon})\biggr].
\end{multline*}
Also let
\[
\theta^{\infty,T-t}_s := \left\{\begin{array}{ll}
 \theta_s^{T-t}, &\quad \mbox{if }s< T-t, \\
 0, &\quad\mbox{otherwise,}
\end{array}
\right.
\]
for all $T\geq t$.
It is easy to check $\theta^{\infty,T-t}\in\Theta[0,\infty)$.
Then, we have
\begin{align*}
&J(x,i,\alpha^\epsilon) \\
&\leq \expect\left[\int_0^\infty e^{-\rho s}\zeta^{\theta^{\infty,T-t},0}_s
\Big( \psi( X_s^{0,x,i,\alpha^{\epsilon}},\alpha_s^{\epsilon}) - (\theta_s^{\infty,T-t})^\prime \phi(X_s^{0,x,i,\alpha^{\epsilon}},\alpha_s^{\epsilon})\Big)\diff s\right. \nonumber\\
&\quad\quad \left.- \sum_{k=1}^{\infty}e^{-\rho\tau_k^{\epsilon}}\zeta^{\theta^{\infty,T-t},0}_{\tau_k^{\epsilon}}c_{i_{k-1}^{\epsilon},i_k^{\epsilon}}(X_{\tau_k^{\epsilon}}^{0,x,i,\alpha^{\epsilon}})\right]\\
&\quad = J^{T-t}(0,x,i,\alpha^{\epsilon,T-t}) + \expect\left[\zeta^{\theta^{\infty,T-t},0}_{T-t}\int_{T-t}^\infty e^{-\rho s}\psi( X_s^{0,x,i,\alpha^{\epsilon}},\alpha_s^{\epsilon})\diff s\right.\\
&\quad\quad \left.- \zeta^{\theta^{\infty,T-t},0}_{T-t}\sum_{k=1}^{\infty}e^{-\rho\tau_k^{\epsilon}}c_{i_{k-1}^{\epsilon},i_k^{\epsilon}}(X_{\tau_k^{\epsilon}}^{0,x,i,\alpha^{\epsilon}})\1_{\{\tau_k^\epsilon>T-t\}} - e^{-\rho (T-t)}\zeta^{\theta^{\infty,T-t},0}_{T-t} g(X_{T-t}^{0,x,i,\alpha^{\epsilon}},\alpha^{\epsilon}_{T-t})\right],
\end{align*}
for all $T\geq t$.
By the polynomial growth condition and the strong triangular condition, we have
\begin{align*}
&\expect\left[\int_{T-t}^\infty e^{-\rho s}\psi( X_s^{0,x,i,\alpha^{\epsilon}},\alpha_s^{\epsilon})\diff s - \sum_{k=1}^{\infty}e^{-\rho\tau_k^{\epsilon}}c_{i_{k-1}^{\epsilon},i_k^{\epsilon}}(X_{\tau_k^{\epsilon}}^{0,x,i,\alpha^{\epsilon}})\1_{\{\tau_k^\epsilon>T-t\}}\;\Big|\;\filt_{T-t}\right] \\
&\leq C_1(1 + \|X_{T-t}^{0,x,i,\alpha^{\epsilon}}\|^q) e^{-\rho (T-t)},
\end{align*}
for all $T\geq t$, where $C_1$ is a positive constant not depending on $T,t$ and $x$.
Thus, by the inequality \cref{eq:large.discount.01}, we have
\begin{align*}
&\expect\left[\zeta^{\theta^{\infty,T-t},0}_{T-t}\int_{T-t}^\infty e^{-\rho s}\psi( X_s^{0,x,i,\alpha^{\epsilon}},\alpha_s^{\epsilon})\diff s\right.\\
&\quad\quad \left.- \zeta^{\theta^{\infty,T-t},0}_{T-t}\sum_{k=1}^{\infty}e^{-\rho\tau_k^{\epsilon}}c_{i_{k-1}^{\epsilon},i_k^{\epsilon}}(X_{\tau_k^{\epsilon}}^{0,x,i,\alpha^{\epsilon}})\1_{\{\tau_k^\epsilon>T-t\}} - e^{-\rho (T-t)}\zeta^{\theta^{\infty,T-t},0}_{T-t} g(X_{T-t}^{0,x,i,\alpha^{\epsilon}},\alpha^{\epsilon}_{T-t})\right] \\
&\leq C_2\expect\left[\zeta^{\theta^{\infty,T-t},0}_{T-t}\left(1 + \|X_{T-t}^{0,x,i,\alpha^{\epsilon}}\|^q\right)e^{-\rho (T-t)}\right] \\
&\leq C_3(1 + \|x\|^q)e^{-c_\infty (T-t)},
\end{align*}
for all $T\geq t$, where $C_2,\;C_3$ and $c_\infty$ are positive constants not depending on $T,t$ and $x$.
This implies that for sufficiently large $\widetilde T$, it holds that
\begin{align}\label{eq:local.uniform.conv}
J(x,i,\alpha^\epsilon) \leq J^{T-t}(0,x,i,\alpha^{\epsilon,T-t}) + C_3(1 + \|x\|^q)e^{-c_\infty (T-t)} \leq J^{T-t}(0,x,i,\alpha^{\epsilon,T-t}) + \epsilon,
\end{align}
for all $T\geq \widetilde T$.
Hence, we have
\[
\liminf_{T\rightarrow\infty}Y_t^{T,t,x,i} \geq \liminf_{T\rightarrow\infty}J^{T-t}(0,x,i,\alpha^{\epsilon,T-t}) \geq J(x,i,\alpha^\epsilon) - \epsilon \geq v^\infty(x,i) - 2\epsilon.
\]
Since $\epsilon$ is arbitrarily chosen,
we obtain
\begin{equation}\label{eq:final.convergence}
\liminf_{T\rightarrow\infty}\widehat Y_t^{T,t,x,i} \geq v^\infty (x,i),
\end{equation}
for all $(t,x,i)\in[0,\infty)\times\real^d\times\mathcal I$.
Thus, we obtain the desired equality \cref{eq:convergence.verification}.
For all $i\in\mathcal I$, the convergence of \cref{eq:final.convergence} is locally uniform with respect to $t$ and $x$ by the inequalities \cref{eq:value.dominated.infinite,eq:local.uniform.conv}.
Furthermore, $Y_t^{T,t,x,i}$ is continuous in $t$ and $x$ for all $T\geq 0$ and $i\in\mathcal I$.
Therefore, $v^\infty(x,i)$ is continuous in $x$ for all $i\in\mathcal I$.
\end{proof}

\section{A Viscosity Solution in the Infinite Horizon}\label{subsec:viscosity.infinite}

\begin{proof}[Proof of \cref{Prop:uniform.viscosity}]
Let $v^T(t,x,i) = \widehat Y^{T,t,x,i}_t$ for $0\leq t\leq T,\;x\in\real^d$ and $i\in\mathcal I$.
By the definition, we have
\[
v^T(t,x,i) \geq \max_{j\in\mathcal I\setminus\{i\}}\{v^T(t,x,j) - c_{i,j}(x)\},
\]
for all $0\leq t\leq T,\;x\in\real^d$ and $i\in\mathcal I$.
Hence, taking a limit, we have
\[
v^\infty(x,i) \geq \max_{j\in\mathcal I\setminus\{i\}}\{v^\infty(x,j) - c_{i,j}(x)\},
\]
for all $(x,i)\in\real^d\times\mathcal I$.

Furthermore, we can show the followings.
\begin{lemma}\label{Lemma:finite.v.property}
For all $T>0,\;x\in\real^d$ and $i\in\mathcal I$, $v^T(\cdot,x,i)$ is non-increasing.
Furthermore,
there exists a positive constant $C$ such that
\begin{equation}\label{eq:T.increment.condition}
|v^T(t,x,i) - v^T(s,x,i)| \leq C(1 + \|x\|^q),
\end{equation}
for all $0\leq s\leq t\leq T,\;x\in\real^d$ and $i\in\mathcal I$.
\end{lemma}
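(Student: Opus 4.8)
The plan is to reduce both claims to results already in the paper—time-homogeneity, the horizon-monotonicity of \cref{Prop:convergence.verification}, and the uniform polynomial bound behind \cref{Prop:value.func.poly.growth}—rather than to analyze the RBSDE \cref{eq:convergence.RBSDE} directly. First I would record the time-shift identity: since $b,\sigma,\psi,\phi,\varsigma,c,\rho$ do not depend on $t$ (\cref{Assump:infinite}.1) and $X$ is a time-homogeneous diffusion, shifting \cref{eq:convergence.RBSDE} from $[t,T]$ onto $[0,T-t]$ shows that the initial value depends only on the remaining horizon, so that
\[
v^T(t,x,i)=\widehat Y_t^{T,t,x,i}=\widehat Y_0^{T-t,0,x,i},
\]
which is exactly the identity used in Step 3 of the proof of \cref{Prop:convergence.verification}. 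For the monotonicity claim, fix $0\le s\le t\le T$; then $T-t\le T-s$, and applying the horizon-monotonicity of \cref{Prop:convergence.verification} with starting time $\nu=0$ gives $\widehat Y_0^{T-t,0,x,i}\le \widehat Y_0^{T-s,0,x,i}$, i.e.\ $v^T(t,x,i)\le v^T(s,x,i)$. Hence $t\mapsto v^T(t,x,i)$ is non-increasing.

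For the increment bound I would exploit the monotonicity just proved to reduce the estimate to the two endpoints $t=0$ and $t=T$. Since $v^T(\cdot,x,i)$ is non-increasing, for $0\le s\le t\le T$,
\[
0\le v^T(s,x,i)-v^T(t,x,i)\le v^T(0,x,i)-v^T(T,x,i).
\]
At the right endpoint the horizon is degenerate, so $v^T(T,x,i)=\widehat Y_T^{T,T,x,i}=g(x,i)$, and the polynomial growth condition gives $|g(x,i)|\le C_f(1+\|x\|^q)$. It then remains to bound $v^T(0,x,i)=\widehat Y_0^{T,0,x,i}$ above by $C(1+\|x\|^q)$ with $C$ independent of $T$; combining the two endpoint estimates yields $v^T(0,x,i)-v^T(T,x,i)\le C(1+\|x\|^q)$, which is the desired inequality \cref{eq:T.increment.condition}.

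The uniform-in-$T$ upper bound on $v^T(0,x,i)$ is the only genuine computation, and it is precisely the finite-horizon restriction of the argument behind \cref{Prop:value.func.poly.growth}: bounding the infimum over priors by its value at $\theta=0$, discarding the non-positive terminal term ($g\le 0$), estimating the running reward through the polynomial growth of $\psi$ together with the sufficiently-large-discount estimate \cref{eq:large.discount.01}, while the possibly negative total switching cost is controlled by the strong triangular condition—or by \cref{Remark:cost.upper.bounded.infinite}—via \cref{eq:large.discount.02}. The main obstacle is exactly this uniformity in $T$: one must ensure that neither the reward integral nor the total switching cost grows with the horizon, and this is guaranteed by the exponential decay built into \cref{Assump:infinite}.2. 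Everything else—the time-shift identity and the appeal to \cref{Prop:convergence.verification}—is routine.
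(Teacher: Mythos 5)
Your proposal is correct and follows essentially the same route as the paper's proof: monotonicity is obtained by combining the time-homogeneity shift of $\widehat Y$ with the horizon-monotonicity from \cref{Prop:convergence.verification}, and the increment bound is reduced by monotonicity to $v^T(0,x,i)-v^T(T,x,i)$, where $v^T(T,x,i)=g(x,i)$ and the remaining term is estimated by taking $\theta=0$, invoking the polynomial growth conditions, and controlling the total switching cost via the strong triangular condition. Your explicit remark that the constant must be uniform in $T$ (via the sufficiently-large-discount estimates of \cref{Assump:infinite}.2) is a point the paper treats only implicitly, but it is the same argument.
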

We will show \cref{Lemma:finite.v.property} after the proof of \cref{Prop:uniform.viscosity}.

Let $C^2(\real^d)$ be a set of twice continuously differentiable functions from $\real^d$ onto $\real$.
Let $B(x) = \{y\in\real^d\;|\;\|y-x\|\leq 1\}$ be a closed unit ball on $\real^d$ centered on $x$.
Now, let us show the viscosity solution property of $v^\infty$.

\textit{Step.1 Viscosity subsolution.}
We arbitrarily choose $\varphi\in C^2(\real^d)$ and $\overline x\in\real^d$ such that $\max\{v^\infty(\cdot,i)-\varphi\} = v^\infty(\overline x,i) - \varphi(\overline x) = 0$.
Let
\[
\widehat \varphi(x) := \varphi(x) + \|x - \overline x\|^4.
\]
Let $(t_k,x_k)\in [0,k]\times B(\overline x)$ for all $k=1,2,3,\dots$ such that
\[
\max\{v^k(\cdot,\cdot,i) - \widehat \varphi\} = v^k(t_k,x_k,i) - \widehat \varphi(x_k).
\]
Since $v^k(\cdot,x,i)$ is non-increasing for all $x\in\real^d$ by \cref{Lemma:finite.v.property},
we have $t_k = 0$ for all $k$.
We choose a subsequence of $(x_k)_{k\geq 1}$ which converges to some $x_0\in\real^d$.
For convenience, we also denote this subsequence by $(x_k)_{k\geq 1}$.
Then, since $(x_k)_{k\geq 1}\subseteq B(\overline x)$, the Dini theorem leads to
\[
\lim_{k\rightarrow\infty}v^k(0,x_k,i) = v^\infty(x_0,i).
\]
Thus, we have
\begin{align*}
0&\leq v^\infty(\overline x,i) - \varphi(\overline x) - (v^\infty(x_0,i) - \varphi(x_0)) \\
&\leq \lim_{k\rightarrow\infty}\Big(v^k(0,\overline x,i) - \widehat\varphi(\overline x) - (v^k(0,x_k,i) - \widehat\varphi(x_k)) - \|x_k - \overline x\|^4\Big) \\
&\leq \lim_{k\rightarrow\infty}\Big( - \|x_k - \overline x\|^4\Big) =  - \|x_0 - \overline x\|^4.
\end{align*}
Hence, $x_0 = \overline x$.

Now, by \cref{Prop:visocity.property}, for all $k\geq 1$, we have
\begin{align*}
0&\geq - \frac{\p\widehat\varphi(x_k)}{\p t} - \ig^i\widehat\varphi(x_k) - \psi(x_k,i) + \rho v^k(0,x_k,i) 
+ \varsigma(x_k,i,\sigma^\prime(x_k,i)\nabla \widehat\varphi(x_k)) \\
&= - \ig^i\widehat\varphi(x_k) - \psi(x_k,i) + \rho v^k(0,x_k,i) + \varsigma(x_k,i,\sigma^\prime(x_k,i)\nabla \widehat\varphi(x_k)).
\end{align*}
Hence, by the Dini theorem, taking a limit of the above inequality, we have
\[
0 \geq - \ig^i\varphi(\overline x) - \psi(\overline x,i) + \rho v^\infty(\overline x,i) + \varsigma(\overline x,i,\sigma^\prime(\overline x,i)\nabla \varphi(\overline x)).
\]
This implies that $v^\infty$ is a viscosity subsolution of the PDE \cref{eq:PDE.infinite}.

\textit{Step.2 Viscosity supersolution.}
We arbitrarily choose $\varphi\in C^2(\real^d)$ and $\underline x\in\real^d$ such that $\min\{v^\infty(\cdot,i)-\varphi\} = v^\infty(\underline x,i) - \varphi(\underline x) = 0$.
For $m=1,2,3,\dots$, let
\[
\varphi_m(t,x) := \varphi(x) - \|x - \underline x\|^4 - \frac{t}{m}
\]
Now, fix an arbitrary $m$ temporarily.
Let $(t_k,x_k)\in [0,k]\times B(\underline x)$ for all $k=1,2,3,\dots$ such that
\[
\min\{v^k(\cdot,\cdot,i) - \varphi_m\} = v^k(t_k,x_k,i) - \varphi_m(t_k,x_k).
\]
For any $k\geq 1,\;t\in[0,k]$ and $x\in B(\underline x)$, by \cref{Lemma:finite.v.property}, we have
\begin{align*}
v^k(0,x,i) - \varphi_m(0,x) - (v^k(t,x,i) - \varphi_m(t,x)) 
&\leq -\frac{t}{m} + C\Big(1 + \|x\|^q\Big) \\
&\leq -\frac{t}{m} + C\Big(1 + \max_{y\in B(\underline x)}\|y\|^q\Big).
\end{align*}
We now suppose that
\begin{equation}\label{eq:t.compact}
t > m C\Big(1 + \max_{y\in B(\underline x)}\|y\|^q\Big).
\end{equation}
Then,
\begin{align*}
v^k(0,x,i) - \varphi_m(0,x) - (v^k(t,x,i) - \varphi_m(t,x)) 
&\leq -\frac{t}{m} + C\Big(1 + \max_{y\in B(\underline x)}\|y\|^q\Big) \\
&< 0,
\end{align*}
for all $t$ satisfying the inequality \cref{eq:t.compact}.
This implies that for sufficient large $\widetilde k$, all $t_k$ with $k\geq \widetilde k$ are in the following compact subset.
\[
\left[0,m C\Big(1 + \max_{y\in B(\underline x)}\|y\|^q\Big)\right].
\]
Now, we choose a subsequence of $(t_k,x_k)_{k\geq \widetilde k}$ converging some $(t_0,x_0)$.
We also write this subsequence as $(t_k,x_k)_{k\geq 1}$ for convenience.
Then, by the Dini theorem, we have
\[
\lim_{k\rightarrow\infty}v^k(t_k,x_k,i) = v^\infty(x_0,i).
\]
Hence, we have
\begin{align*}
0&\leq v^\infty(x_0,i) - \varphi(x_0) - (v^\infty(\underline x,i) - \varphi(\underline x)) \\
&\leq \lim_{k\rightarrow\infty}\Big(v^k(t_k,x_k,i) - \varphi_m(t_k,x_k) - (v^k(t_k,\underline x,i) - \varphi_m(t_k,\underline x)) - \|x_k - \underline x\|^4\Big) \\
&\leq \lim_{k\rightarrow\infty}\Big( - \|x_k - \underline x\|^4\Big) =  - \|x_0 - \underline x\|^4,
\end{align*}
so $x_0 = \underline x$.

Now, by \cref{Prop:visocity.property}, we have
\begin{align*}
0&\leq - \frac{\p\varphi_m(t_k,x_k)}{\p t} - \ig^i\varphi_m(t_k,x_k) - \psi(x_k,i) + \rho v^k(t_k,x_k,i) 
  + \varsigma(x_k,i,\sigma^\prime(x_k,i)\nabla \varphi_m(t_k,x_k)) \\
 &\quad = \frac{1}{m} - \ig^i\varphi_m(t_k,x_k) - \psi(x_k,i) + \rho v^k(t_k,x_k,i) 
 + \varsigma(x_k,i,\sigma^\prime(x_k,i)\nabla \varphi_m(t_k,x_k)),
\end{align*}
for all $k\geq 1$. Thus, by the Dini theorem, taking a limit with respect to $k$, we have
\begin{align*}
0&\leq\frac{1}{m} - \ig^i\varphi(\underline x) - \psi(\underline x,i) + \rho v^\infty(\underline x,i) + \varsigma(\underline x,i,\sigma^\prime(\underline x,i)\nabla \varphi(\underline x)).
\end{align*}
Since $m$ is arbitrarily chosen, tending $m$ to infinity, we have
\[
0\leq - \ig^i\varphi(\underline x) - \psi(\underline x,i) + \rho v^\infty(\underline x,i) + \varsigma(\underline x,i,\sigma^\prime(\underline x,i)\nabla \varphi(\underline x)).
\]
This implies that $v^\infty$ is a viscosity supersolution of the PDE \cref{eq:PDE.infinite}.
\end{proof}

\begin{proof}[Proof of \cref{Lemma:finite.v.property}]
For all $0\leq h\leq t\leq T,\;x\in\real^d$ and $i\in\mathcal I$, we have
\begin{align*}
v^T(t,x,i) = \widehat Y_t^{T,t,x,i} 
&= \widehat Y_{t-h}^{T-h,t-h,x,i} \quad\mbox{(time-homogeneous Markov property)} \\
&\leq \widehat Y_{t-h}^{T,t-h,x,i} \quad\mbox{(monotonicity of $\widehat Y$)}\\
&\quad = v^T(t-h,x,i).
\end{align*}
Hence, $v^T(\cdot,x,i)$ is non-increasing for all $T>0,\;x\in\real^d$ and $i\in\mathcal I$.

Now, we prove the inequality \cref{eq:T.increment.condition}.
Since $t\rightarrow v^T(t,x,i)$ is non-increasing for all $T,x$ and $i$,
it suffices to derive an upper boundary of $v^T(0,x,i) - v^T(T,x,i)$.
Then, by the polynomial growth conditions for $\phi,\psi,g,$ and $c$ and \cref{Prop:q.th.integrable,Prop:cost.upper.bounded},
it is easy to show that
\begin{align*}
&v^T(0,x,i) - v^T(T,x,i) = \widehat Y_{0}^{T,0,x,i} - g(x,i) \\
&\leq \sup_{\alpha\in\control{0}{i}}\inf_{\theta\in\Theta[0,T]}\expect\left[ \int_0^{T}\zeta_t^{\theta,0}e^{-\rho t}\Big(\psi(X_t^{0,x,i,\alpha},\alpha_t) - \theta_t^\prime \phi(X_t^{0,x,i,\alpha},\alpha_t)\Big)\diff t\right. \\
&\qquad\qquad \left. + \zeta_{T}^{\theta,0} e^{-\rho T}g(X_T^{0,x,i,\alpha},i) - \sum_{0\leq \tau_k\leq T}\zeta_{\tau_k}^{\theta,0}e^{-\rho \tau_k}c_{i_{k-1},i_k}(X_{\tau_k}^{0,x,i,\alpha})\right] -  g(x,i) \\
&\leq\sup_{\alpha\in\control{0}{i}}\expect\Big[ e^{-\rho T}g(X_T^{0,x,i,\alpha},i) -  g(x,i)\\
&\qquad\qquad + \int_0^{T}e^{-\rho t}\psi(X_t^{0,x,i,\alpha},\alpha_t)\diff t\left. - \sum_{0\leq \tau_k\leq T}e^{-\rho \tau_k}c_{i_{k-1},i_k}(X_{\tau_k}^{0,x,i,\alpha})\right] \\
&\leq C(1 + \|x\|^q),
\end{align*}
where $C$ is a positive constant not depending on $T$ and $x$.
\end{proof}

\bibliographystyle{siamplain}
\bibliography{master_citation}

\end{document}